\newenvironment{claimproof}{\noindent{\textbf{Proof of the claim:}}}{~\hfill $\blacksquare$ \medskip}
\definecolor{lightcyan}{rgb}{0.88,1,1}
\definecolor{antiquewhite}{rgb}{0.98, 0.92, 0.84}
\newtheorem{property}{Property}
\newtheorem{corollary}{Corollary}
\newtheorem{claim}{Claim}
\DeclareMathOperator\indeg{indeg}
\DeclareMathOperator\outdeg{outdeg}
\newcounter{casecounter}
\newcounter{subcasecounter}
\newcounter{subsubcasecounter}
\newcommand{\ccase}[2][]{%
	\~counter{casecounter}%
	\setcounter{subcasecounter}{0}%
	\protected@write \@auxout {}{\string \newlabel {#2}{{#1\thecasecounter}{\thepage}{#1\thecasecounter}{#2}{}} }%
	\hypertarget{#2}{\noindent\textbf{Case #1\thecasecounter.}}
}
\newcommand{\subcase}[2][]{%
	\`counter{subcasecounter}%
	\setcounter{subsubcasecounter}{0}%
	\protected@write \@auxout {}{\string \newlabel {#2}{{#1\thecasecounter.\thesubcasecounter}{\thepage}{#1\thecasecounter.\thesubcasecounter}{#2}{}} }%
	\hypertarget{#2}{\noindent\textbf{Case #1\thecasecounter.\thesubcasecounter.}}
}
\newcommand{\subsubcase}[2][]{%
	\stepcounter{subsubcasecounter}%
	\protected@write \@auxout {}{\string \newlabel {#2}{{#1\thecasecounter.\thesubcasecounter.\thesubsubcasecounter}{\thepage}{#1\thecasecounter.\thesubcasecounter.\thesubsubcasecounter}{#2}{}} }%
	\hypertarget{#2}{\noindent\textbf{Case #1\thecasecounter.\thesubcasecounter.\thesubsubcasecounter.}}
}
\newcommand{\skel}{\textrm{skel}}
\newcommand{\pisp}{{independent-parallel SP-graph}}
\newcommand{\pisps}{independent-parallel SP-graphs}
\newcommand{\md}{\mathrm{mid}}
\definecolor{lightgreen}{rgb}{0.6, 0.98, 0.6}
\begin{document}

    \HeadingAuthor{W. Didimo, M. Kaufmann, G. Liotta, G. Ortali}
    \HeadingTitle{Rectilinear Planarity of Partial 2-Trees}
	\title{Rectilinear Planarity of Partial 2-Trees}
	
	\author[first]{Walter Didimo}{walter.didimo@unipg.it}
	\author[second]{Michael Kaufmann}{michael.kaufmann@uni-tuebingen.de}
	\author[first]{Giuseppe Liotta}{giuseppe.liotta@unipg.it}
	\author[first]{Giacomo Ortali}{giacomo.ortali@unipg.it}
	
	\date{}
	
	\affiliation[first]{Universit\`a degli Studi di Perugia, Italy}
	\affiliation[second]{University of T\"ubingen, Germany}

	\maketitle
	%
	%
\begin{abstract}
%
A graph is rectilinear planar if it admits a planar orthogonal drawing without bends. While testing rectilinear planarity is NP-hard in general \textcolor{black}{(Garg and Tamassia, 2001)}, it is a long-standing open problem to establish a tight upper bound on its complexity for partial 2-trees, i.e., graphs whose biconnected components are series-parallel. We describe a new $O(n^2)$-time algorithm to test rectilinear planarity of partial 2-trees, which improves over the current best bound of $O(n^3 \log n)$ \textcolor{black}{(Di Giacomo et al., 2022)}. Moreover, for partial 2-trees where no two parallel-components in a \textcolor{black}{biconnected component} share a pole, we are able to  achieve optimal $O(n)$-time complexity. Our algorithms are based on an extensive study and a deeper understanding of the notion of orthogonal spirality, introduced several years ago \textcolor{black}{(Di Battista et al, 1998)} to describe how much an orthogonal drawing of a subgraph is rolled-up in an orthogonal drawing of the graph.

\end{abstract}

\section{Introduction}\label{se:intro}
In an \emph{orthogonal drawing} of a graph each vertex is a distinct point of the plane and each edge is a chain of horizontal and vertical segments. Rectilinear planarity testing asks whether a planar 4-graph (i.e., with vertex-degree at most four) admits a planar orthogonal drawing without edge bends. It is a classical subject of study in graph drawing, partly for its theoretical beauty and partly because it is at the heart of the algorithms that compute bend-minimum orthogonal drawings, which find applications in several domains (see, e.g.,~\cite{DBLP:books/ph/BattistaETT99,dl-gvdm-07,DBLP:reference/crc/DuncanG13,DBLP:books/sp/Juenger04,DBLP:conf/dagstuhl/1999dg,DBLP:books/ws/NishizekiR04}). Rectilinear planarity testing is NP-hard~\cite{DBLP:journals/siamcomp/GargT01}, it belongs to the XP-class when parameterized by treewidth~\cite{DBLP:journals/jcss/GiacomoLM22}, and it is FPT when parameterized by the number of degree-4 vertices~\cite{DBLP:conf/isaac/DidimoL98}.  Polynomial-time solutions exist for restricted versions of the problem. Namely, if the algorithm must preserve a given planar embedding, rectilinear planarity testing  can be solved in subquadratic time for general graphs~\cite{DBLP:journals/jgaa/CornelsenK12,DBLP:conf/gd/GargT96a}, and in linear time for  planar 3-graphs~\cite{DBLP:journals/jgaa/RahmanNN03} and for biconnected series-parallel graphs (SP-graphs for short)~\cite{DBLP:conf/gd/Didimo0LO20}. When the planar embedding is not fixed, linear-time solutions exist for (families of) planar 3-graphs~\cite{DBLP:conf/soda/DidimoLOP20,DBLP:conf/cocoon/Hasan019,DBLP:journals/ieicet/RahmanEN05,DBLP:journals/siamdm/ZhouN08} and for outerplanar graphs~\cite{DBLP:journals/comgeo/Frati22}.
%
A polynomial-time solution for SP-graphs has been known for a long time~\cite{DBLP:journals/siamcomp/BattistaLV98},  but establishing a tight complexity bound for rectilinear planarity testing of SP-graphs remains a long-standing open problem. 


\smallskip In this paper we provide significant advances on this problem. Our main contribution is twofold:
\begin{itemize}
	\item We present an $O(n^2)$-time algorithm to test rectilinear planarity of partial 2-trees, i.e., graphs whose biconnected components are SP-graphs. This result improves the current best known bound of $O(n^3 \log n)$~\cite{DBLP:journals/jcss/GiacomoLM22}.
	
	\item We give an $O(n)$-time algorithm for those partial 2-trees where no two parallel-components in a block \textcolor{black}{(i.e., a biconnected component)} share a pole. \textcolor{black}{We also show a logarithmic lower bound on the possible values of spirality for an orthogonal component of a graph.}    
\end{itemize}

Our algorithms are based on an extensive study and a deeper understanding of the notion of orthogonal spirality, introduced in 1998 to describe how much an orthogonal drawing of a subgraph is rolled-up in an orthogonal drawing of the graph~\cite{DBLP:journals/siamcomp/BattistaLV98}. In the concluding remarks we also mention some of the pitfalls behind an $O(n)$-time algorithm for partial 2-trees.

\section{Preliminaries}\label{se:preli}

A \emph{planar orthogonal drawing}~$\Gamma$ of a planar graph is a crossing-free drawing that maps each vertex to a distinct point of the plane and each edge to a sequence of horizontal and vertical segments between its end-points~\cite{DBLP:books/ph/BattistaETT99,DBLP:reference/crc/DuncanG13,DBLP:books/ws/NishizekiR04}. A graph is \emph{rectilinear planar} if it admits a planar orthogonal drawing~without~bends. A \emph{planar orthogonal representation}~$H$ describes the shape of a class of orthogonal drawings in terms of sequences of bends along the edges and angles at the vertices. A drawing $\Gamma$ of~$H$ can be computed in linear time~\cite{DBLP:journals/siamcomp/Tamassia87}. If~$H$ has no bend, it is a planar \emph{rectilinear representation}. Since we only deal with planar drawings, we just use the term ``rectilinear representation'' in place of ``planar rectilinear representation''. 

\paragraph{SP-graphs and SPQ$^*$-trees.} \textcolor{black}{A \emph{two-terminal SP-graph} is a graph inductively defined as follows: 
\begin{itemize}
\item {\bf Base case}. A single edge $(s,t)$ is a two-terminal SP-graph with terminals $s$ and $t$.
\item {\bf Inductive case}. Let $G_1, G_2, \dots, G_p$,  with $p \geq 2$, be a set of two-terminal SP-graphs, where each $G_i$ has terminals $s_i$ and $t_i$; two inductive operations are possible:
\begin{itemize}
	\item {\bf Series-composition}. The graph $G$ obtained by the union of all $G_i$ in which $t_i$ is identified with $s_{i+1}$, for $i=1, \dots, p-1$ is a two-terminal SP-graph with terminals $s=s_1$ and $t=t_p$, called a \emph{series-component}.
	\item {\bf Parallel-composition}. The graph $G$ obtained by the union of all $G_i$ where all terminals $s_i$ (resp. $t_i$) are identified in a unique vertex $s$ (resp. $t$) is a two-terminal SP-graph with terminals $s=s_1=\dots=s_p$ and $t=t_1=\dots=t_p$, called a \emph{parallel-component}.
\end{itemize}  
\end{itemize}
%
%
An \emph{SP-graph} is any biconnected two-terminal SP-graph.} Such a graph can be described by a decomposition-tree called \emph{SPQ-tree}, which contains three types of nodes: \emph{S-}, \emph{P-}, and \emph{Q-nodes}. The degree-1 nodes of $T$ are Q-nodes, each corresponding to a distinct edge of $G$.
If $\nu$ is an S-node (resp. a P-node) it represents a series-component (resp. a parallel-component), denoted as $\skel(\nu)$ and called the \emph{skeleton} of $\nu$.
If $\nu$ is an S-node, $\skel(\nu)$ is a simple cycle of length at least three; if $\nu$ is a P-node, $\skel(\nu)$ is a bundle of at least three multiple edges. \textcolor{black}{A property of $T$ is that any two S-nodes (resp. P-nodes) are never adjacent in the tree}.
A \emph{real edge} (resp. \emph{virtual edge}) in $\skel(\nu)$ corresponds to a Q-node (resp. an S- or a P-node) adjacent to $\nu$ in~$T$.

Testing whether a simple cycle is rectilinear planar is trivial (if and only if it has at least four vertices). Hence, we shall assume that $G$ is a biconnected SP-graph different from a simple cycle and we use a variant of the SPQ-tree called \emph{SPQ$^*$-tree} (refer to Fig.~\ref{fi:preli-graph}).
In an SPQ$^*$-tree, each degree-1 node of $T$ is a \emph{Q$^*$-node}, and represents a maximal chain of edges of $G$ (possibly a single edge) starting and ending at vertices of degree larger than two and passing through a sequence of degree-2 vertices only (possibly none). If~$\nu$ is an S- or a P-node, an edge of $\skel(\nu)$ corresponding to a Q$^*$-node $\mu$ is virtual if $\mu$ is a chain of at least two edges, else it is a real edge.

\begin{figure}[h]
	\centering
	\subfigure[$G$]{
		\includegraphics[height=0.28\columnwidth,page=1]{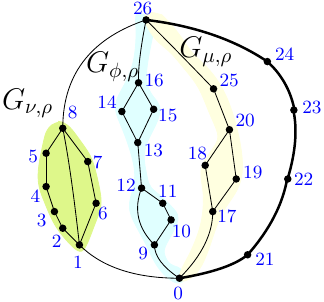}
		\label{fi:preli-g}
	}
	\hfil
	\subfigure[$H$]{
		\includegraphics[height=0.28\columnwidth,page=2]{preli-graph.pdf}
		\label{fi:preli-h}
	}
	\hfill
	\subfigure[$T_\rho$]{
		\includegraphics[height=0.35\columnwidth,page=3]{preli-graph.pdf}
		\label{fi:preli-sprq}
	}
	\caption{$(a)$ An SP-graph $G$. $(b)$  A rectilinear representation $H$ of $G$. $(c)$ The SPQ$^*$-tree $T_\rho$ of $G$, where $\rho$ represents the thick chain; Q$^*$-nodes are small squares; the left-to-right order of the children of each P-node reflects the embedding of~$H$. The components and the skeletons of the nodes $\nu$, $\mu$, $\phi$ are shown: virtual edges are dashed and the reference edge is thicker.
	}
	\label{fi:preli-graph}
\end{figure}

For any given Q$^*$-node $\rho$ of $T$, denote by $T_\rho$ the tree $T$ rooted at $\rho$. \textcolor{black}{Also, for any node $\nu$ of $T_\rho$, denote by $T_\rho(\nu)$ the subtree of $T_\rho$ rooted at $\nu$.}
The chain of edges represented by $\rho$ is the \emph{reference chain} of $G$ with respect to $T_\rho$. If $\nu$ is an S- or a P-node distinct from the root child of~$T_\rho$, then $\skel(\nu)$ contains a virtual edge that has a counterpart in the skeleton of its parent; this edge is the \emph{reference edge} of~$\skel(\nu)$. If $\nu$ is the root child, the \emph{reference edge} of $\skel(\nu)$ is the edge corresponding to $\rho$.
For any S- or P-node $\nu$ of $T_\rho$, the end-vertices of the reference edge of $\skel(\nu)$ are the \emph{poles} of $\nu$ and of $\skel(\nu)$. We remark that $\skel(\nu)$ does not change if we change~$\rho$. However, if $\nu$ is an S-node, its poles depend on~$\rho$; namely, if $\rho'$ is a  Q$^*$-node in the subtree $T_\rho(\nu)$, the poles of~$\nu$ in $T_{\rho'}$ are different from those in $T_\rho$. Conversely, the poles of a P-node stay the same independent of the root of~$T$. For a Q$^*$-node $\nu$ of~$T_\rho$ (including $\rho$), the \emph{poles} of~$\nu$ are the end-vertices of the corresponding chain, and do not change when the root of~$T$ changes.
For any S- or P-node $\nu$ of~$T_\rho$, the \emph{pertinent graph} $G_{\nu,\rho}$ of $\nu$ is the subgraph of $G$ formed by the union of the chains represented by the leaves in the subtree $T_\rho(\nu)$. The \emph{poles} of~$G_{\nu,\rho}$ are the poles of~$\nu$. The \emph{pertinent graph} of a Q$^*$-node $\nu$ (including the root) is the chain represented by $\nu$, and its \emph{poles} are the poles of $\nu$.
Any graph $G_{\nu,\rho}$ is also called a \emph{component} of~$G$ (with respect to~$\rho$). If $\mu$ is a child of $\nu$, we call $G_{\mu,\rho}$ a \emph{child component}~of~$\nu$.
If $H$ is a rectilinear representation of $G$, for any node $\nu$ of $T_\rho$, the restriction $H_{\nu,\rho}$ of $H$ to $G_{\nu,\rho}$ is a \emph{component} of $H$ (with respect to~$\rho$).
Tree $T_\rho$ is used to describe all planar embeddings of $G$ having the reference chain on the external face. 
These embeddings are obtained by permuting in all possible ways the edges of the skeletons of the P-nodes distinct from the reference edges, around the poles. For each P-node $\nu$, each permutation of the edges in $\skel(\nu)$ corresponds to a different left-to-right order of the children of $\nu$ in $T_\rho$ and of their associated components. Namely, assume given an \emph{$st$-numbering} of $G$ such that $s$ and $t$ coincide with the poles of~$\rho$. \textcolor{black}{We recall that an $st$-numbering is a labeling of the $n$ vertices of $G$, with numbers in the set $\{1, \dots, n\}$, such that each vertex gets a different number, $s$ gets number 1, $t$ gets number $n$, and each other vertex $v \notin \{s,t\}$ is adjacent to both a vertex with smaller number and a vertex with larger number. It is well-known that a graph $G$ admits an $st$-numbering if and only if $G \cup (s,t)$ is biconnected, and such a numbering can be computed in $O(n)$ time~\cite{DBLP:journals/tcs/EvenT76}.} 

For each P-node $\nu$ of $T_\rho$, let $u$ and $v$ be its poles where $u$ precedes $v$ in the $st$-numbering. Denote by $e_\nu$ the reference edge of $\skel(\nu)$, by $e_1, \dots, e_h$ the edges of $\skel(\nu)$ distinct from $e_\nu$, and by $\mu_1, \dots, \mu_h$ the children of~$\nu$ corresponding to $e_1, \dots, e_h$. Each permutation of $e_1, \dots, e_h$ defines a class of planar embeddings of $G_{\nu,\rho}$ with $u$ and $v$ on the external face, where the components $G_{\mu_1,\rho}, \dots, G_{\mu_h,\rho}$ are incident to $u$ and $v$ in the order of the permutation. More precisely, if $e_{i_1}, \dots, e_{i_h}$ is one of these permutations $(i_j \in \{1, \dots, h\})$, the clockwise (resp. counterclockwise) sequence of edges incident to $u$ (resp. $v$) in $\skel(\nu)$ is $e_\nu,e_{i_1}, \dots, e_{i_h}$; we say that, according to this permutation, $\mu_{i_1}, \dots, \mu_{i_h}$ and their corresponding components appear in this left-to-right order.

We finally recall that the SPQ$^*$-tree $T$ of an $n$-vertex graph $G$ can be computed in $O(n)$ time~\cite{DBLP:books/ph/BattistaETT99,DBLP:conf/gd/GutwengerM00,DBLP:journals/siamcomp/HopcroftT73}.

\paragraph{Partial 2-trees and BC-trees.} A 1-connected graph $G$ is a \emph{partial 2-tree} if every biconnected component of $G$ is an SP-graph. A biconnected component of $G$ is also called a \emph{block}.  
A block is \emph{trivial} if it consists of a single edge.
The \emph{block-cutvertex tree} $\cal T$ of $G$, also called \emph{BC-tree} of $G$, describes the decomposition of~$G$ in terms of its blocks (see, e.g.,~\cite{dett-gd-99}). Each node of $\cal T$ either represents a block of $G$ or it represents a cutvertex of $G$. A \emph{block-node} (resp. a \emph{cutvertex-node}) of $\cal T$ is a node that represents a block (resp. a cutvertex) of $G$. There is an edge between two nodes of $\cal T$ if and only if one node represents a cutvertex of $G$ and the other node represents a block that contains the cutvertex. A block is \emph{trivial} if it consists of a single edge. 

\section{Rectilinear Planarity Testing of Partial 2-Trees}\label{se:rpt-general-partial-2-trees}

Let $G$ be a partial 2-tree. We describe a rectilinear planarity testing algorithm that visits the block-cutvertex tree (BC-tree) of $G$ and the SPQ$^*$-tree of each block of $G$, for each possible choice of the roots of both decomposition trees. Our algorithm revisits the notion of ``spirality values'' for the blocks of $G$, and introduces new concepts to efficiently compute these values (\cref{sse:spirality-sp-graphs}). It is based on a combination of dynamic programming techniques (\cref{sse:testing-algorithm-general}).

\subsection{Spirality of SP-graphs}\label{sse:spirality-sp-graphs}
Let $G$ be a degree-4 SP-graph and let $H$ be \textcolor{black}{an orthogonal} representation of~$G$. Let $T_\rho$ be a rooted SPQ$^*$-tree of~$G$, let $H_{\nu,\rho}$ be a component of $H$ (i.e., the restriction of $H$ to $G_{\nu,\rho}$), and let $\{u,v\}$ be the poles of $\nu$, conventionally ordered according to an $st$-numbering of $G$, where $s$ and $t$ are the poles of $\rho$.
For each pole $w \in \{u,v\}$, let  $\indeg_\nu(w)$ and $\outdeg_\nu(w)$ be the degree of $w$ inside and outside $H_{\nu,\rho}$, respectively. Define two (possibly coincident) \emph{alias vertices} of $w$, denoted by $w'$ and $w''$, as follows:
$(i)$ if $\indeg_\nu(w)=1$, then $w'=w''=w$;
$(ii)$ if $\indeg_\nu(w)=\outdeg_\nu(w)=2$, then $w'$ and $w''$ are dummy vertices, each splitting one of the two distinct edge segments incident to~$w$ outside~$H_{\nu,\rho}$;
$(iii)$ if $\indeg_\nu(w)>1$ and $\outdeg_\nu(w)=1$, then $w'=w''$ is a dummy vertex that splits the edge segment incident to $w$ outside~$H_{\nu,\rho}$.


Let $A^w$ be the set of distinct alias vertices of a pole $w$. Let $P^{uv}$ be any simple path from~$u$ to~$v$ inside $H_{\nu,\rho}$ and let~$u'$ and~$v'$ be the alias vertices of~$u$ and of~$v$, respectively. The path $S^{u'v'}$ obtained concatenating $(u',u)$, $P^{uv}$, and $(v,v')$ is called a \emph{spine} of $H_{\nu,\rho}$. Denote by $n(S^{u'v'})$ the number of right turns minus the number of left turns encountered along $S^{u'v'}$ while moving from~$u'$ to~$v'$.
The \emph{spirality} $\sigma(H_{\nu,\rho})$ of $H_{\nu,\rho}$, introduced in~\cite{DBLP:journals/siamcomp/BattistaLV98}, is either an integer or a semi-integer number, defined based on the following cases (see \cref{fi:spiralities} for an example):
$(i)$ If $A^u=\{u'\}$ and $A^v=\{v'\}$ then $\sigma(H_\nu) = n(S^{u'v'})$.
$(ii)$ If $A^u=\{u'\}$ and $A^v=\{v',v''\}$ then $\sigma(H_\nu) = \frac{n(S^{u'v'}) + n(S^{u'v''})}{2}$.
$(iii)$ If $A^u=\{u',u''\}$ and $A^v=\{v'\}$ then $\sigma(H_\nu) = \frac{n(S^{u'v'}) + n(S^{u''v'})}{2}$.
$(iv)$ If $A^u=\{u',u''\}$ and $A^v=\{v',v''\}$ assume, without loss of generality, that $(u,u')$ precedes $(u,u'')$ counterclockwise around $u$ and that $(v,v')$ precedes $(v,v'')$ clockwise around $v$; then $\sigma(H_\nu) = \frac{n(S^{u'v'}) + n(S^{u''v''})}{2}$.

\begin{figure}[tb]
	\centering
	\includegraphics[width=0.8\columnwidth,page=1]{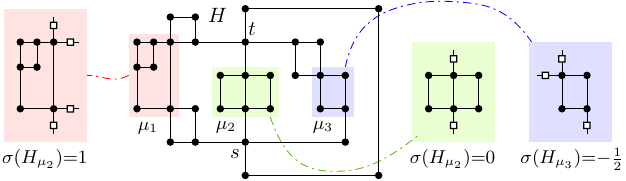}
	\caption{An orthogonal representation $H$ and three of its components with respect to the reference chain with poles $s$ and $t$. For each component, its alias vertices (white squares) and its spirality are reported.}
	\label{fi:spiralities}
\end{figure}

It is proved that the spirality of $H_{\nu,\rho}$ does not depend on the choice of~$P^{uv}$~\cite{DBLP:journals/siamcomp/BattistaLV98}. Also, a component $H_{\nu,\rho}$ of~$H$ can always be substituted by any other component $H'_{\nu,\rho}$ with the same spirality, getting a new valid orthogonal representation with the same set of bends on the edges of $H$ that are not in $H_{\nu,\rho}$ (see~\cite{DBLP:journals/siamcomp/BattistaLV98} and also Theorem~1 in \cite{arxiv-plane-st-graphs-new}).
For brevity, we shall denote by $\sigma_\nu$ the spirality of \textcolor{black}{an orthogonal} representation of $G_{\nu,\rho}$. 
\cref{le:spirality-S-node,le:spirality-P-node-3-children,le:spirality-P-node-2-children} relate, for any S- or P-node $\nu$, the values of spirality for a rectilinear representation of $G_{\nu,\rho}$ to the values of spirality of the rectilinear representations of the child components of $G_{\nu,\rho}$ (i.e., the components corresponding to the children of~$\nu$). \textcolor{black}{They rephrase known results proved in ~\cite{DBLP:journals/siamcomp/BattistaLV98}, specialized to rectilinear representations.} 
See \cref{fi:spirality-relationships} for a schematic illustration.

\begin{figure}[tb]
	\centering
	\subfigure[$\sigma_\nu=1$ (\cref{le:spirality-S-node})]{
		\includegraphics[height=0.26\columnwidth,page=1]{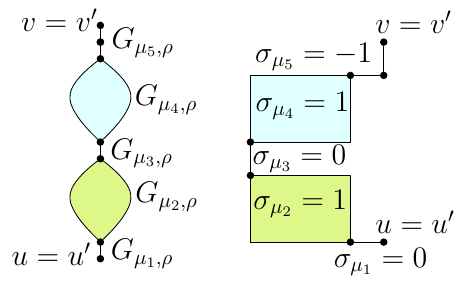}
		\label{fi:spirality-relationships-S}
	}
	\hfil
	\subfigure[$\sigma_\nu=2$ (\cref{le:spirality-P-node-3-children})]{
		\includegraphics[height=0.26\columnwidth,page=2]{spirality-relationships.pdf}
		\label{fi:spirality-relationships-P3}
	}
	\hfil
	\subfigure[$\sigma_\nu=0$ (\cref{le:spirality-P-node-2-children})]{
		\includegraphics[height=0.24\columnwidth,page=3]{spirality-relationships.pdf}
		\label{fi:spirality-relationships-P2}
	}
	\caption{Illustrations for \cref{le:spirality-S-node,le:spirality-P-node-3-children,le:spirality-P-node-2-children} (alias vertices are small squares).}
	\label{fi:spirality-relationships}
\end{figure}



\begin{lemma}{\em (\cite{DBLP:journals/siamcomp/BattistaLV98}, Lemma 4.2)}\label{le:spirality-S-node}
	Let $\nu$ be an S-node of $T_{\rho}$ with children $\mu_1, \dots, \mu_h$ $(h \geq 2)$. $G_{\nu,\rho}$ has a rectilinear representation with spirality $\sigma_\nu$ if and only each $G_{\mu_i}$ $(1 \leq i \leq h)$ has a rectilinear representation with spirality $\sigma_{\mu_i}$, such that $\sigma_\nu = \sum_{i=1}^{h}\sigma_{\mu_i}$.
\end{lemma}


\begin{lemma}{\em (\cite{DBLP:journals/siamcomp/BattistaLV98}, Lemma 4.3)}\label{le:spirality-P-node-3-children}
	Let $\nu$ be a P-node of $T_\rho$ with three children $\mu_l$, $\mu_c$, and $\mu_r$. $G_{\nu,\rho}$ has a rectilinear representation with spirality $\sigma_\nu$, where $G_{\mu_l,\rho}$, $G_{\mu_c,\rho}$, $G_{\mu_r,\rho}$ are in this left-to-right order, if and only if there exist values $\sigma_{\mu_l}$, $\sigma_{\mu_c}$, $\sigma_{\mu_r}$ such that: $(i)$ $G_{\mu_l,\rho}$, $G_{\mu_c,\rho}$, $G_{\mu_r,\rho}$ have rectilinear representations with spirality $\sigma_{\mu_l}$, $\sigma_{\mu_c}$, $\sigma_{\mu_r}$, respectively; and $(ii)$ $\sigma_\nu = \sigma_{\mu_l} - 2 = \sigma_{\mu_c} = \sigma_{\mu_r} + 2$.
\end{lemma}

For a P-node $\nu$ of $T_\rho$ with two children we need some more notation. Let $H$ be an orthogonal representation of $G$ with $\rho$ on the external face and let $H_{\nu,\rho}$ be the restriction of~$H$ to~$G_{\nu,\rho}$. For each pole $w \in \{u,v\}$ of $\nu$, the \emph{leftmost angle} (resp. \emph{rightmost angle}) at $w$ in $H_{\nu,\rho}$ is the angle formed by the leftmost (resp. rightmost) external edge and the leftmost (resp. rightmost) internal edge of~$H_{\nu,\rho}$ incident to~$w$.
Define two binary variables $\alpha_w^l$ and $\alpha_w^r$ as follows: $\alpha_w^l = 0$ ($\alpha_w^r = 0$) if the leftmost (rightmost) angle at $w$ in $H$ is $180^\circ$, while $\alpha_w^l = 1$ ($\alpha_w^r = 1$) if this angle is $90^\circ$.
Also define two variables $k_w^l$ and $k_w^r$ as follows: $k_w^d = 1$ if $\indeg_{\mu_d}(w)=\outdeg_{\nu}(w)=1$,
while $k_{w}^d=1/2$ otherwise, for $d \in \{l,r\}$. 


\begin{lemma}{\em (\cite{DBLP:journals/siamcomp/BattistaLV98}, Lemma 4.4)}\label{le:spirality-P-node-2-children}
	Let $\nu$ be a P-node of $T_\rho$ with two children $\mu_l$ and $\mu_r$, and poles $u$ and $v$. $G_{\nu,\rho}$ has a rectilinear representation with spirality $\sigma_\nu$, where $G_{\mu_l,\rho}$ and  $G_{\mu_r,\rho}$ are in this left-to-right order, if and only if there exist values $\sigma_{\mu_l}$, $\sigma_{\mu_r}$, $\alpha_u^l$, $\alpha_u^r$, $\alpha_v^l$, $\alpha_v^r$ such that: $(i)$ $G_{\mu_l,\rho}$ and $G_{\mu_r,\rho}$ have rectilinear representations with spirality $\sigma_{\mu_l}$ and $\sigma_{\mu_r}$, respectively; $(ii)$ $\alpha_w^l, \alpha_w^r \in \{0,1\}$, $1 \leq \alpha_w^l+\alpha_w^r \leq 2$ with $w \in \{u,v\}$; and $(iii)$ $\sigma_\nu = \sigma_{\mu_l} - k_{u}^l\alpha_{u}^l -  k_{v}^l\alpha_{v}^l = \sigma_{\mu_r} + k_{u}^r\alpha_{u}^r + k_{v}^r\alpha_{v}^r$.
\end{lemma}	

\paragraph{Spirality sets.} Let $G$ be an $n$-vertex SP-graph (distinct from a simple cycle), $T_{\rho}$ be a rooted SPQ$^*$-tree of $G$, and $\nu$ be a node of~$T_{\rho}$. We say that $G_{\nu,\rho}$, or directly $\nu$, \emph{admits spirality}~$\sigma_\nu$ in $T_\rho$ if there exists a rectilinear representation $H_{\nu,\rho}$ with spirality $\sigma_\nu$ in some \textcolor{black}{orthogonal} representation~$H$ of~$G$.
The \emph{rectilinear spirality set} $\Sigma_{\nu,\rho}$ of $\nu$ in $T_\rho$ (and of~$G_{\nu,\rho}$) is the set of spirality values for which $G_{\nu,\rho}$ admits a rectilinear representation. $\Sigma_{\nu,\rho}$ is representative of all ``shapes'' that $G_{\nu,\rho}$ can take in a rectilinear representation of $G$ with the reference chain on the external face, \textcolor{black}{if one exists}. If~$G_{\nu,\rho}$ is not rectilinear planar, $\Sigma_{\nu,\rho}$ is empty. 
Let $n_\nu$ be the number of vertices of $G_{\nu,\rho}$. The following holds.

\begin{property}\label{pr:shortest-spine}
 $|\Sigma_{\nu,\rho}| \leq 2 n_\nu$. Also, for each $\sigma_\nu \in \Sigma_{\nu,\rho}$ we have $|\sigma_\nu| \leq n_\nu$.
\end{property}
\begin{proof}
    The spirality value of any rectilinear representation of $G_{\nu,\rho}$ is either an integer or a semi-integer value that cannot exceed the length of the shortest path between the poles of $G_{\nu,\rho}$. Since any simple path in $G_{\nu,\rho}$ has at most $n_\nu$ vertices and since for each spirality value $\sigma_\nu$ admitted by $\nu$, the spirality value $-\sigma_\nu$ is also admitted by $\nu$, the statement follows.  
\end{proof}

%

\subsection{Testing Algorithm}\label{sse:testing-algorithm-general}
We first consider SP-graphs (which are biconnected according to our definition) and then partial 2-trees that are not biconnected. 

\subsubsection{SP-Graphs.}\label{ssse:testing-sp-graphs} Let $G$ be an SP-graph. Our rectilinear planarity testing algorithm for $G$ elaborates and refines ideas of~\cite{DBLP:journals/siamcomp/BattistaLV98}. It is based on a dynamic programming technique that visits the SPQ$^*$-tree of $G$ for each possible choice of the root; for each tree, either the root is reached and a rectilinear representation is found (in which case the test stops and returns the solution), or a node with empty rectilinear spirality set is encountered (in which case the visit is interrupted and the tree is discarded). With respect to~\cite{DBLP:journals/siamcomp/BattistaLV98}, our algorithm exploits two fundamental ingredients: $(a)$ a more careful analysis that leads to an $O(n^2)$-time procedure to compute the spirality sets of all nodes for a given rooted SPQ$^*$-tree;  $(b)$ a re-usability principle that makes it possible to process all rooted SPQ$^*$-trees in the same asymptotic time needed to process a single SPQ$^*$-tree.

Similar to~\cite{DBLP:journals/siamcomp/BattistaLV98} and~\cite{DBLP:journals/siamdm/DidimoGL09}, in the reminder of this section we shall assume to work with a variant of the SPQ$^*$-tree having the property that each S-node has exactly two children. We call this tree a \emph{normalized} SPQ$^*$-tree. Observe that every SPQ$^*$-tree can be easily transformed into a normalized SPQ$^*$-tree by recursively splitting a series with more than two children into multiple series with two children.\footnote{Note that \cite{DBLP:journals/siamcomp/BattistaLV98} and~\cite{DBLP:journals/siamdm/DidimoGL09} adopt the term ``canonical'' instead of ``normalized''. However, since there are in general several ways of splitting a series into multiple series (i.e., the normalized tree is not uniquely defined), we prefer to avoid the term ``canonical''.} \textcolor{black}{In contrast to the original definition of SPQ$^*$-tree, in a normalized tree two S-nodes can be adjacent.}
We remark that a normalized tree still has $O(n)$ nodes and that it can be easily computed in $O(n)$ time from the original SPQ$^*$-tree. 

In the following we first describe our rectilinear planarity testing algorithm and then we prove, through a sequence of technical lemmas, that it can be executed in quadratic time.

\paragraph{\textcolor{black}{Description and correctness of the testing algorithm.}}
Assume that $G$ is not a simple cycle, otherwise the test is trivial. Let~$T$ be a normalized SPQ$^*$-tree of $G$ and let $\{\rho_1, \dots, \rho_h\}$ be a sequence of all Q$^*$-nodes of~$T$. Denote by $\ell_i$ the length of the chain corresponding to $\rho_i$; the spirality set of $\rho_i$ consists of all integer values in the interval $[-(\ell_i-1), (\ell_i-1)]$. Namely, the spirality value $-(\ell_i-1)$ (resp. $(\ell_i-1)$) is taken when there is a left (resp. right) turn at every vertex of the chain. For each $i=1, \dots, h$, the testing algorithm performs a post-order visit of~$T_{\rho_i}$. During this visit of~$T_{\rho_i}$, for every non-root node $\nu$ of $T_{\rho_i}$ the algorithm computes the set~$\Sigma_{\nu,\rho_i}$ by combining the spirality sets of the children of $\nu$, according to the relations given in Lemmas~\ref{le:spirality-S-node}--\ref{le:spirality-P-node-2-children}.
If $\Sigma_{\nu,\rho_i}=\emptyset$, the algorithm stops the visit, discards~$T_{\rho_i}$, and starts visiting~$T_{\rho_{i+1}}$ (if $i < h$). If the algorithm \textcolor{black}{reaches} the root child $\nu$ and if $\Sigma_{\nu,\rho_i} \neq \emptyset$, it checks whether $G$ is rectilinear planar by verifying if there exists a value $\sigma_\nu \in \Sigma_{\nu,\rho_i}$ and a value $\sigma_{\rho_i} \in \Sigma_{\rho_i,\rho_i} = [-(\ell_i-1), (\ell_i-1)]$ such that $\sigma_\nu - \sigma_{\rho_i} = 4$. \textcolor{black}{We call this property the \emph{root condition}}.
If the root condition holds, the test is positive and the algorithm does not visit the remaining trees; otherwise it discards~$T_{\rho_i}$ and starts visiting~$T_{\rho_{i+1}}$~(if~$i < h$). 

\textcolor{black}{The correctness of the dynamic programming approach followed by the algorithm is an immediate consequence of the spirality properties described in the previous section.} Also, denoted by $s$ and $t$ the poles of $\nu$ (which coincide with those of $\rho_i$), the final condition $\sigma_\nu - \sigma_{\rho_i} = 4$ is necessary and sufficient for the existence of a rectilinear representation due to the following observations: $(i)$ In any orthogonal representation of~$G$, the difference $k$ between the number of right and left turns encountered walking clockwise along the boundary of any simple cycle that \textcolor{black}{contains} the reference chain is $k=4$; $(ii)$ since the alias vertices of the poles of $\nu$ are vertices that subdivide the two edges of the reference chain incident to~$s$ and~$t$, the value $k$ equals the spirality of $\sigma_\nu$ plus the difference $\overline{\sigma}_{\rho_i}$ between the number of right and the number of left turns along the reference chain, going from $t$ to~$s$; $(iii)$ $\overline{\sigma}_{\rho_i}=-{\sigma}_{\rho_i}$, where ${\sigma}_{\rho_i} \in [-(\ell_i-1), (\ell_i-1)]$ is the spirality of the chain corresponding to $\rho_i$. 

From now on we refer to the algorithm described above as \textsc{RectPlanTest-SP($G$)}, where $G$ is the input graph.  
  
\paragraph{Complexity of the testing algorithm.}
We prove that, for each type of node (i.e., Q$^*$, P, or~S), computing the spirality sets of all nodes of that type, over all $T_{\rho_i}$ ($i \in \{1, \dots, h\}$), takes $O(n^2)$ time. Thanks to \cref{pr:shortest-spine}, every time the algorithm visits a node $\nu$ of $T_{\rho_i}$, it stores at $\nu$ a list of integers or semi-integers values of length at most $2n_\nu$ that represents $\Sigma_{\nu,\rho_i}$. Also, it stores at $\nu$ a Boolean array of size $2n_\nu$ that reports which of the $2n_\nu$ candidate spirality values is actually in~$\Sigma_{\nu,\rho_i}$. This array allows us to know in $O(1)$ time whether a specific value of spirality belongs to~$\Sigma_{\nu,\rho_i}$ or not.     

\begin{lemma}\label{le:complexity-Q-nodes}
\textsc{RectPlanTest-SP($G$)} computes the spirality sets of all Q$^*$-nodes over all $T_{\rho_i}$ $(i \in \{1, \dots, h\})$ in $O(n)$ time. 
\end{lemma}
\begin{proof}
    For each $T_{\rho_i}$, \textcolor{black}{a Q$^*$-node} $\nu$ admits all integer spirality values in the interval $[-(\ell-1),(\ell-1)]$, where $\ell$ is the length of the chain corresponding to~$\nu$. The value $\ell$ can be stored at $\nu$ when $T$ is computed. Since $T$ is computed in $O(n)$ time and the sum of the lengths of all chains represented by Q$^*$-nodes is $O(n)$, the statement follows.
\end{proof}

\begin{lemma}\label{le:complexity-P-nodes}
\textsc{RectPlanTest-SP($G$)} computes the spirality sets of all P-nodes over all $T_{\rho_i}$ $(i \in \{1, \dots, h\})$ in $O(n^2)$ time. 
\end{lemma}
\begin{proof}
    Let $T_{\rho_i}$ be the currently visited tree in the algorithm \textsc{RectPlanTest-SP($G$)}, and let $\nu$ be a P-node of~$T_{\rho_i}$. Denote by $\delta_\nu$ the degree of $\nu$. Notice that $\delta_\nu \leq 4$, as $\nu$ has either two or three children.
    If the parent of~$\nu$ in~$T_{\rho_i}$ coincides with the parent of $T_{\rho_j}$ for some $j \in \{1, \dots, i-1\}$, and if $\Sigma_{\nu,\rho_j}$ was previously computed, then the algorithm does not need to compute $\Sigma_{\nu,\rho_i}$, because $\Sigma_{\nu,\rho_i}=\Sigma_{\nu,\rho_j}$. Hence, for each P-node $\nu$, the number of computations of its rectilinear spirality sets that are performed over all possible trees $T_{\rho_i}$ is at most $\delta_\nu = 4$ (one for each different way of choosing the parent of $\nu$).   
	
	Consider a P-node $\nu$ whose spirality set needs to be computed for the first time in $T_{\rho_i}$. If $\nu$ has three children, $\Sigma_{\nu,\rho_i}$ is computed in $O(n)$ time. Namely, it is sufficient to check, for each of the six permutations of the children of~$\nu$ and for each value in the rectilinear spirality set of one of the three children, whether the sets of the other two children contain the values that satisfy condition~$(ii)$ of \cref{le:spirality-P-node-3-children}. 
	If $\nu$ has two children,  $\Sigma_{\nu,\rho_i}$ is computed in $O(n)$ with a similar approach: For each of the two permutations of the children of $\nu$, for each value in the rectilinear spirality set of one of the two children, and for each combination of the values $\alpha_w^d$ $(w \in \{u,v\}, d \in \{l,r\})$ \textcolor{black}{defined in \cref{le:spirality-P-node-2-children}}, check whether the set of the other children contains the value that satisfies condition~$(iii)$ of \cref{le:spirality-P-node-2-children}. Note that, by \cref{pr:shortest-spine}, there are $O(n)$ possible spirality values that must be checked for each P-node $\nu$; also, checking whether a specific value of spirality exists in the set of a child of $\nu$ takes $O(1)$ time, thanks to the Boolean array stored at each child of $\nu$, which informs about the  spirality values admitted by that child. 
	
	Therefore, since the SPQ$^*$-tree contains $O(n)$ P-nodes in total, since the spirality set of each P-node in a rooted tree is computed in $O(n)$ time, and since the spirality set of each P-node needs to be computed at most four times over all $T_{\rho_i}$ ($i \in \{1, \dots, h\}$), the time needed to compute the spirality sets of all P-nodes over all sequence of rooted SPQ$^*$-trees is $O(n^2)$.  
\end{proof}

For the S-nodes we need a more careful analysis. Our ingredients are similar to those used by Chaplick et al.~\cite{DBLP:journals/corr/abs-2208-12548} to efficiently test upward planarity testing of digraphs whose underlying undirected graphs are series-parallel.    
Recall that, since $T_{\rho_i}$ is a normalized SPQ$^*$-tree, each S-node $\nu$ has exactly two children, which we denote by $\mu_1(\nu)$ and $\mu_2(\nu)$. Also, we denote by $n_1^\nu$ and $n_2^\nu$ the number of vertices of the pertinent graphs $G_{\mu_1(\nu),\rho_i}$ and $G_{\mu_2(\nu),\rho_i}$, respectively. 

We start by proving an \textcolor{black}{upper} bound to the sum of the products of the sizes of the pertinent graphs for the children of the S-nodes in a tree $T_{\rho_i}$. For our purposes, it is enough to restrict the attention to $T_{\rho_1}$, although the result holds for any $T_{\rho_i}$. 

\begin{lemma}\label{le:sum-size-S-nodes}
Let ${\cal S}$ be the set of all S-nodes in $T_{\rho_1}$. We have $\sum_{\nu \in {\cal S}} n_1^\nu \cdot n_2^\nu=O(n^2)$. 
\end{lemma}
\begin{proof}
    Let $\xi$ be any node of $T_{\rho_1}$ distinct from $\rho_1$. Let $T_{\rho_1}(\xi)$ be the subtree of $T_{\rho_1}$ rooted at $\xi$, and let ${\cal S(\xi)} \subseteq {\cal S}$ be the set of S-nodes in $T_{\rho_1}(\xi)$. Denote by $s(\xi)=\sum_{\nu \in {\cal S(\xi)}}n_1^\nu \cdot n_2^\nu$. Also, let $n_\xi$ and $m_\xi$ be the number of vertices and the number of edges of $G_{\xi,\rho_1}$, respectively. 
    We will prove that $s(\xi) \leq 4m_\xi^2$. 
    When $\xi$ is the child of $\rho_1$, the statement follows by observing that $m_\xi = O(n_\xi)$ and that $ n_\xi = n - \ell_1 + 1$, where $\ell_1$ is the length of the reference chain.
    To prove that $s(\xi) \leq 4m_\xi^2$ we proceed by induction on the depth $d$ of $T_{\rho_1}(\xi)$.
    In the base case $d=0$ and $\xi$ is a Q$^*$-node (i.e., it is a leaf); we have $s(\xi)=0 < m_\xi$.
    In the inductive case, $d \geq 1$ and we assume (by the inductive hypothesis) that the property holds for every node in the subtree $T_{\rho_1}(\xi)$. There are two cases:
    
        \smallskip\noindent{--} $\xi$ is an S-node. Let $\mu_1$ and $\mu_2$ be the children of $\xi$. We have $s(\xi) = n_{\mu_1} n_{\mu_2} + s(\mu_1) + s(\mu_2)$. By using the inductive hypothesis and since $n_{\mu_i} \leq m_{\mu_i} + 1$ $(i \in \{1,2\})$, we get $s(\xi) \leq m_{\mu_1} m_{\mu_2} + m_{\mu_1} + m_{\mu_2} + 1 +   4m_{\mu_1}^2 + 4m_{\mu_2}^2 \leq 4(m_{\mu_1} + m_{\mu_2})^2$. Since $m_{\mu_1} + m_{\mu_2} = m_{\xi}$, we have $s(\xi) \leq 4m_{\xi}^2$.      
        
        \smallskip\noindent{--} $\xi$ is a P-node. Let $\mu_1, \dots, \mu_k$ be the children of $\xi$, with $k \in \{2,3\}$. We have $s(\xi) = s(\mu_1) + \dots + s(\mu_k)$. By inductive hypothesis and since $m_{\mu_1} + \dots + m_{\mu_k} = m_{\xi}$, we get $s(\xi) \leq 4m_{\mu_1}^2 + \dots + 4m_{\mu_k}^2 \leq 4 (m_{\mu_1} + \dots + m_{\mu_k})^2 = 4m_{\xi}^2$.
\end{proof}

The next lemma provides an upper bound to the time required to compute the spirality set of an S-node, looking at the size of the pertinent graphs of its two children and at the size of the remaining part of the graph. For an S-node $\nu$ of a normalized tree $T_{\rho_i}$, denote by $n_0^\nu$ the number of vertices of the graph $(G \setminus G_{\nu,\rho_i}) \cup \{u,v\}$, where $u$ and $v$ are the poles of $\nu$.
In other words, $n_0^\nu$ is the number of vertices incident to the edges of $G$ that are not in the pertinent graph of $\nu$. Also, as in the previous lemma, let $\mu_1(\nu)$ and $\mu_2(\nu)$ be the two children of $\nu$ in $T_{\rho_i}$ and let $n_1^\nu$ and $n_2^\nu$ denote the number of vertices of their pertinent graphs. We prove the following.  

\begin{lemma}\label{le:min-S-node}
Let $\nu$ be an S-node of $T_{\rho_i}$ for which the spirality sets $\Sigma_{\mu_1(\nu),\rho_i}$ and $\Sigma_{\mu_2(\nu),\rho_i}$ are given and non-empty. The spirality set $\Sigma_{\nu,\rho_i}$ can be computed in $O(\min\{n_1^\nu \cdot n_2^\nu, n_1^\nu \cdot n_0^\nu, n_2^\nu \cdot n_0^\nu\})$ time.
\end{lemma}
\begin{proof}
    Suppose first that $n_0^\nu = \max \{n_0^\nu, n_1^\nu, n_2^\nu\}$. In this case the spirality set $\Sigma_{\nu,\rho_i}$ is computed as in \cite{DBLP:journals/siamcomp/BattistaLV98}, by looking at all distinct values (all integers or all semi-integers) that result from the sum of a value in $\Sigma_{\mu_1(\nu),\rho_i}$ with a value in $\Sigma_{\mu_2(\nu),\rho_i}$. That is, $\Sigma_{\nu,\rho_i}$ is the Cartesian sum of $\Sigma_{\mu_1(\nu),\rho_i}$ and $\Sigma_{\mu_2(\nu),\rho_i}$, which can be computed in $O(n_1^\nu \cdot n_2^\nu)=O(\min\{n_1^\nu \cdot n_2^\nu, n_1^\nu \cdot n_0^\nu, n_2^\nu \cdot n_0^\nu\})$. 
    
    Suppose vice versa that $\max \{n_0^\nu, n_1^\nu, n_2^\nu\}$ is one among $n_1^\nu$ and $n_2^\nu$, say for example $n_2^\nu = \max \{n_0^\nu, n_1^\nu, n_2^\nu\}$ (if the maximum is $n_1^\nu$, the argument is analogous). The spirality values admitted by $\nu$ must be in the interval $[-(n_0^\nu+4), +(n_0^\nu+4)]$, because the number of right turns minus the number of left turns walking counterclockwise on the boundary of any cycle of a rectilinear representation of $G$ equals $4$, and because any rectilinear representation of $G$ restricted to $G \setminus G_{\nu,\rho_i}$ cannot have more than $n_0$ turns in the same direction (either left or right). Also, recall that the spirality values admitted by $\nu$ are either all integer or all semi-integer numbers, depending on the in-degree and out-degree of the poles of $\nu$. Hence, to construct the spirality set $\Sigma_{\nu,\rho_i}$, we can consider every pair $\{\sigma_\nu, \sigma_1\}$, with $\sigma_\nu$ being either an integer or a semi-integer in $[-(n_0^\nu+4), +(n_0^\nu+4)]$ and $\sigma_1 \in \Sigma_{\mu_1(\nu),\rho_i}$, and for each such pair we check whether there exists a value $\sigma_2 \in \Sigma_{\mu_2(\nu),\rho_i}$ such that $\sigma_1 + \sigma_2 = \sigma_\nu$. In the positive case, the value $\sigma_\nu$ is inserted in $\Sigma_{\nu,\rho_i}$, otherwise this value is discarded. 
    Since there are $O(n_0^\nu \cdot n_1^\nu)$ distinct pairs $\{\sigma_\nu, \sigma_1\}$ and since for each pair we can check in $O(1)$ time whether there exists a value $\sigma_2$ that satisfies $\sigma_1 + \sigma_2 = \sigma_\nu$ (thanks to the Boolean array stored at $\mu_2(\nu)$), this procedure takes $O(n_0^\nu \cdot n_1^\nu) = O(\min\{n_1^\nu \cdot n_2^\nu, n_1^\nu \cdot n_0^\nu, n_2^\nu \cdot n_0^\nu\})$ time.
\end{proof}

We finally establish the time complexity of \textsc{RectPlanTest-SP(G)} to compute the spirality sets of all S-nodes over all sequence of normalized rooted SPQ$^*$-trees of $G$.

\begin{lemma}\label{le:complexity-S-nodes}
\textsc{RectPlanTest-SP($G$)} computes the spirality sets of all S-nodes over all $T_{\rho_i}$ $(i \in \{1, \dots, h\})$ in $O(n^2)$ time. 
\end{lemma}
\begin{proof}
    Let $T_{\rho_i}$ be the currently visited tree in the algorithm \textsc{RectPlanTest-SP($G$)}.
    As for the P-nodes, if the parent of~$\nu$ in~$T_{\rho_i}$ coincides with the parent of $T_{\rho_j}$ for some $j \in \{1, \dots, i-1\}$, and if $\Sigma_{\nu,\rho_j}$ was previously computed, then the algorithm does not need to compute $\Sigma_{\nu,\rho_i}$, because $\Sigma_{\nu,\rho_i}=\Sigma_{\nu,\rho_j}$. Hence, for each S-node $\nu$, the number of computations of its rectilinear spirality sets that are performed over all possible trees $T_{\rho_i}$ is at most $3$ (one for each different way of choosing the parent of $\nu$).
    
    Suppose that, for an S-node $\nu$, $\mu_1(\nu)$ and $\mu_2(\nu)$ are the children of $\nu$ in the first rooted tree $T_{\rho_1}$, and that $n_1^\nu$ and $n_2^\nu$ are the number of vertices of $G_{\mu_1(\nu),\rho_1}$ and $G_{\mu_2(\nu),\rho_1}$, respectively.  
    By \cref{le:min-S-node}, every time \textsc{RectPlanTest-SP($G$)} needs to compute the spirality set of an S-node $\nu$ in a tree $T_{\rho_i}$, it spends \textcolor{black}{$O(n_1^\nu \cdot n_2^\nu)$ time}. Denote by $\cal S$ the set of all S-nodes in $T_{\rho_1}$. Since the spirality set of each S-node has to be computed at most three times over all $T_{\rho_i}$ ($i = 1, \dots, h$), the time required to compute the spirality sets of all S-nodes over all $T_{\rho_i}$ is $O(\sum_{\nu \in {\cal S}} n_1^\nu \cdot n_2^\nu)$, which, by \cref{le:sum-size-S-nodes}, is $O(n^2)$.
\end{proof}

We are now ready to prove the main result of this subsection.

\begin{lemma}\label{le:rpt-general-sp-graph}
	Let $G$ be an $n$-vertex SP-graph. There exists an $O(n^2)$-time algorithm that tests whether $G$ is rectilinear planar and that computes a rectilinear representation of $G$ in the positive case.  
\end{lemma}
\begin{proof}
    %
    Consider the algorithm \textsc{RectPlanTest-SP(G)} described above. By \cref{le:complexity-Q-nodes,le:complexity-P-nodes}, and \ref{le:complexity-S-nodes}, this algorithm spends $O(n^2)$ time to compute the spirality sets of all nodes, over all sequence $T_{\rho_1}, \dots, T_{\rho_h}$ of normalized trees. Also, for each visited tree $T_{\rho_i}$ $(i \in \{1, \dots, h\})$, if the spirality set of the root child $\nu$ is not empty, the algorithm takes $O(n)$ time to check the root condition, i.e., whether there exist two values $\sigma_\nu \in \Sigma_{\nu,\rho_i}$ and $\sigma_{\rho_i} \in \Sigma_{\rho_i,\rho_i}$ such that $\sigma_\nu - \sigma_{\rho_i} = 4$. Therefore, \textsc{RectPlanTest-SP(G)} can be executed in $O(n^2)$ time. 
	
	\smallskip\noindent\textsf{Construction algorithm.} Suppose now that the test is positive for some rooted tree $T_{\rho_i}$, with $1 \leq i \leq h$. This implies that the final condition $\sigma_\nu - \sigma_{\rho_i} = 4$ holds when $\nu$ is the root child, for some suitable values $\sigma_\nu \in \Sigma_{\nu,\rho_i}$ and $\sigma_{\rho_i} \in [-(\ell_i-1), (\ell_i-1)]$. In order to construct a rectilinear planar representation of $G$ with the reference edge corresponding to $\rho_i$ on the external face, we proceed as follows: First we assign spirality $\sigma_\nu$ to the root child $\nu$; then we visit $T_{\rho_i}$ top-down and assign a suitable value of spirality to each visited node, according to the spirality value already assigned to its parent; for each P-node, we also determine the permutation of its children that yields the desired value of spirality. Once the spirality values of each all nodes have been assigned and the permutation of the children of each P-node has been fixed, we apply the algorithm in \cite{DBLP:conf/gd/Didimo0LO20} (which works for plane SP-graphs) to construct a rectilinear representation of $G$ in linear time. More in detail, suppose that during the top-down visit we have assigned a spirality value $\sigma_\nu$ to a node $\nu$. If $\nu$ is not a Q$^*$-node, we determine the spirality values that can be assigned to its children based on whether $\nu$ is a P-node or an S-node, namely: 
	
	\smallskip\noindent{\em -- $\nu$ is a P-node with three children}. By \cref{le:spirality-P-node-3-children}, we check, for each of the six left-to-right orders (permutations) $\mu_l, \mu_c, \mu_r$ of the three children of $\nu$, whether $\Sigma_{\mu_l,\rho_i}, \Sigma_{\mu_c,\rho_i}$, and $\Sigma_{\mu_r,\rho_i}$ contain the values $\sigma_{\mu_l}=\sigma_{\nu} + 2$, $\sigma_{\mu_c}=\sigma_{\nu}$, and $\sigma_{\mu_r}=\sigma_{\nu} - 2$, respectively. If so, assign these values of spiralities to three children of $\nu$ and fix this order of the children for $\nu$. This test takes $O(1)$ time.
	
	\smallskip\noindent{\em -- $\nu$ is a P-node with two children}. Let $u$ and $v$ be the poles of $\nu$ in $T_{\rho_i}$. By \cref{le:spirality-P-node-2-children}, we check, for each left-to-right order (permutation) $\mu_l,\mu_r$ of the two children of $\nu$, whether there exists a combination of values $\alpha_u^l$, $\alpha_u^r$, $\alpha_v^l$, $\alpha_v^r$ and two values $\sigma_{\mu_l} \in \Sigma_{\mu_l,\rho_i}$ and $\sigma_{\mu_r} \in \Sigma_{\mu_r,\rho_i}$ such that: $\sigma_{\mu_l}=\sigma_{\nu}+k_u^l\alpha_u^l+k_v^l\alpha_v^l$ and $\sigma_{\mu_r}=\sigma_{\nu}-k_u^r\alpha_u^r-k_v^r\alpha_v^r$. Since each $\alpha_w^d$ ($w \in \{u,v\}, d \in \{l,r\}$) is a binary variable, this test takes $O(1)$ time. 
	
	\smallskip\noindent{\em -- $\nu$ is an S-node}.
	Let $\mu_1$ and $\mu_2$ be the two children of $\nu$. By \cref{le:spirality-S-node}, we check the existence of two values $\sigma_{\mu_1} \in \Sigma_{\mu_1,\rho_i}$ and $\sigma_{\mu_2} \in \Sigma_{\mu_2,\rho_i}$, such that $\sigma_{\mu_1} + \sigma_{\mu_2} = \sigma_\nu$. This takes $O(n)$ time.
	
	\smallskip By the analysis above, the time complexity of the construction is dominated by the assignment of spirality values to the children of the S-nodes, which takes in total $O(n^2)$ time.
\end{proof}

\subsubsection{1-connected partial 2-trees}\label{ssse:partial-2-trees}
We now extend the result of \cref{le:rpt-general-sp-graph} to partial 2-trees that consist of multiple blocks. The main difficulty in this case is to handle the angle constraints that may be required at the cutvertices of the input graph $G$. Indeed, one cannot simply test the rectilinear planarity of each single block independently, as it might be impossible to merge the rectilinear representations of the different blocks into a rectilinear representation for $G$ without additional angle constraints at the cutvertices. \textcolor{black}{For example, suppose $c$ is a cut-vertex shared by two blocks $B_1$ and $B_2$, each having two edges incident to $c$; we cannot accept any rectilinear representation of $B_1$ in which the two edges incident to $c$ form angles of 180 degrees, as such a representation does not leave enough space to attach the two edges of $B_2$ incident to $c$.}

We prove the following result.

\begin{theorem}\label{th:rpt-general-partial-2-trees}
Let $G$ be an $n$-vertex partial 2-tree. There exists an $O(n^2)$-time algorithm that tests~whether~$G$ is rectilinear planar and that computes a rectilinear representation of $G$ in the positive case.  
\end{theorem}
\begin{proof}
	Let $\cal T$ be the BC-tree of $G$, and let $B_1, \dots, B_q$ be the blocks of $G$ $(q \geq 2)$. We denote by $\beta(B_i)$ the block-node of $\cal T$ corresponding to $B_i$ $(1 \leq i \leq q)$ and by ${\cal T}_{B_i}$ the tree $\cal T$ rooted at $\beta(B_i)$. For a cutvertex $c$ of $G$, we denote by $\chi(c)$ the node of $\cal T$ that corresponds to $c$. 
	Each ${\cal T}_{B_i}$ describes a class of planar embeddings of $G$ such that, for each non-root node $\beta(B_j)$ $(1 \leq j \leq q)$ with parent node $\chi(c)$ \textcolor{black}{and grandparent node $\beta(B_k)$, the cutvertex $c$ and $B_k$ lie on the external face of $B_j$}. 
	We say that $G$ is \emph{rectilinear planar with respect to} ${\cal T}_{B_i}$ if it is rectilinear planar for some planar embedding in the class described by~${\cal T}_{B_i}$. To check whether $G$ is rectilinear planar with respect to~${\cal T}_{B_i}$, we have to perform a constrained rectilinear planarity testing for every block $B_1, \dots, B_q$ to guarantee that the rectilinear representations of the different blocks can be merged together at the shared cutvertices. 
	We first define the types of constraints that we need to impose on the angles at the cutvertices of $B_j$ in each ${\cal T}_{B_i}$. 
	Then we explain how to perform the rectilinear planarity testing algorithm with respect to ${\cal T}_{B_i}$, over all $i=1, \dots, q$, 
	while considering these constraints.
	
	
	\paragraph{Types of constraints for a block $\mathbf{B_j}$ in a rooted BC-tree $\mathbf{{\cal T}_{B_i}}$.}
	The constraints for each block~$B_j$ in tree ${\cal T}_{B_i}$ depend on whether $j=i$ or not and on the angles that we may have to impose on each cutvertex $c$ of $B_j$. 
	We denote by $\deg(c)$ the degree of~$c$ in~$G$ and by $\deg(c|{B_j})$ the degree of $c$ in~$B_j$. 
	
	\medskip\noindent{\sf Case $j=i$ ($\beta(B_j)$ is the root).} Let $c'$ be a cutvertex of $B_j$ and let $B_k$ be one of the blocks that share~$c'$ with $B_j$. Note that a rectilinear representation of~$B_k$ (if any) must have $c'$ on its external face, as $\chi(c')$ is the parent of~$\beta(B_k)$ in~${\cal T}_{B_j}$. We distinguish two subcases: $(i)$ If $\deg(c'|{B_k})=\deg(c'|{B_j})=2$, there is not a third block that contains~$c'$. We constraint $c'$ to have a reflex angle (i.e., an angle of $270^\circ$) in any rectilinear representation of $B_j$ (if any).  We call this type of constraint a \emph{reflex-angle constraint} on $c'$. \textcolor{black}{This constraint is necessary and sufficient to merge a rectilinear representation of $B_k$ having a reflex angle at $c'$ on the external face (if any) to the one of $B_j$. Indeed, if both the angles at $c'$ in the representation of $B_j$ were smaller than $270^\circ$, then there would not be enough space to embed the representation of $B_k$ on one of the two faces of $B_j$ incident to $c'$, because $\deg(c'|{B_k})=2$; this proves the necessity of the constraint. On the other hand, if a face $f$ incident to $c'$ in the representation of $B_j$ has an angle of $270^\circ$ at $c'$, then we can easily merge the representation of $B_j$ with a representation of $B_k$ having an external reflex angle at $c'$ by embedding the representation of $B_k$ on face $f$ (there will be four angles of $90^\circ$ at $c'$ in the final representation); this proves the sufficiency of the constraint. Note that, the constraint that forces a representation of $B_k$ to have an external reflex angle at $c'$ in this case is treated when we consider the case $j \neq i$.} $(ii)$ In all other cases, we do not need to impose any constraints on $c'$; indeed, either $\deg(c'|B_j) = 1$ or $\deg(c'|B_k) = 1$, and any rectilinear representation of $B_k$ with $c'$ on the external face is embeddable in one of the faces incident to $c'$ in a rectilinear representation~of~$B_j$.   
	
	\medskip\noindent{\sf Case $j \neq i$ ($\beta(B_j)$ is not the root).} Let $\chi(c)$ be the parent node of $\beta(B_j)$; we must restrict to those rectilinear representations of $B_j$ with $c$ on the external face. If $\deg(c|B_j)=1$ then $B_j$ is a trivial block and we do not need to impose any constraint for~$B_j$. Hence, assume that $\deg(c|B_j) \geq 2$ and let $\beta(B_k)$ be the parent node of $\chi(c)$ in ${\cal T}_{B_i}$. We distinguish different types of \emph{external constraints} on $c$, based on the following subcases: 
	$(i)$ If $\deg(c)=4$ and $\deg(c|B_k)=\deg(c|B_j)=2$, then we impose an \emph{external reflex-angle constraint} on $c$, which forces $c$ to have a reflex angle on the external face $f$ of any rectilinear representation of $B_j$. A rectilinear representation of $B_k$ (if any) will be embedded in $f$.
	$(ii)$ If $\deg(c)=4$ with $\deg(c|B_k)=1$ (i.e., $B_k$ is a trivial block) and $\deg(c|B_j)=2$, then $B_j$ has a sibling $B_h$, which is a trivial block. In this case, we impose an \emph{external non-right-angle constraint} on $c$, which forces $c$ to have an angle larger than $90^\circ$ (i.e., either a flat or a reflex angle) on the external face~$f$; a rectilinear representation of $B_k$ (if any) will be embedded in~$f$, while a rectilinear representation of $B_h$ (if any) will be embedded either in~$f$ (if $c$ has a reflex angle in~$f$) or in the other face of $B_j$ incident to $c$ (if~$c$ has a flat angle in~$f$).
	$(iii)$ If $\deg(c)=4$ with $\deg(c|B_k)=1$ and $\deg(c|B_j)=3$, we impose an \emph{external flat-angle constraint} on~$c$, which forces $c$ to have its unique flat angle on the external face~$f$; again, a rectilinear representation of $B_k$ (if any) will be embedded in~$f$.
	$(iv)$ If $\deg(c)=3$ and $\deg(c|B_j)=2$ (which implies $\deg(c|B_k)=1$), we impose an external non-right-angle constraint on $c$, as in case~$(ii)$.
	Observe that, by definition, there is at most one external constraint on $c$ in $B_j$. 
	Additionally, for any cutvertex $c' \neq c$ of $B_j$, we impose a reflex-angle constraint on $c'$ when there is exactly one block $B_h$ that shares $c'$ with $B_j$ and $\deg(c'|B_j)=\deg(c'|B_h)=2$.   
	
	\begin{figure}[tb]
		\centering
		\subfigure[]{\includegraphics[height=0.22\columnwidth,page=1]{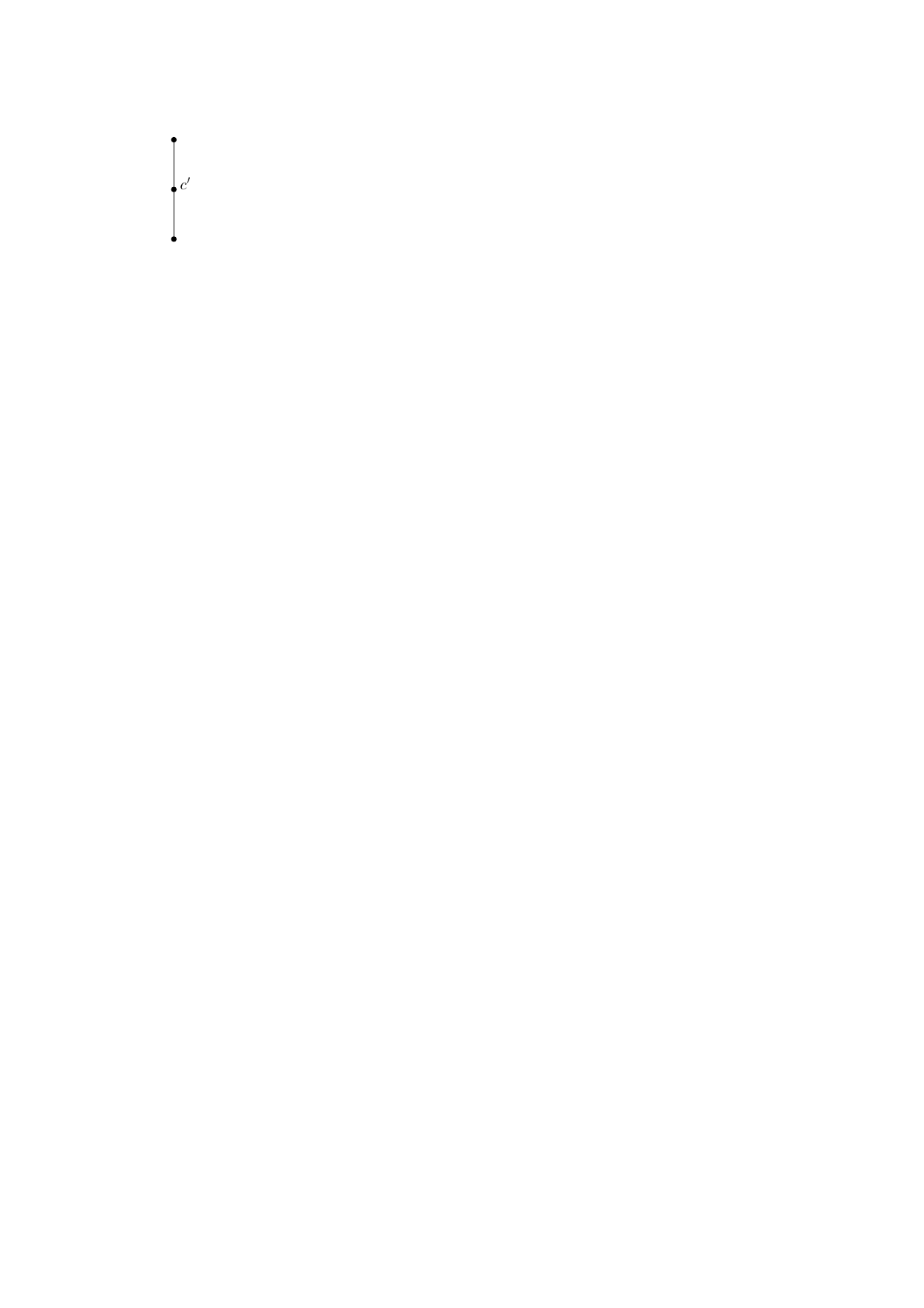}\label{fi:angle-constraints-a}}
		\hfill
		\subfigure[]{\includegraphics[height=0.22\columnwidth,page=2]{angle-constraints.pdf}\label{fi:angle-constraints-b}}
		\hfill
		\subfigure[]{\includegraphics[height=0.22\columnwidth,page=3]{angle-constraints.pdf}\label{fi:angle-constraints-c}}
		
		\caption{$(a)$ A degree-2 cutvertex $c'$; $(b)$ the reflex-angle gadget; $(c)$ a rectilinear representation of the reflex-angle gadget, which forces $c'$ to form a reflex angle.}
		\label{fi:gadgets}
	\end{figure}
	
	\paragraph{Testing algorithm.} We describe how to test in $O(n^2)$ time whether $G$ admits a rectilinear representation with respect to ${\cal T}_{B_i}$, over all $i=1, \dots, q$. The test consists of two main phases.
	
	\medskip\noindent \textsf{Phase 1 (pre-processing).} In this phase, for each block $B_j$, we consider all possible \emph{configurations} of the cutvertex-nodes incident to~$\beta(B_j)$ in which either all these cutvertex-nodes are children of~$\beta(B_j)$ in a rooted BC-tree of~$G$ (i.e.,~$\beta(B_j)$ is the root) or one of them is chosen as the parent of~$\beta(B_j)$ and the remaining ones are the children of~$\beta(B_j)$. For each configuration, we store at~$\beta(B_j)$ a Boolean \emph{local label} that is \textsf{true} if and only if $B_j$ is rectilinear planar with respect to a rooted BC-tree that has the given configuration for the cut-vertex nodes incident to $\beta(B_j)$ \textcolor{black}{(see the initial part of the proof for the definition of rectilinear planarity with respect to a given rooted BC-tree)}. Note that, in this way, for each block-node we store a number of local labels equal to its degree plus one. Thus, in total we store $O(n)$ local labels at the nodes of the BC-tree.
	To compute the local label associated with each configuration of the cutvertex-nodes incident to $\beta(B_j)$, we execute the following steps:
	
	\begin{itemize}
	    \item \textsf{Step~1}. For each cutvertex $c'$ of $B_j$ such that we need to impose on $c'$ either a reflex-angle constraint or an external reflex-angle constraint in some configuration, we enhance $B_j$ with a gadget, called a \emph{reflex-angle gadget} for $c'$, depicted in \Cref{fi:angle-constraints-a,fi:angle-constraints-b}. It consists of two vertices $u$ and~$v$, each subdividing one of the two edges incident to $c'$ in $B_j$, and of two edge-disjoint paths connecting $u$ and $v$, one having length two and the other having length four. Call $B'_j$ the block resulting from $B_j$ after the addition of all these reflex-angle gadgets. $B'_j$ is still an SP-graph and each cutvertex $c'$ with a reflex-angle constraint gadget will be forced to have a reflex angle in any rectilinear representation of the block. \textcolor{black}{Indeed, since a rectilinear representation has no edge bends and since $u$ and $v$ have degree four, the shape of the reflex-angle gadget is necessarily a rectangle whose corners are its four degree-2 vertices, and $c'$ is necessarily inside this rectangle and has an angle of $270^\circ$ (see \Cref{fi:angle-constraints-c}).}
	    Also, since each reflex-angle gadget consists of a constant number of nodes and edges, the size of $B'_j$ is linear in the size of $B_j$. From a rectilinear representation of $B'_j$ we will obtain a constrained rectilinear representation of $B_j$ by simply ignoring the reflex-angle gadgets (once we have possibly exchanged the identity of $c'$ with the degree-2 vertex of the path of the gadget having length two).
	    
	    \item \textsf{Step~2}. Execute on $B'_j$ the non-constrained planarity testing algorithm of \cref{le:rpt-general-sp-graph}, over all possible roots of the SPQ$^*$-tree of $B'_j$. However, during the test on each rooted SPQ$^*$-tree, and similarly to what is done in~\cite{DBLP:journals/siamcomp/BattistaLV98}, for each node $\nu$ and for each value $\sigma_\nu$ in the spirality set of $\nu$, we also store at $\nu$ a different 4-tuple for each possible combination of the leftmost and rightmost external angles at the poles \textcolor{black}{$u$ and $v$ of $\nu$ (i.e., $\alpha_u^l, \alpha_u^r, \alpha_v^l, \alpha_v^r$)} that are compatible with $\sigma_\nu$. Note that, there are at most four tuples for each spirality value admitted by $\nu$, because each pole of $\nu$ has either degree three or degree four in the block, and its leftmost and rightmost external angles are either of~$90^\circ$ or of~$180^\circ$.   
	    
	    \item \textsf{Step~3}. For each distinct configuration of the cutvertex-nodes incident to $\beta(B_j)$ we decide its corresponding Boolean local label, based on the output of the previous step and on whether the configuration requires an external angle constraint at a cutvertex of $B_j$ or not. Namely, if the configuration is such that all cutvertex-nodes incident to $\beta(B_j)$ are children of $\beta(B_j)$ (which models the case when $\beta(B_j)$ is the root of the BC-tree), there is no external angle constraints on the cutvertices of $B_j$, hence the local label is \textsf{true} if and only if $B'_j$ was rectilinear planar in Step~2. Consider vice versa a configuration such that $\chi(c)$ is the parent of $\beta(B_j)$, for a cutvertex $c$ in $B_j$. Clearly, if $B'_j$ was not rectilinear planar in Step~2, the local label for the configuration is \textsf{false}. However, if $B'_j$ was rectilinear planar in Step~2, we must check whether it remains rectilinear planar with the additional external angle-constraint on~$c$. We distinguish the following cases:
	    
	    \smallskip\noindent{$(i)$} If there is an external reflex-angle-constraint on~$c$, consider the output of the testing algorithm of Step~2 restricted to the SPQ$^*$-tree of $B'_j$ whose reference chain is the path of length four of the reflex-angle gadget for $c$. The local label is set to \textsf{true} if and only if the test for this rooted tree was positive, as it equals to say that $B_j$ is rectilinear planar with $c$ on the external face and with a reflex angle on the external face.
	    
	    \smallskip\noindent{$(ii)$} If there is an external non-right-angle constraint on $c$, we know that $\deg(c|B_j)=2$. We restrict the output of the testing algorithm of Step~2 to the only root $\rho$ of the SPQ$^*$-tree whose reference chain $\pi$ contains~$c$. Denote by $\ell$ the length of $\pi$ and let $s$ and $t$ be the two poles of~$\pi$. Since $c$ is not allowed to have a $90^\circ$ angle on the external face, the spirality $\sigma_{\rho}$ is restricted to take values in the range $[-(\ell-1),(\ell-2)]$, instead of $[-(\ell-1),(\ell-1)]$ ($\sigma_{\rho}=(\ell-1)$ corresponds to having a $90^\circ$ angle on the external face at all degree-2 vertices of~$\pi$). Hence, we just repeat the checking of the root condition under this restriction, and we set the local label for the configuration to \textsf{true} if and only if the checking remains positive.
	    
	    \smallskip\noindent{$(iii)$} Finally, if there is an external flat-angle constraint on~$c$, we know that $\deg(c|B_j)=3$. Denote by $\pi_1$, $\pi_2$, and $\pi_3$ the three chains incident to~$c$ in~$B_j$. We restrict the output of the testing algorithm of Step~2 to the roots of SPQ$^*$-tree of $B_j$ corresponding to $\pi_1$, $\pi_2$, and $\pi_3$. For each of these roots, we remove from the spirality set of the root child those values whose associated 4-tuples require a $90^\circ$ angle at $c$ on the external face. After this removal, the local label for the configuration is set to \textsf{true} if and only if we can still satisfy the root condition, as described in the proof of \cref{le:rpt-general-sp-graph}.
	 \end{itemize}
	
	Concerning the time complexity of \textsf{Phase~1}, for each block $B_j$, denote by $n_{B_j}$ the number of vertices of $B_j$. We have the following: Step~1 is easily executed in $O(n_{B_j})$ time; Step~2 is executed in $O(n_{B_j}^2)$ time by \cref{le:rpt-general-sp-graph}; Step~3 is executed in $O(n_{B_j})$ time for each distinct configuration, and hence in $O(n_{B_j}^2)$ over all $O(n_{B_j})$ configurations.  
	Summing up over all $B_j$ $(i = 1, \dots, q)$, we have that {\sf Phase~1} takes $O(n^2)$ time.
	
	\medskip \noindent \textsf{Phase~2}. After the pre-processing phase, we first consider the rooted BC-tree ${\cal T}_{B_1}$. We visit ${\cal T}_{B_1}$ bottom-up and for each node $\gamma$ of ${\cal T}_{B_1}$ (either a block-node or a cutvertex-node) we compute a Boolean \emph{cumulative label} that is either \textsf{true} or \textsf{false} depending on whether all blocks in the subtree of ${\cal T}_{B_1}$ rooted at $\gamma$ (included $\gamma$) have a cumulative label \textsf{true} or not. Namely, for a leaf $\gamma=\beta(B_j)$, its cumulative label coincides with the local label of $\beta(B_j)$ for its current configuration of cutvertices. For each cutvertex-node, its cumulative label is the Boolean logic \textsf{AND} of the cumulative labels of its children. For each internal block-node, its cumulative label is the Boolean logic \textsf{AND} of its children and of its local label. Computing the cumulative labels of each node of ${\cal T}_{B_1}$ takes $O(n)$ time. At this point, one of the following three cases holds:
	
	\smallskip\noindent\textsf{Case 1.} The cumulative label of the root is \textsf{true}. In this case the test is positive, as $G$ is rectilinear planar with respect to ${\cal T}_{B_1}$. 
	
	\smallskip\noindent\textsf{Case 2.} There are two block-nodes $\gamma_1$ and $\gamma_2$ in ${\cal T}_{B_1}$ with cumulative label \textsf{false} and that are along two distinct paths from a leaf to the root (which implies that there is a node with two children whose cumulative labels are \textsf{false}). In this case the test is negative, as for any other ${\cal T}_{B_i}$ ($i = 2, \dots, q$), at least one of the subtrees rooted at $\gamma_1$ and $\gamma_2$ remains unchanged.

	\smallskip\noindent\textsf{Case 3.} All block-nodes with cumulative label \textsf{false} (possibly one block-node) are on the same path from a leaf to the root. In this case, let $\beta(B_j)$ be the deepest node along this path (note that $\beta(B_j)$ could also be the root). The rest of the test can be restricted to considering all rooted BC-trees whose root $\beta(B_i)$ is a leaf of the subtree rooted at $\beta(B_j)$. For each of these roots we repeat the procedure above, by visiting ${\cal T}_{B_i}$ bottom-up and by computing the cumulative label of each node $\gamma$ of ${\cal T}_{B_i}$ only if the subtree of $\gamma$ has changed with respect to any previous visit (otherwise we just reuse the cumulative label of $\gamma$ computed in a previous tree without visiting its subtree again). Also, for a node $\gamma$ whose parent has changed, the cumulative label of $\gamma$ can be easily computed in $O(1)$ time by looking at the cumulative label of $\gamma$ in ${\cal T}_{B_1}$, at the cumulative label of the child of $\gamma$ in  ${\cal T}_{B_1}$ that becomes its parent in ${\cal T}_{B_i}$, and at the cumulative label of the parent of $\gamma$ in ${\cal T}_{B_1}$ (if $\gamma \neq \beta(B_1)$) that becomes its child in ${\cal T}_{B_i}$ ($\gamma$ has at most one child whose cumulative label is \textsf{false}).
	
	\smallskip
	Concerning the time complexity of \textsf{Phase~2}, for each node $\gamma$ of $\cal T$ of degree $\delta_\gamma$, the cumulative label of $\gamma$ is computed in $O(\delta_\gamma)$ time for ${\cal T}_{B_1}$ and in $O(1)$ in each subsequent rooted BC-tree in which $\gamma$ changes the parent. Since each node $\gamma$ changes its parent $O(\delta_\gamma)$ times, summing up over all $\gamma$, we have that {\sf Phase~2} takes in total $O(n)$ time. 
	%
	%

	\smallskip
	To conclude, since {\sf Phase~1} takes $O(n^2)$ time and {\sf Phase~2} takes $O(n)$ time, the overall test is executed in $O(n^2)$ time. Also, if the test is positive, with the same strategy as in \cref{le:rpt-general-sp-graph}, we construct a rectilinear representation of each block and, thanks to the given angle constraints at the cutvertices, we just merge all the representations together in order to compute a rectilinear representation of $G$. Since constructing a representation for each block $B_j$ takes $O(n_{B_j}^2)$ (see \cref{le:rpt-general-sp-graph}), the overall time of the construction algorithm is $O(n^2)$.
\end{proof}


\section{Independent-Parallel Partial 2-Trees}\label{se:rpt-ip-sp-graphs}
In this section we show that the rectilinear planarity testing problem can be solved in linear-time for a meaningful subclass of partial 2-trees, which we call ``independent-parallel''. In the final remark we also discuss the difficulties of extending this result to a larger subclass of partial 2-trees.

An \emph{independent-parallel SP-graph} is a (biconnected) SP-graph in which no two P-components share a pole (the graph in Fig.~\ref{fi:preli-g} is an {\pisp}). An \emph{independent-parallel partial 2-tree} is a partial 2-tree such that every block is an independent-parallel SP-graph.    

The first step towards a linear-time testing algorithm for independent-parallel partial 2-trees is to design a linear-time testing algorithm for {\pisps}, i.e., to improve the complexity stated in \cref{le:rpt-general-sp-graph} when we restrict to this subclass of SP-graphs. To this aim, 
we ask whether the components of an {\pisp} have spirality sets of constant size, as for the case of planar 3-graphs~\cite{DBLP:conf/soda/DidimoLOP20,DBLP:journals/siamdm/ZhouN08}.
Unfortunately, this is not the case for SP-graphs with degree-4 vertices, even when they are independent-parallel. Namely, in \cref{sse:lower-bound} we describe an   
infinite family of {\pisps} whose rectilinear representations require that some components have spirality $\Omega(\log n)$.   

\begin{figure}[tb]
	\centering
	\subfigure[]{\includegraphics[height=0.205\columnwidth,page=2]{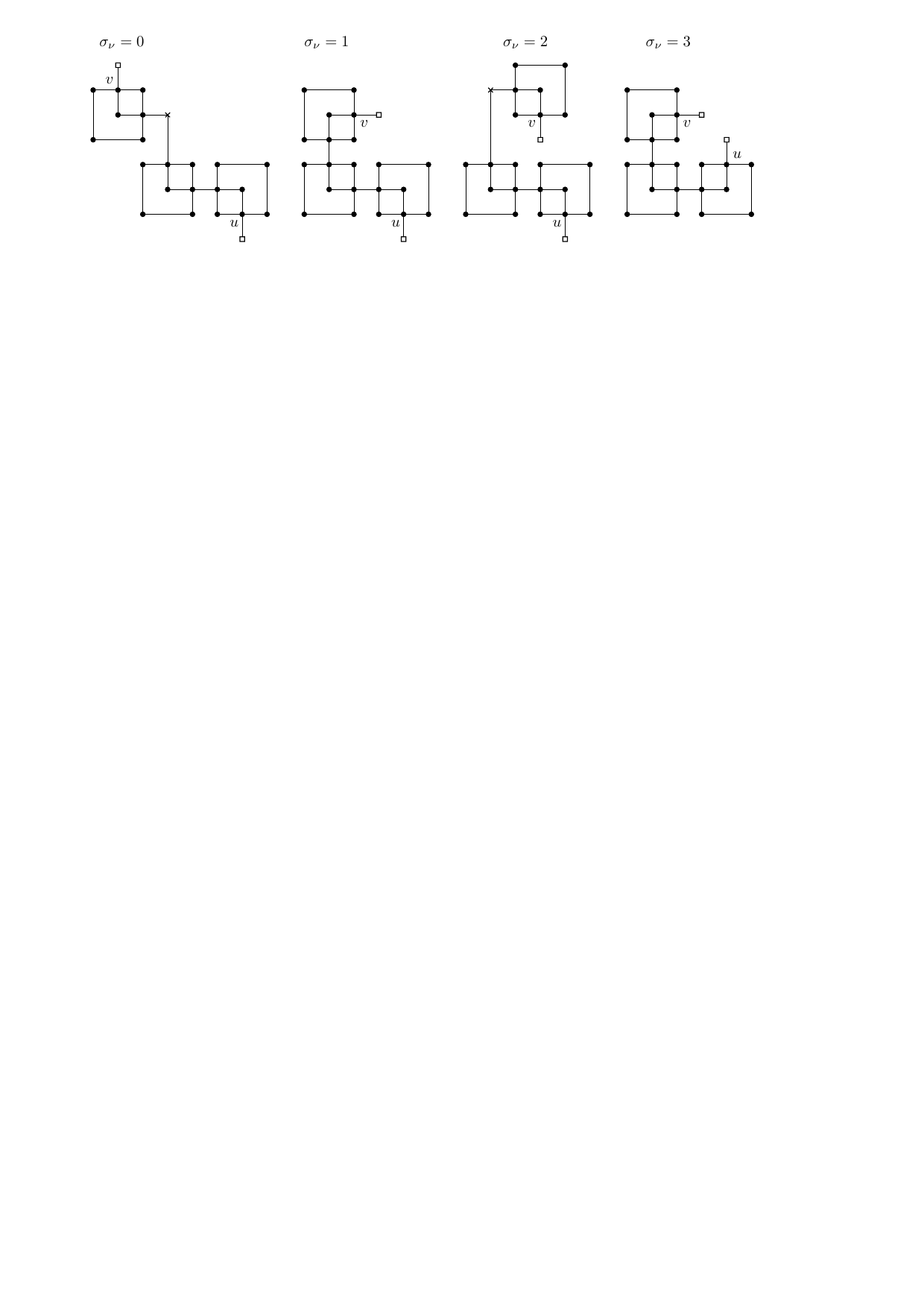}\label{fi:holes}}
	\subfigure[]{\includegraphics[height=0.205\columnwidth,page=1]{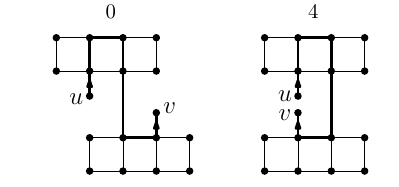}\label{fi:jumping4}}
	\caption{Two components that are: $(a)$ rectilinear planar for spiralities 0 and 2, but not 1 (which requires a bend, shown as a cross); $(b)$ rectilinear planar only for spiralities 0 and 4. In bold, an arbitrary path from the pole $u$ to the pole $v$.}
	\label{fi:difficulties}
\end{figure}

Moreover, it is not obvious how to describe the spirality sets for {\pisps} with degree-4 vertices in $O(1)$ space. See for example the irregular behavior of the spirality sets of the components in~\cref{fi:holes} and~\cref{fi:jumping4}.
Indeed, the absence of regularity is an obstacle to the design of a succinct description based on whether a component is rectilinear planar for consecutive spirality values.
By carefully analyzing the spirality properties of {\pisps}, in \cref{sse:intervals,sse:rpt-ip} we show how to overcome these difficulties and design a linear-time rectilinear planarity testing algorithm for this graph~family.

In the remainder of the paper we assume to work with the basic definition of SPQ$^*$-tree given in \cref{se:preli}, i.e., unlike \cref{se:rpt-general-partial-2-trees}, we will no longer work with normalized SPQ$^*$-trees. This implies in particular that an S-node can have many children and that there cannot be two adjacent S-nodes in an SPQ$^*$-tree.

\subsection{Spirality Lower Bound}\label{sse:lower-bound}

\begin{theorem}\label{th:LowerBound}
	For infinitely many integer values of $n$, there exists an $n$-vertex {\pisp} for which every rectilinear representation has a component with spirality $\Omega(\log n)$.
\end{theorem}
\begin{proof}
	For any arbitrarily large even integer $N \geq 2$, we construct an {\pisp} $G$ with $n=O(3^N)$ vertices such that every rectilinear representation of $G$ has a component with spirality larger than~$N$.
	Let $L = \frac{N}{2}+1$. For any $k \in \{0, \dots, L\}$, let $G_k$ be the SP-graph inductively defined as follows: $(i)$~$G_0$ is a chain of $N+4$ vertices; $(ii)$~$G_1$ is a parallel \textcolor{black}{composition} of three copies of $G_0$, with coincident poles (Fig.~\ref{fi:LB-G1}); $(iii)$ for $k\geq2$, $G_k$ is a parallel composition of three series, each starting and ending with an edge, and having $G_{k-1}$ in the middle (Fig.~\ref{fi:LB-GK}).
	The graph $G$ is obtained by composing in a cycle two chains $p_1$ and $p_2$, of three edges each, with two copies of $G_L$ (Fig.~\ref{fi:LB-G}). The graph $G_L$ for $N=4$ is in Fig.~\ref{fi:LB-GExample}. About the number $n$ of vertices of $G$, let $n_k$ be the number of vertices of $G_k$. We have $n_0 =N+4$ and $n_k = O(3^kN)$ for $k \leq N$.
	Hence, $n_L = O(3^{\frac{N}{2}} N)$ and, since $N \leq 3^{\frac{N}{2}}$, $n_L = O(3^N)$. It follows that $n=O(3^N)$.
	
	Consider first the rooted SPQ$^*$-tree $T_\rho$ of $G$, where $\rho$ represents $p_1$. All the planar embeddings of $G$ encoded by $T_\rho$ have $p_1$ (and $p_2$) on the external face of $G$, and by symmetry of the construction they are all equivalent. Any rectilinear representation $H$ of $G$ with an embedding encoded by $T_\rho$ requires that the restriction of~$H$ to each copy of $G_L$ has spirality zero and, at the same time, the restriction of~$H$ to one of the copies of $G_0$ in $G_L$ has spirality $N+2$. Indeed, due to Lemma~\ref{le:spirality-P-node-3-children}, for each rectilinear representation $H_{k}$ of~$G_{k}$, the leftmost (resp. rightmost) child component of~$H_{k}$ has spirality that is two units larger (resp. smaller) than the spirality of~$H_{k}$. Hence, if there existed a rectilinear representation of~$G_L$ with spirality greater (resp. smaller) than zero, it would contain a representation of a copy of $G_0$ with spirality greater than~$N+2$ (resp. less than $-(N+2)$), which is impossible, as the absolute value of spirality of any copy of $G_0$ is at most $N+2$. See  Fig.~\ref{fi:LB-HExample}, where $N=4$.
	
	On the other hand, if we consider the planar embeddings encoded by $T$ when rooted at a Q$^*$-node whose chain $p$ belongs to a copy of $G_L$, the same argument as above applies to the copy of $G_L$ that does not contain $p$; namely, any rectilinear representation of this copy must contain a component with spirality $N+2$.
\end{proof}

\begin{figure}[tb]
	\centering
	\subfigure[]{
		\includegraphics[height=0.21\columnwidth,page=1]{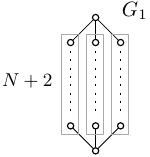}
		\label{fi:LB-G1}
	}
	\hfil
	\subfigure[]{
		\includegraphics[height=0.21\columnwidth,page=3]{LB.pdf}
		\label{fi:LB-GK}
	}
	\hfil
	\subfigure[]{
		\includegraphics[height=0.210\columnwidth,page=4]{LB.pdf}
		\label{fi:LB-G}
	}
	\hfill
	\subfigure[]{
		\includegraphics[height=0.29\columnwidth,page=2]{LB.pdf}
		\label{fi:LB-GExample}
	}
	\hfill
	\subfigure[]{
		\includegraphics[height=0.29\columnwidth,page=2]{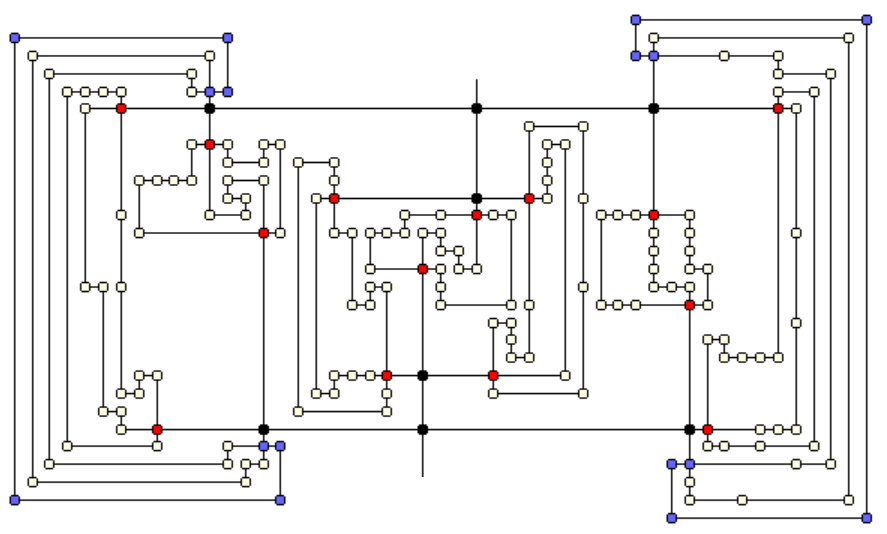}
		\label{fi:LB-HExample}
	}
	\caption{$(a)$--$(c)$ The graph family of Theorem~\ref{th:LowerBound}, with $L=\frac{N}{2}+1$. $(d)$--$(e)$ Graph $G_L$ for $N=4$ and a rectilinear representation of $G_L$ (computed by the GDToolkit library~\cite{DBLP:reference/crc/BattistaD13}); the two components $G_0$ with blue vertices have spirality $N+2=6$ (left) and $-(N+2)=-6$ (right), respectively.}
		
	\label{fi:LB}
\end{figure}

\subsection{Rectilinear Spirality Sets}\label{sse:intervals}


Let $G$ be an {\pisp}, $T$ be the SPQ$^*$-tree of $G$, and $\rho$ be a Q$^*$-node of~$T$. Each pole $w$ of a P-node~$\nu$ of~$T_\rho$ is such that $\outdeg_\nu(w)=1$; if $\nu$ is an S-node, either $\indeg_\nu(w)=1$ or $\outdeg_\nu(w)=1$. In all cases, $\outdeg_\nu(w)=1$ when $\indeg_\nu(w)>1$. 
%
For any node $\nu$ of $T_\rho$, denote by $\Sigma^+_{\nu,\rho}$ (resp. $\Sigma^-_{\nu,\rho}$) the subset of non-negative (resp. non-positive) values of $\Sigma_{\nu,\rho}$. Clearly, $\Sigma_{\nu,\rho} = \Sigma^+_{\nu,\rho} \cup \Sigma^-_{\nu,\rho}$.
Note that, for any value $\sigma_\nu \in \Sigma_{\nu,\rho}$, we also have that $-\sigma_\nu \in \Sigma_{\nu,\rho}$. Indeed, if $G_{\nu,\rho}$ admits a rectilinear representation with spirality $\sigma_\nu$ for some embedding, by flipping this embedding around the poles of $G_{\nu,\rho}$, we can obtain a rectilinear representation of $G_{\nu,\rho}$ with spirality $-\sigma_\nu$.
Hence, $\sigma_\nu \in \Sigma^+_{\nu,\rho}$ if and only if $-\sigma_\nu \in \Sigma^-_{\nu,\rho}$, and we can restrict the study of the properties of~$\Sigma_{\nu,\rho}$~to~$\Sigma^+_{\nu,\rho}$, which we call the \emph{non-negative rectilinear spirality set} of~$\nu$ in $T_\rho$ (or of $G_{\nu,\rho}$).

The main result of this subsection is Theorem~\ref{th:spirality-sets}, which proves that there is a limited number of possible structures for the sets $\Sigma^+_{\nu,\rho}$ of {\pisps}, which can be succinctly described (see also \cref{fi:intervals-new}).
Let $m$ and $M$ be two non-negative integers with $m < M$:
$[M]$ is a \emph{trivial interval} and denotes the singleton $\{M\}$; $[m,M]^1$ is a \emph{jump-1 interval} and denotes the set of all integers in the interval $[m,M]$, i.e., $\{m, m+1, \dots, M-1, M\}$; If $m$ and $M$ have the same parity, $[m,M]^2$ is a \emph{jump-2 interval} and denotes the set of values $\{m, m+2, \dots, M-2, M\}$.

\begin{theorem}\label{th:spirality-sets}
	Let $G$ be a rectilinear planar {\pisp} and let $G_{\nu,\rho}$ be a component of $G$. The non-negative rectilinear spirality set $\Sigma^+_{\nu,\rho}$ of~$G_{\nu,\rho}$ has one the following six structures: $[0]$, $[1]$, $[1,2]^1$, $[0,M]^1$, $[0,M]^2$, $[1,M]^2$.
\end{theorem}

\begin{figure}[tb]
	\centering
	\subfigure[]{\includegraphics[height=0.179\columnwidth,page=1]{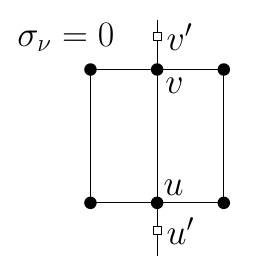}
		\label{fi:intervals-new-1}
	}
	\hfil
	\subfigure[]{\includegraphics[height=0.179\columnwidth,page=6]{intervals-new.pdf}
		\label{fi:intervals-new-6}
	}\\
	\hfil
	\subfigure[]{\includegraphics[height=0.239\columnwidth,page=2]{intervals-new.pdf}
		\label{fi:intervals-new-2}
	}
	\hfil
	\subfigure[]{\includegraphics[height=0.239\columnwidth,page=5]{intervals-new.pdf}
		\label{fi:intervals-new-5}
	}
	\hfil
	\subfigure[]{\includegraphics[height=0.239\columnwidth,page=3]{intervals-new.pdf}
		\label{fi:intervals-new-3}
	}
	\hfil
	\subfigure[]{\includegraphics[height=0.169\columnwidth,page=4]{intervals-new.pdf}
		\label{fi:intervals-new-4}
	}
	\caption{Examples of non-negative spirality sets for each of the six structures in Theorem~\ref{th:spirality-sets}: $(a)$ $[0]$; $(b)$ $[1]$; $(c)$ $[1,2]^1$; $(d)$ $[0,2]^1$; $(e)$ $[1,3]^2$; $(f)$ $[0,2]^2$.}
	\label{fi:intervals-new}
\end{figure}

To prove Theorem~\ref{th:spirality-sets} we give a series of key lemmas that state important properties of the spirality values admitted by the components of an {\pisp}. For brevity, if the non-negative spirality set of $\nu$ is trivial, jump-1, or jump-2, we also say that $\nu$ is trivial, jump-1, or jump-2, respectively.  

\begin{lemma}\label{le:intervalSupportFirst}
	Let $G_{\nu,\rho}$ be a component that admits spirality $\sigma_\nu \geq 2$. The following properties hold: $(a)$ if $\sigma_\nu=2$, $G_{\nu,\rho}$ admits spirality $\sigma'_\nu=0$ or $\sigma'_\nu=1$; $(b)$ if $\sigma_\nu>2$, $G_{\nu,\rho}$ admits spirality $\sigma'_\nu = \sigma_\nu-2$; $(c)$ if $\sigma_\nu = 4$, $G_{\nu,\rho}$ admits spirality~$\sigma'_\nu = 0$.
\end{lemma}
\begin{proof}
	The proof is by induction on the depth of the subtree $T_\rho(\nu)$. In the base case $\nu$ is a Q$^*$-node and the three properties trivially hold for $G_{\nu,\rho}$.
	In the inductive case, $\nu$ is either an S-node, or a P-node with three children, or a P-node with two children. We analyze the two cases separately. The most involved case is when $\nu$ is a P-node with two children.
	
	\begin{itemize}
	\item \textsf{$\nu$ is an S-node.}
	We inductively prove the three properties.
	
	\smallskip\noindent
	\textsf{Proof of Property $(a)$.} If $\nu$ admits spirality $\sigma_\nu=2$, by Lemma~\ref{le:spirality-S-node}, $\nu$ has a child $\mu$ that admits spirality $\sigma_\mu>0$. If $\sigma_\mu=1$, $\mu$ also admits spirality -1, and $\nu$ admits spirality 0. If $\sigma_\mu=2$, by inductively using Property~$(a)$, $\mu$ also admits 0 or 1, and so does $\nu$. If $\sigma_\mu>2$, by inductively using Property~$(b)$, $\mu$ admits spirality $\sigma_\mu-2$, and $\nu$ admits spirality~0. 
	    
	\smallskip\noindent
	\textsf{Proof of Property $(b)$.} If $\nu$ admits spirality $\sigma_\nu >2$, by Lemma~\ref{le:spirality-S-node} one of the following subcases holds:
	$(i)$ $\nu$ has child~$\mu$ that admits spirality $\sigma_\mu>2$; by inductively using Property~$(b)$, $\mu$ admits spirality $\sigma_\mu-2$ and, by \cref{le:spirality-S-node}, $\nu$ admits spirality $\sigma_\nu-2$. $(ii)$  $\nu$ has child $\mu$ that admits spirality 1; in this case $\mu$ admits spirality -1, and $\nu$ admits spirality $\sigma_\nu - 2$. $(iii)$ $\nu$ has two children $\mu_1$ and $\mu_2$ such that $\mu_1$ and $\mu_2$ both admit spirality~2; by inductively using Property~$(a)$, either one of them also admits spirality~0 or they both admit spirality~1. In any case, $\nu$ admits spirality $\sigma_\nu - 2$.
	
	\smallskip\noindent
	\textsf{Proof of Property $(c)$.} If $\nu$ admits spirality $\sigma_\nu = 4$, by \cref{le:spirality-S-node}, one of the following cases holds: $(i)$ $\nu$ has a child $\mu$ that admits spirality 4; if so, by inductively using Property~$(c)$, $\nu$ admits spirality~0. $(ii)$ $\nu$ has a child $\mu$ that admits spirality $\sigma_\mu > 4$; if so, by inductively applying Property~$(b)$ twice, $\mu$ admits spirality $\sigma_\mu - 4$, and hence $\nu$ admits spirality~0. 
	$(iii)$ $\nu$ has two children $\mu_1$ and $\mu_2$, each admitting spirality either~1 or~3; note that if $\mu_i$ $(i \in \{1,2\})$ admits spirality~1, it also admits spirality~-1 and if $\mu_i$ admits spirality~3, it also admits spirality~1 by inductively using Property~$(b)$; this implies that $\nu$ admits spirality $\sigma_\nu-4$.

	\item \textsf{$\nu$ is a P-node with three children.}
	Let $H_{\nu,\rho}$ be a rectilinear representation of $G_{\nu,\rho}$ with spirality $\sigma_\nu$.  Let $\mu_l$, $\mu_c$, and $\mu_r$ be the children of $\nu$ such that $G_{\mu_l,\rho}$, $G_{\mu_c,\rho}$, and  $G_{\mu_r,\rho}$ appear in this left-to-right order in $H_{\nu,\rho}$. By~\cref{le:spirality-P-node-3-children} we have $\sigma_{\mu_l}=\sigma_\nu+2$, $\sigma_{\mu_c}=\sigma_\nu$, and $\sigma_{\mu_r}=\sigma_\nu-2$.
	
	\begin{figure}[tb]
		\centering
		\subfigure[$\sigma_\nu=2$]{\includegraphics[width=0.26\columnwidth,page=1]{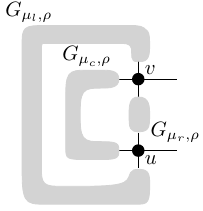}\label{fig:3children2-0-a}}
		\hfil
		\subfigure[$\sigma_\nu=3$]{\includegraphics[width=0.26\columnwidth,page=15]{nojumping3.pdf}\label{fig:3children2-0-b}}
		\hfil
		\subfigure[$\sigma_\nu=4$]{\includegraphics[width=0.26\columnwidth,page=14]{nojumping3.pdf}\label{fig:3children2-0-c}}
		\hfil
		\subfigure[$\sigma'_\nu=0$]{\includegraphics[width=0.26\columnwidth,page=2]{nojumping3.pdf}\label{fig:3children2-0-d}}
		\hfil
		\subfigure[$\sigma'_\nu=1$]{\includegraphics[width=0.26\columnwidth,page=16]{nojumping3.pdf}\label{fig:3children2-0-e}}
		\hfil
		\subfigure[$\sigma'_\nu=2$]{\includegraphics[width=0.26\columnwidth,page=17]{nojumping3.pdf}\label{fig:3children2-0-f}}
		\hfil
		\caption{Illustration of Lemma~\ref{le:intervalSupportFirst} for a P-node with three children.}
		\label{fig:3children2-0}
	\end{figure}
	
	\smallskip \noindent
	\textsf{Proof of Property $(a)$.} If $\sigma_\nu=2$, we have $\sigma_{\mu_l}=4$, $\sigma_{\mu_c}=2$, and $\sigma_{\mu_r}=0$; see~\cref{fig:3children2-0-a}. By inductively applying Property~$(b)$, $\mu_l$ admits spirality~2. Also $\mu_c$ admits spirality~$-2$. Hence, exchanging $G_{\mu_c,\rho}$ with $G_{\mu_r,\rho}$ in the left-to-right order, by~\cref{le:spirality-P-node-3-children} we have that $\nu$ admits spirality 0; see~\cref{fig:3children2-0-b}.
	
	\smallskip\noindent
	\textsf{Proof of Property $(b)$.} If $\sigma_\nu>2$, we distinguish three cases:
	$(i)$ $\sigma_\nu=3$, which implies $\sigma_{\mu_l}=5$, $\sigma_{\mu_c}=3$, and $\sigma_{\mu_r}=1$. By inductively applying Property~$(b)$, $\mu_l$ and $\mu_c$ admit spirality~3 and~1, respectively. Also, $\mu_r$ admits spirality~-1. By~\cref{le:spirality-P-node-3-children}, $\nu$ admits spirality $\sigma_\nu-2$.
	$(ii)$ $\sigma_\nu=4$, which implies $\sigma_{\mu_l}=6$, $\sigma_{\mu_c}=4$, and $\sigma_{\mu_r}=2$. By inductively applying Property~$(b)$, $\mu_l$ admits spirality~4; also, by inductively applying Property~$(c)$, $\mu_c$ admits spirality~0. Hence, exchanging $G_{\mu_c,\rho}$ with $G_{\mu_r,\rho}$ in the left-to-right order, by~\cref{le:spirality-P-node-3-children} we have that $\nu$ admits spirality $\sigma_\nu-2=2$.
	$(iii)$ $\sigma_\nu>4$, which implies $\sigma_{\mu_l}>2$, $\sigma_{\mu_c}>2$, and $\sigma_{\mu_r}>2$. By inductively applying Property~$(b)$, $\mu_l$, $\mu_c$, and $\mu_r$ admit spirality $\sigma_{\mu_l}-2$, $\sigma_{\mu_c}-2$, and $\sigma_{\mu_r}-2$, respectively. Hence $\nu$ admits spirality $\sigma_\nu-2$.  
	
	\smallskip
	\noindent \textsf{Proof of Property $(c)$.} If $\sigma_\nu = 4$, we have 
	$\sigma_{\mu_l}=6$, $\sigma_{\mu_c}=4$, and $\sigma_{\mu_r}=2$. 
	By inductively applying Property~$(b)$ twice, we have that $\mu_l$ admits spirality~2. By inductively applying  Property~$(c)$, we have that $\mu_c$ admits spirality~0. Finally, $\mu_r$ admits spirality~-2. Hence, by \cref{le:spirality-P-node-3-children},  $\nu$ admits spirality $\sigma_\nu-4=0$.
	
	\item \textsf{$\nu$ is a P-node with two children.}
	Let $H_{\nu,\rho}$ be a rectilinear representation of $G_{\nu,\rho}$ with spirality $\sigma_\nu$. Let  $G_{\mu_l,\rho}$ and  $G_{\mu_r,\rho}$ be the left child and the right child of $G_{\nu,\rho}$ in $H_{\nu,\rho}$, respectively. By \cref{le:spirality-P-node-2-children}, we have $\sigma_\nu = \sigma_{\mu_l} - \alpha_{u}^l -  \alpha_{v}^l = \sigma_{\mu_r} +  \alpha_{u}^r +\alpha_{v}^r$. \cref{le:spirality-P-node-2-children} implies $\sigma_{\mu_l}-\sigma_{\mu_r}\in [2,4]$. Without loss of generality, we assume that $\alpha_v^l\ge \alpha_u^l$.
	
	\smallskip\noindent
	\textsf{Proof of Property $(a)$.} If $\sigma_\nu=2$, we analyze separately the cases when $\sigma_{\mu_l}-\sigma_{\mu_r}$ equals 2, 3 or 4 (see \cref{le:spirality-P-node-2-children}). 
	
	\begin{itemize}
	\item Case~$\sigma_{\mu_l}-\sigma_{\mu_r}=2$. There are three subcases:
		$(i)$ $\sigma_{\mu_l}=2$, $\sigma_{\mu_r}=0$, and $\alpha_u^l=\alpha_v^l=0$; see~\cref{fig:2children2-0-a}. For $\alpha_u^l=\alpha_v^l=1$ and $\alpha_u^r=\alpha_v^r=0$, by \cref{le:spirality-P-node-2-children}, $G_{\nu,\rho}$ admits spirality $\sigma_\nu-2=0$; see~\cref{fig:2children2-0-b}.
		$(ii)$ $\sigma_{\mu_l}=3$, $\sigma_{\mu_r}=1$, $\alpha_v^r=0$, and $\alpha_u^l=0$; see~\cref{fig:2children2-0-c}. By inductively using Property~$(b)$, $G_{\mu_l,\rho}$ admits spirality~1. Also, $G_{\mu_r,\rho}$ admit spirality $\sigma_{\mu_r}=-1$. For $\alpha_u^l=\alpha_v^r=0$ (which implies $\alpha_u^r=\alpha_v^l=1$), by \cref{le:spirality-P-node-2-children}, $G_{\nu,\rho}$ admits spirality~0; see~\cref{fig:2children2-0-d}.
		$(iii)$ $\sigma_{\mu_l}=4$, $\sigma_{\mu_r}=2$, and $\alpha_u^r=\alpha_v^r=1$; see~\cref{fig:2children2-0-e}. By inductively using Property~$(c)$, $G_{\mu_l,\rho}$ admits spirality 0. Hence, exchanging $G_{\mu_l,\rho}$ and $G_{\mu_r,\rho}$ in the left-to-right order, and for  
		$\alpha_u^r=\alpha_v^r=0$, by \cref{le:spirality-P-node-2-children}, $G_{\nu,\rho}$ admits spirality~0; see~\cref{fig:2children2-0-f}.
	
	\begin{figure}[tb]
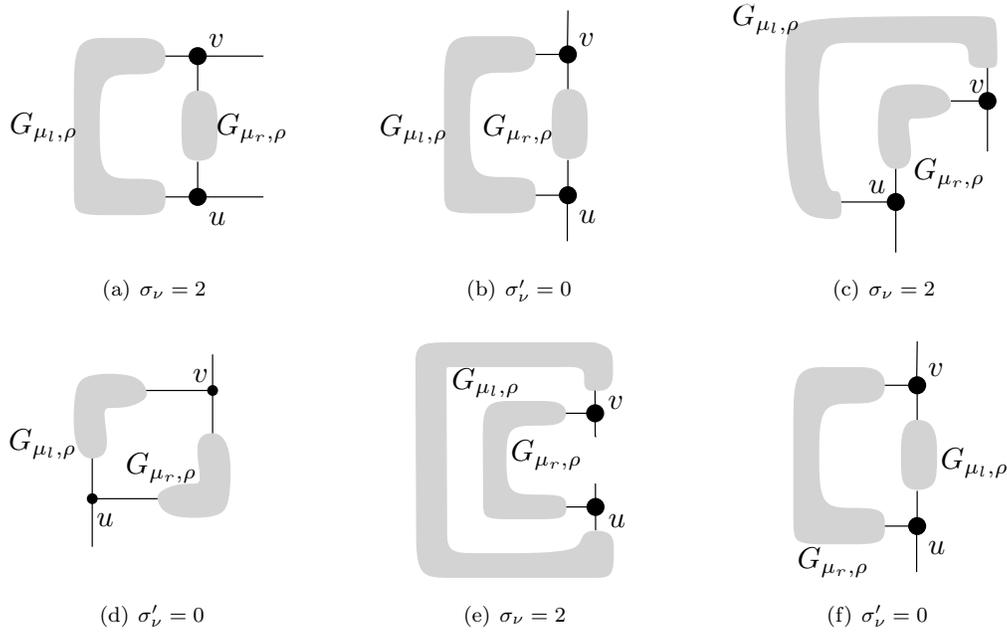

		\centering
		\subfigure[$\sigma_\nu=2$]{\includegraphics[width=0.27\columnwidth,page=3]{nojumping3.pdf}\label{fig:2children2-0-a}
		}
		\hfil
		\subfigure[$\sigma'_\nu=0$]{\includegraphics[width=0.27\columnwidth,page=4]{nojumping3.pdf}\label{fig:2children2-0-b}
		}
		\hfil
		\subfigure[$\sigma_\nu=2$]{\includegraphics[width=0.27\columnwidth,page=5]{nojumping3.pdf}\label{fig:2children2-0-c}
		}
		\hfil
		\subfigure[$\sigma'_\nu=0$]{\includegraphics[width=0.27\columnwidth,page=6]{nojumping3.pdf}\label{fig:2children2-0-d}
		}
		\hfil
		\subfigure[$\sigma_\nu=2$]{\includegraphics[width=0.27\columnwidth,page=7]{nojumping3.pdf}\label{fig:2children2-0-e}
		}
		\hfil
		\subfigure[$\sigma'_\nu=0$]{\includegraphics[width=0.27\columnwidth,page=8]{nojumping3.pdf}\label{fig:2children2-0-f}
		}
		\hfil	
		\caption{Illustration for the proof of Property~$(a)$ of \cref{le:intervalSupportFirst} for a P-component with two children for the case $\sigma_{\mu_l}-\sigma_{\mu_r}=2$.}
		\label{fig:2children2-0}
	\end{figure}
	\noindent
	
	\item Case~$\sigma_{\mu_l}-\sigma_{\mu_r}=3$. In this case, for one of the two poles $\{u,v\}$ of $\nu$, say $v$, we have $\alpha_v^l=\alpha_v^r=1$. There are two subcases:
		$(iv)$ $\sigma_{\mu_l}=3$ and $\sigma_{\mu_r}=0$; see~\cref{fig:2childrenDiffspir34-a}. In this case $\alpha_u^r=1$.
		For $\alpha_u^r=0$, by \cref{le:spirality-P-node-2-children}, $G_{\nu,\rho}$ admits spirality~1; see~\cref{fig:2childrenDiffspir34-b}.
		$(v)$ $\sigma_{\mu_l}=4$ and $\sigma_{\mu_r}=1$; see~\cref{fig:2childrenDiffspir34-c}. By inductively using Property~$(b)$, $G_{\mu_l,\rho}$ admits spirality~2. Also, $G_{\mu_r,\rho}$ admits spirality~-1. For $\alpha_u^r=0$, by \cref{le:spirality-P-node-2-children}, $G_{\nu,\rho}$ admits spirality~0; see~\cref{fig:2childrenDiffspir34-d}.
	
	\begin{figure}[tb]
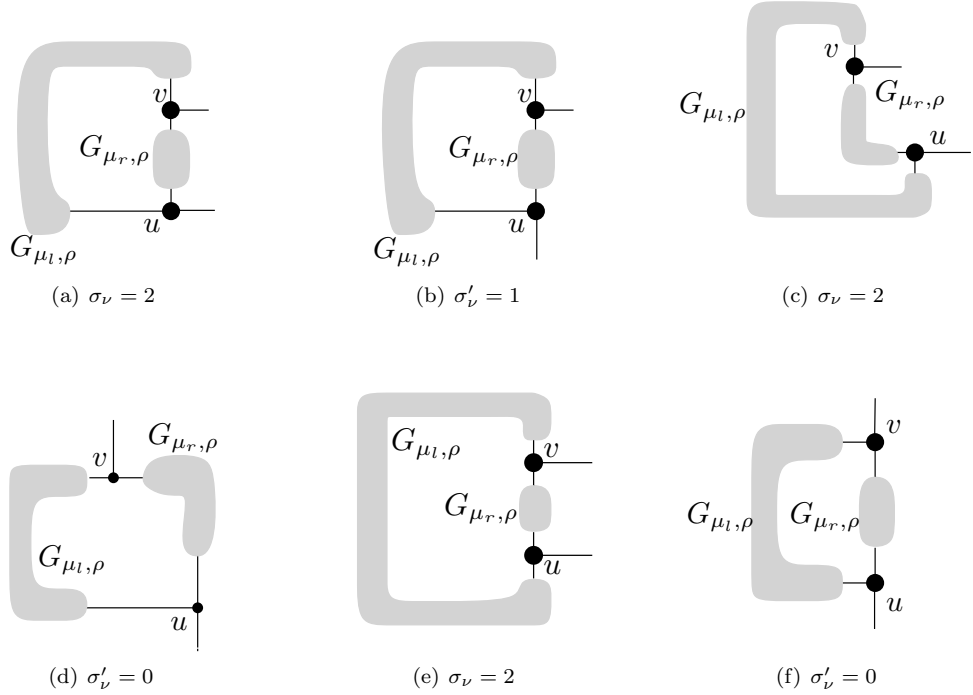

		\centering
		\subfigure[$\sigma_\nu=2$]{\includegraphics[width=0.27\columnwidth,page=9]{nojumping3}\label{fig:2childrenDiffspir34-a}
		}
		\hfil
		\subfigure[$\sigma'_\nu=1$]{\includegraphics[width=0.27\columnwidth,page=10]{nojumping3}\label{fig:2childrenDiffspir34-b}
		}
		\hfil
		\subfigure[$\sigma_\nu=2$]{\includegraphics[width=0.27\columnwidth,page=11]{nojumping3}\label{fig:2childrenDiffspir34-c}
		}
		\hfil
		\subfigure[$\sigma'_\nu=0$]{\includegraphics[width=0.27\columnwidth,page=12]{nojumping3}\label{fig:2childrenDiffspir34-d}
		}
		\hfil
		\subfigure[$\sigma_\nu=2$]{\includegraphics[width=0.27\columnwidth,page=13]{nojumping3}\label{fig:2childrenDiffspir34-e}
		}
		\hfil
		\subfigure[$\sigma'_\nu=0$]{\includegraphics[width=0.27\columnwidth,page=4]{nojumping3}\label{fig:2childrenDiffspir34-f}
		}
		\hfil
		\caption{Illustration for the proof of Property~$(a)$ of \cref{le:intervalSupportFirst} for a P-component with two children for the cases $\sigma_{\mu_l}-\sigma_{\mu_r}=3$ and $\sigma_{\mu_l}-\sigma_{\mu_r}=4$.}\label{fig:2childrenDiffspir34}
	\end{figure}
	
	\item Case~$\sigma_{\mu_l}-\sigma_{\mu_r}=4$; see~\cref{fig:2childrenDiffspir34-e}. We have $\sigma_{\mu_l}=4$ and $\sigma_{\mu_r}=0$. 
	By inductively using Property~$(b)$, $G_{\mu_l,\rho}$ admits spirality~2. For $\alpha_u^l=\alpha_v^l=0$, by \cref{le:spirality-P-node-2-children}, $G_{\nu,\rho}$
	admits spirality~0; see~\cref{fig:2childrenDiffspir34-f}.
	\end{itemize}
	
	\smallskip\noindent
	\textsf{Proof of Property $(b)$.}  If $\sigma_\nu>2$, we have three cases: $\sigma_\nu=3$; $\sigma_\nu=4$; $\sigma_\nu>4$. \begin{itemize}
	\item $\sigma_\nu=3$. We perform a case analysis based on the value of $\sigma_{\mu_l}-\sigma_{\mu_r}$. 
	
	\begin{itemize}
	\item Case~$\sigma_{\mu_l}-\sigma_{\mu_r}=2$. There are three subcases:
		$(i)$ If $\sigma_{\mu_l}=3$ and $\sigma_{\mu_r}=1$, we have $\alpha_u^r=\alpha_v^r=0$. For $\alpha_u^r=\alpha_v^r=1$ and  $\alpha_u^l=\alpha_v^l=0$, by \cref{le:spirality-P-node-2-children}, $G_{\nu,\rho}$ admits spirality $\sigma_\nu - 2 = 1$.
		$(ii)$ If $\sigma_{\mu_l}=4$ and $\sigma_{\mu_r}=2$, by inductively using Property~$(c)$, $G_{\mu_l,\rho}$ admits spirality~0. Exchanging $G_{\mu_l,\rho}$ and $G_{\mu_r,\rho}$ in the left-to-right order and for $\alpha_u^l=\alpha_v^r=0$, by \cref{le:spirality-P-node-2-children}, $G_{\nu,\rho}$ admits spirality $\sigma_\nu - 2 = 1$.  
		$(iii)$ If $\sigma_{\mu_l}=5$ and $\sigma_{\mu_r}=3$, by inductively using Property~$(b)$, $G_{\mu_l,\rho}$ and $G_{\mu_r,\rho}$ admit spirality values $\sigma_{\mu_l}-2$ and $\sigma_{\mu_r}-2$, respectively. Hence $G_{\nu,\rho}$ admits spirality $\sigma_\nu - 2 = 1$.
	
	\item Case~$\sigma_{\mu_l}-\sigma_{\mu_r}=3$. As in proof for Property~$(a)$, assume, without loss of generality, that $\alpha_v^l=\alpha_v^r=1$. The following subcases hold:
		$(iv)$ If $\sigma_{\mu_l}=4$ and $\sigma_{\mu_r}=1$, by inductively using Property~$(b)$, $G_{\mu_l,\rho}$ admits spirality~2. Also, $G_{\mu_r}$ admits spirality~-1. For $\alpha_u^l=0$, by \cref{le:spirality-P-node-2-children}, $G_{\nu,\rho}$ admits spirality $\sigma_\nu- 2 = 1$.
		$(v)$ If $\sigma_{\mu_l}=5$ and $\sigma_{\mu_r}=2$, by inductively using Property~$(a)$, $G_{\mu_r}$ admits spirality either~1 or~0. Suppose first that $G_{\mu_r}$ admits spirality~1. By inductively using Property~$(b)$, $G_{\mu_l,\rho}$ admits spirality~3. For $\alpha_v^r=\alpha_u^r=0$, by \cref{le:spirality-P-node-2-children}, $G_{\nu,\rho}$ admits spirality $\sigma_\nu- 2 = 1$. Suppose now that $G_{\mu_r,\rho}$ admits spirality~0. As before, $G_{\mu_l,\rho}$ admits spirality~3. For $\alpha_u^r=0$, we have again that $G_{\nu,\rho}$ admits spirality $\sigma_\nu- 2 = 1$; see~\cref{fig:2childrenDiffspir34-b}. 
	
	\item Case~$\sigma_{\mu_l}-\sigma_{\mu_r}=4$. We have $\sigma_{\mu_l}=5$ and $\sigma_{\mu_r}=1$. By inductively using Property~$(b)$, $G_{\mu_l,\rho}$ admits spirality~3, and then for $\alpha_v^r=0$ and $\alpha_u^r=0$, we have that $G_{\nu,\rho}$ admits spirality $\sigma_\nu- 2 = 1$.
	\end{itemize}
	
	\item $\sigma_\nu=4$. As before, the analysis is based on the value of $\sigma_{\mu_l}-\sigma_{\mu_r}$. 
	
	\begin{itemize}
	\item Case $\sigma_{\mu_l}-\sigma_{\mu_r}=2$. 
		$(vi)$ If $\sigma_{\mu_l}=4$ and $\sigma_{\mu_r}=2$, we have $\alpha_u^l=\alpha_v^l=0$. For $\alpha_u^l=\alpha_v^l=1$ and $\alpha_u^r=\alpha_v^r=0$, by  \cref{le:spirality-P-node-2-children}, $G_{\nu,\rho}$ admits spirality $\sigma_\nu - 2 =2$.
		$(vii)$ If $\sigma_{\mu_l}=5$ and $\sigma_{\mu_r}=3$ or $\sigma_{\mu_l}=6$ and $\sigma_{\mu_r}=4$, by inductively using Property~$(b)$, $G_{\mu_l,\rho}$ and $G_{\mu_r,\rho}$ admit spirality values $\sigma_{\mu_l} - 2$ and $\sigma_{\mu_r} - 2$, respectively. Hence, $G_{\nu,\rho}$ admits spirality $\sigma_\nu - 2 = 2$.    
	
	\item Case $\sigma_{\mu_l}-\sigma_{\mu_r}=3$. 
		$(viii)$ If $\sigma_{\mu_l}=5$ and $\sigma_{\mu_r}=2$, by inductively using Property~$(b)$, $G_{\mu_l}$ admits spirality~3. Also, by inductively using Property~$(a)$, $G_{\mu_r}$ admits spirality either~0 or~1. In the first case, for $\alpha_u^l=0$, we have that $G_{\nu,\rho}$ admits spirality $\sigma_\nu - 2 = 2$; see~\cref{fig:2childrenDiffspir34-a} the property holds for $\alpha_u^r=0$; see~\cref{fig:2childrenDiffspir34-c}.
		
		$(ix)$ If $\sigma_{\mu_l}=6$ and $\sigma_{\mu_r}=3$, by inductively using Property~$(b)$, $G_{\mu_l,\rho}$ and $G_{\mu_r,\rho}$ admit spirality values $\sigma_{\mu_l} - 2$ and $\sigma_{\mu_r} - 2$, respectively. Hence, $G_{\nu,\rho}$ admits spirality $\sigma_\nu - 2 = 2$.
	
	\item Case $\sigma_{\mu_l}-\sigma_{\mu_r}=4$. We have $\sigma_{\mu_l}=6$ and $\sigma_{\mu_r}=2$. By Property~$(b)$, $G_{\mu_l}$ admits spirality~4, and by for $\alpha_u^l=\alpha_v^l=1$, $G_{\nu,\rho}$ admits spirality $\sigma_\nu - 2 = 2$; see~\cref{fig:2children2-0-e}.  
	\end{itemize}
	
	\item $\sigma_\nu>4$. We always have $\sigma_{\mu_r}>2$ (and $\sigma_{\mu_l}>2$). 
	By inductively using Property~$(b)$, $G_{\mu_l,\rho}$ and $G_{\mu_r,\rho}$ admit spirality values $\sigma_{\mu_l} - 2$ and $\sigma_{\mu_r} - 2$, respectively. Hence, $G_{\nu,\rho}$ admits spirality $\sigma_\nu - 2 = 2$. 
	\end{itemize}
	
	\smallskip\noindent
	\textsf{Proof of Property $(c)$.} If $\sigma_\nu=4$, we still perform a case analysis based on $\sigma_{\mu_l}-\sigma_{\mu_r}$. 
	
	\begin{itemize}
	\item Case $\sigma_{\mu_l}-\sigma_{\mu_r}=2$. There are three subcases.
		$(i)$ Suppose $\sigma_{\mu_l}=4$ and $\sigma_{\mu_r}=2$. By inductively using Property~$(c)$, $G_{\mu_l}$ admits spirality~0. Exchanging $G_{\mu_l,\rho}$ and $G_{\mu_r,\rho}$, and for $\alpha_u^l=\alpha_v^l=0$, we have that $G_{\nu,\rho}$ admits spirality~0.
	    %
		$(ii)$ Suppose $\sigma_{\mu_l}=5$ and $\sigma_{\mu_r}=3$. By inductively using 
		Property~$(b)$ (applied twice), $G_{\mu_l,\rho}$ and $G_{\mu_r,\rho}$ admit spirality~1; hence,  $G_{\mu_r}$ also admits -1. For $\alpha_v^r=\alpha_u^l=0$, we have that $G_{\nu,\rho}$ admits spirality~0; see~\cref{fig:2children2-0-d}.
		%
		$(iii)$ Suppose $\sigma_{\mu_l}=6$ and $\sigma_{\mu_r}=4$. 
		By inductively using Property~$(b)$ (applied twice), $G_{\mu_l,\rho}$ admits spirality~2, and hence, by inductively using Property~$(c)$, it also admits spirality~0. For $\alpha_u^l=\alpha_v^l=0$, $G_{\nu,\rho}$ admits spirality~0; see~\cref{fig:2children2-0-b}.
	
	\item Case $\sigma_{\mu_l}-\sigma_{\mu_r}=3$. We have the following subcases. 
		$(iv)$ Suppose $\sigma_{\mu_l}=5$ and $\sigma_{\mu_r}=2$. By inductively using 
		Property~$(b)$ (applied twice), $G_{\mu_l,\rho}$ admits spirality~1. Also, $G_{\mu_l,\rho}$ admits spirality~-2. For $\alpha_v^l=0$, $G_{\nu,\rho}$ admits spirality~0.
		%
		$(v)$ Suppose $\sigma_{\mu_l}=6$ and $\sigma_{\mu_r}=3$. By inductively using 
		Property~$(b)$ (applied twice), $G_{\mu_l,\rho}$ admits spirality~2 and $G_{\mu_r,\rho}$ admits spirality~1, and hence also spirality~-1. For $\alpha_u^l=0$, we have that $G_{\nu,\rho}$ admits spirality~0.
	
	\item Case $\sigma_{\mu_l}-\sigma_{\mu_r}=4$. We have $\sigma_{\mu_l}=6$ and $\sigma_{\mu_r}=4$. By inductively using Property~$(b)$ (applied twice), $G_{\mu_l,\rho}$ admits spirality~2, and by inductively using Property~$(c)$, $G_{\mu_r,\rho}$ admits spirality~0. For $\alpha_u^l=\alpha_v^l=0$, $G_{\nu,\rho}$ admits spirality~0; see~\cref{fig:2children2-0-b}.
	\end{itemize}
	\end{itemize} 
	
	This concludes the analysis for the different types of nodes in the SPQ$^*$-tree of $G$.
\end{proof}

\bigskip \noindent \cref{le:intervalSupportFirst} immediately implies the following. 


\begin{corollary}\label{co:intervalSupport}
	If $G_{\nu,\rho}$ admits spirality $\sigma_\nu > 2$, then $G_{\nu,\rho}$ admits spirality for every value in $[1,\sigma_\nu]^2$ when $\sigma_\nu$ is odd, or for every value in $[0,\sigma_\nu]^2$ when $\sigma_\nu$ is even.
\end{corollary}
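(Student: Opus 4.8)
The plan is to derive the corollary by iterating property~(b) of Lemma~\ref{le:intervalSupportFirst}, which lets us descend by exactly two units of spirality at a time, and then to patch the even case using property~(c). The whole argument is a finite downward chaining of the three properties, exploiting only the fact that the set of admitted spirality values of $G_{\nu,\rho}$ is closed under the operations the lemma guarantees.

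First I would treat the case where $\sigma_\nu$ is odd, so $\sigma_\nu \geq 3$, and show that every value in $[1,\sigma_\nu]^2 = \{1,3,\dots,\sigma_\nu\}$ is admitted. The value $\sigma_\nu$ itself is admitted by hypothesis. For the step, if an odd value $\sigma \geq 3$ is admitted, then since $\sigma > 2$, property~(b) guarantees that $\sigma-2$ is admitted as well. Iterating, $G_{\nu,\rho}$ admits the whole descending chain $\sigma_\nu \to \sigma_\nu - 2 \to \cdots \to 3 \to 1$, which is exactly the set of odd values from $1$ up to $\sigma_\nu$.

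Next I would treat the case where $\sigma_\nu$ is even, so $\sigma_\nu \geq 4$. The identical iteration of property~(b) yields all even values from $2$ up to $\sigma_\nu$, namely the chain $\sigma_\nu \to \sigma_\nu - 2 \to \cdots \to 4 \to 2$. The only value of $[0,\sigma_\nu]^2$ not yet covered is $0$, and this is the one place where property~(b) alone does not suffice: applying~(b) to the value $2$ is illegal (as $2 \not> 2$), and property~(a) only tells us that a component of spirality $2$ admits $0$ \emph{or} $1$, which does not pin down $0$. Instead I would invoke property~(c) directly: since $4 \in [0,\sigma_\nu]^2$ is admitted---either because $\sigma_\nu = 4$, or because it was reached by the iteration of~(b)---property~(c) immediately yields that $0$ is admitted. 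Combining, $G_{\nu,\rho}$ admits every value in $[0,\sigma_\nu]^2 = \{0,2,\dots,\sigma_\nu\}$.

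Since the argument is nothing more than a finite concatenation of the three statements of Lemma~\ref{le:intervalSupportFirst}, I do not expect any genuine obstacle; the only subtlety to be careful about is to route the even case through property~(c) rather than trying to reach $0$ via~(b) and then~(a), which would fail to force the value $0$.
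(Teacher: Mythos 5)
Your proof is correct and is precisely the derivation the paper intends: the paper gives no explicit argument, stating only that Lemma~\ref{le:intervalSupportFirst} ``immediately implies'' the corollary, and your finite downward chaining of property~(b) plus the use of property~(c) to reach $0$ in the even case is exactly that immediate implication, spelled out. Your observation that property~(a) alone cannot force the value $0$ from spirality $2$ correctly identifies why property~(c) is stated separately in the lemma.
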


\noindent The next lemma states an interesting property that is used to prove \cref{le:intervalSupportThird}.

\begin{lemma}\label{le:intervalSupportSecond}
	Let $\nu$ be a P-node with two children and suppose that $G_{\nu,\rho}$ admits spirality $\sigma_\nu \geq 0$. There exists a rectilinear representation of $G_{\nu,\rho}$ with spirality $\sigma_\nu$ such that the difference of spirality between the left child component and the right child component of $G_{\nu,\rho}$ is either 2 or 3.
\end{lemma}
\begin{proof}
	Let $H_{\nu,\rho}$ be any rectilinear representation of $G_{\nu,\rho}$ with spirality~$\sigma_\nu$. Also, let $\sigma_{\mu_l}$ and $\sigma_{\mu_r}$ be the spiralities of the left child component $H_{\mu_l,\rho}$ and of the right child component $H_{\mu_r,\rho}$ of $H_{\nu,\rho}$, respectively. Let $G_{\mu_l,\rho}$ and $G_{\mu_r,\rho}$ be the underlying graphs of $H_{\mu_l,\rho}$ and $H_{\mu_r,\rho}$.    
	By Lemma~\ref{le:spirality-P-node-2-children}, we have $2 \leq \sigma_{\mu_l}-\sigma_{\mu_r} \leq 4$. We show that if $\sigma_{\mu_l}-\sigma_{\mu_r} = 4$, one can construct a representation $H'_{\nu,\rho}$ of $G_{\nu,\rho}$ with spirality $\sigma'_\nu=\sigma_\nu$ such $\sigma'_{\mu_l}-\sigma'_{\mu_r}\in [2,3]$. 
	Since $\sigma_{\mu_l}-\sigma_{\mu_r}=4$, we have 	${\alpha}_u^l={\alpha}_v^l={\alpha}_u^r={\alpha}_v^r=1$, where $u$ and $v$ are the poles of $\nu$. We distinguish between two cases:
	\begin{itemize}
	\item Case $\sigma_\nu=0$. We have $\sigma_{\mu_l}=2$ and $\sigma_{\mu_r}=-2$; see~\cref{fi:spir-nodiff4-a}. By Property~$(a)$ of \cref{le:intervalSupportFirst}, both $G_{\mu_l,\rho}$ and $G_{\mu_r,\rho}$ admit spirality~0 or~1. Assume first that $G_{\mu_l,\rho}$ admits spirality~1. We can construct $H'_{\nu,\rho}$ by merging in parallel two representations $H'_{\mu_l,\rho}$ of $G_{\mu_l,\rho}$ and $H'_{\mu_r,\rho}$ of $G_{\mu_r,\rho}$ (in the same left-to-right order they have in $H_{\nu,\rho}$) in such a way that: $H'_{\mu_l,\rho}$ has spirality $\sigma'_{\mu_l}=1$, $\sigma'_{\mu_r}=\sigma_{\mu_r}=-2$, ${\alpha'}_u^l=0$, and ${\alpha'}_v^l={\alpha'}_u^r={\alpha'}_v^r=1$; see~\cref{fi:spir-nodiff4-b}. Assume now that $G_{\mu_l,\rho}$ does not admit spirality~1 but admits spirality~0. We can construct $H'_{\nu,\rho}$ by merging in parallel two representations $H'_{\mu_l,\rho}$ of $G_{\mu_l,\rho}$ and $H'_{\mu_r,\rho}$ of $G_{\mu_r,\rho}$ (in the same left-to-right order they have in $H_{\nu,\rho}$) in such a way that: $H'_{\mu_l,\rho}$ has spirality $\sigma'_{\mu_l}=0$, $\sigma'_{\mu_r}=\sigma_{\mu_r}=-2$, ${\alpha'}_u^l={\alpha'}_v^l=0$, and  ${\alpha'}_u^r={\alpha'}_v^r=1$; see~\cref{fi:spir-nodiff4-c}. In both cases $H'_{\nu,\rho}$ has spirality $\sigma'_\nu = \sigma_\nu$ and $\sigma'_{\mu_l}-\sigma'_{\mu_r} \in [2,3]$.  
	
	\item Case $\sigma_\nu>0$. We have $\sigma_{\mu_l} \geq 3$ (because $\sigma_\nu = \sigma_{\mu_l} - {\alpha}_u^l - {\alpha}_v^l$ by Lemma~\ref{le:spirality-P-node-2-children}, and ${\alpha}_u^l + {\alpha}_v^l = 2$ by hypothesis); see~\cref{fi:spir-nodiff4-d}, where $\sigma_\nu=2$. Hence, by Property~$(b)$ of \cref{le:intervalSupportFirst}, $G_{\mu_l,\rho}$ admits spirality $\sigma'_{\mu_l}=\sigma_{\mu_l}-2$. We can construct $H'_{\nu,\rho}$ by merging in parallel two representations $H'_{\mu_l,\rho}$ of $G_{\mu_l,\rho}$ and $H'_{\mu_r,\rho}$ of $G_{\mu_r,\rho}$ (in the same left-to-right order they have in $H_{\nu,\rho}$) in such a way that: $H'_{\mu_l,\rho}$ has spirality $\sigma'_{\mu_l}=\sigma_{\mu_l}-2$, $\sigma'_{\mu_r}=\sigma_{\mu_r}$, ${\alpha'}_u^l={\alpha'}_v^l=0$, and ${\alpha'}_u^r={\alpha'}_v^r=1$. This way, $H_{\nu,\rho}$ has spirality $\sigma'_\nu=\sigma_\nu$ and $\sigma'_{\mu_l}-\sigma'_{\mu_r} = 2$;  see~\cref{fi:spir-nodiff4-e}, where $\sigma_\nu=2$.
	\end{itemize}
This concludes the analysis for different values of $\sigma_\nu$.
\end{proof}

\begin{figure}[tb]
	\centering
	\subfigure[]{\label{fi:spir-nodiff4-a}\includegraphics[width=0.3\columnwidth,page=1]{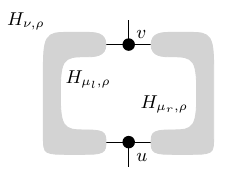}}
	\hfil
	\subfigure[]{\label{fi:spir-nodiff4-b}\includegraphics[width=0.3\columnwidth,page=2]{spir-nodiff4}}
	\hfil
	\subfigure[]{\label{fi:spir-nodiff4-c}\includegraphics[width=0.3\columnwidth,page=3]{spir-nodiff4}}
	\hfil
	\subfigure[]{\label{fi:spir-nodiff4-d}\includegraphics[width=0.3\columnwidth,page=4]{spir-nodiff4}}
	\hfil
	\subfigure[]{\label{fi:spir-nodiff4-e}\includegraphics[width=0.3\columnwidth,page=5]{spir-nodiff4}}
	\caption{Illustration for the proof of Lemma~\ref{le:intervalSupportSecond}.}\label{fi:external-face}
\end{figure}





\begin{lemma}\label{le:intervalSupportThird}
	Let $\Sigma^+_{\nu,\rho}$ be a non-trivial interval with maximum value $M > 2$. If $\Sigma^+_{\nu,\rho}$ contains an integer with parity different from that of $M$, then $\Sigma^+_{\nu,\rho}=[0,M]^1$.
\end{lemma}
\begin{proof}
	Assume that $M$ is odd (if $M$ is even the proof is similar). By hypothesis $M \geq 3$. 
	We prove that, if $\Sigma^+_\nu$ contains a value $\sigma_\nu$ whose parity is different from the one of $M$, then $\Sigma_{\nu,\rho}^+=[0,M]^1$. The proof is by induction on the depth of the subtree $T_\rho(\nu)$.
	If $\nu$ is a Q$^*$-node, then $\Sigma_{\nu,\rho}^+=[0,M]^1$ and the statement trivially holds. In the inductive case,
	$\nu$ is either an S-node or a P-node. By \cref{co:intervalSupport}, $G_{\nu,\rho}$ admits spirality $\sigma'_\nu$ for every $\sigma'_\nu \in [1,M]^2$.
	We analyze separately the case when $\nu$ is an S-node, a P-node with three children, or a P-node with two children.
	
	\begin{itemize}
	\item \textsf{$\nu$ is an S-node}.    
	We prove that for any value $\sigma'_\nu \in [1,M]^2$, $G_{\nu,\rho}$ also admits spirality $\sigma'_\nu - 1$. This immediately implies that $\Sigma_{\nu,\rho}^+=[0,M]^1$. We first prove the following claim:
	\begin{claim}
		There exists a child $\mu$ of $\nu$ in $T_{\rho}$ that is jump-1.
	\end{claim}
	\begin{claimproof}
		Let $H_{\nu,\rho}$ be a representation of $G_{\nu,\rho}$ with spirality $\sigma_\nu$ and let $H_{\nu,\rho}'$ be a representation of $G_{\nu,\rho}$ with spirality $\sigma'_\nu = \sigma_\nu+1$. Note that $\sigma'_\nu \in [1,M]^2$, thus
		$H_{\nu,\rho}'$ exists. By \cref{le:spirality-S-node}, 
		since the spiralities of $H_{\nu,\rho}$ and of $H'_{\nu,\rho}$ have different parities, $\nu$ must have a child $\mu$ such that $H_{\mu,\rho}$ has odd spirality in $H_{\nu,\rho}$ and even spirality in $H'_{\nu,\rho}$, or vice versa. 
		Let $M_\mu$ be the maximum spirality admitted by $\mu$. Since $\mu$ admits both an even and an odd value of spirality, we have: If $M_\mu=1$, $\mu$ admits~$0$ and $\Sigma_{\mu,\rho}^+=[0,1]^1$; if $M_\mu= 2$, by Property~$(a)$ of \cref{le:intervalSupportFirst} and since $\mu$ admits spirality $1$, either $\Sigma_{\mu,\rho}^+=[0,2]^1$ or $\Sigma_{\mu,\rho}^+=[1,2]^1$; if $M_\mu>2$, by inductive hypotesis $\Sigma_{\mu,\rho}^+= [0,M_\mu]^1$. Hence, $\mu$ is always jump-1. 
	\end{claimproof}
	
	Let $\mu$ be a child of $\nu$ having a jump-1 interval, which always exists by the previous claim. For any value $\sigma'_\nu \in [1,M]^2$, let $H'_{\nu,\rho}$ be a rectilinear representation of $G_{\nu,\rho}$ with spirality $\sigma'_\nu$. 
	Let $\sigma_\mu$ be the spirality of the restriction of~$H'_{\nu,\rho}$ to~$G_{\mu,\rho}$. 
	Suppose first that $\sigma_\mu>-M_\mu$. Since by inductive hypothesis $\mu$ admits spirality $\sigma_\mu-1$ then, by \cref{le:spirality-S-node}, $\nu$ admits $\sigma'_\nu-1$.
	Suppose now that $\sigma_\mu=-M_\mu$. Since $\sigma'_\nu > 0$, by \cref{le:spirality-S-node}, there exists a child $\phi \neq \mu$ of $\nu$ such that the restriction of $H'_{\nu,\rho}$ to $G_{\phi,\rho}$ has spirality $\sigma_{\phi}> 0$.
	Observe that $\phi$ also admits either spirality $\sigma_{\phi}-1$ or spirality $\sigma_{\phi}-2$. Indeed, if $\sigma_{\phi}> 2$, then $\phi$ admits spirality $\sigma_{\phi}-2$ by Property~$(b)$ of \cref{le:intervalSupportFirst}; if $\sigma_{\phi}=2$ it also admits spirality 0 or 1 by Property~$(a)$ of \cref{le:intervalSupportFirst}; if $\sigma_{\phi}=1$ then it also admits spirality~-1.    		
	In the case that $\phi$ admits spirality $\sigma_{\phi}-1$, by \cref{le:spirality-S-node}, $\nu$ admits spirality $\sigma'_\nu-1$. In the case that $\phi$ admits spirality $\sigma_{\phi}-2$, then $\mu$ admits spirality $\sigma_\mu+1$ (because $\mu$ is jump-1 and we are assuming  $\sigma_\mu=-M_\mu < M_\mu$), and hence $\nu$ admits spirality $\sigma_\phi-2+1=\sigma'_\nu-1$.
	
	\item \textsf{$\nu$ is a P-node with three children}. In this case every child $\mu$ of $\nu$ is jump-1.
	Indeed, since $\nu$ admits an even and an odd value of spirality, by \cref{le:spirality-P-node-3-children}, the same holds for~$\mu$.  
	As for the case of an S-node, if $M_\mu$ is the maximum value of spirality admitted by $\nu$, we have the following: If $M_\mu=1$, $\Sigma_\mu^+=[0,1]^1$; if $M_\mu= 2$, either $\Sigma_{\mu,\rho}^+=[0,2]^1$ or $\Sigma_{\mu,\rho}^+=[1,2]^1$; if $M_\mu>2$, by inductive hypotesis $\Sigma_{\mu,\rho}^+= [0,M_\mu]^1$. Hence, $\mu$ is jump-1.
	
	Assume first that $M > 3$. Let $H_{\nu,\rho}$ be a representation of $G_{\nu,\rho}$ with spirality~$M$. By \cref{le:spirality-P-node-3-children}, every child $\mu$ of $\nu$, is such that the restriction of $H_{\nu,\rho}$ to $G_{\mu,\rho}$ has spirality $\sigma_{\mu} \geq 2$. Since $\mu$ is jump-1, then $\mu$ also admits spirality $\sigma_{\mu} - 1$. This implies that, $\nu$ admits a representation with spirality $M-1$. Since $M-1>2$, by \cref{co:intervalSupport}, $\nu$ admits all values of spirality in the set $[0,M-1]^2$, and hence  
	$\Sigma_{\nu,\rho}^+=[0,M-1]^2 \cup [1,M]^2 = [0,M]^1$.
	
	Assume now that $M=3$. Let $H_{\nu,\rho}$ be a representation of $G_{\nu,\rho}$ with spirality~$M$. The restrictions of $H_{\nu,\rho}$ to the three child components $G_{\mu_l,\rho}$, $G_{\mu_c,\rho}$, and $G_{\mu_r,\rho}$ of $G_{\nu,\rho}$, have spiraly values~5, 3, and 1, respectively. Since $\mu_l$ is jump-1, by the inductive hypothesis it admits spirality for all values in the set $[0,5]^1$. Similarly, since $\mu_c$ is jump-1, by the inductive hypothesis it admits spirality for all values in the set $[0,3]^1$.  
	Also, since $\mu_r$ is jump-1, it admits spirality 0 or 2. If $\mu_r$ admits spirality~0, then $\nu$ admits spirality $M-1=2$ for a representation in which $G_{\mu_l,\rho}$, $G_{\mu_c,\rho}$, and $G_{\mu_r,\rho}$ appear in this left-to-right order (and have spirality values~4, 2, and 0, respectively). If $\mu_r$ admits spirality 2 but not spirality~0, then $\nu$ admits spirality $M-1=2$ for a representation in which $G_{\mu_l,\rho}$, $G_{\mu_r,\rho}$, and $G_{\mu_c,\rho}$ appear in this order (and again have spirality values~4, 2, and 0, respectively). Hence, so far we have proved that $\nu$ admits spirality for all values in the set $[1,3]^1$. Finally, as showed in the proof of Property~$(a)$ of \cref{le:intervalSupportFirst} for a P-node with three children, the fact that $\nu$ admits spirality~2 implies that it also admits spirality~0 (see~\cref{fig:3children2-0-a,fig:3children2-0-d}).  
	
	\item \textsf{$\nu$ is a P-node with two children}. Let $H_{\nu,\rho}$ be a rectilinear representation of $G_{\nu,\rho}$ with spirality $M$. Let $\sigma_{\mu_l}$ and $\sigma_{\mu_r}$ be the spirality values of the restrictions of $H_{\nu,\rho}$ to the left and right child components $G_{\mu_l,\rho}$ and $G_{\mu_r,\rho}$ of $G_{\nu,\rho}$, respectively. Also, let $\{u,v\}$ be the poles of $\nu$.
	By \cref{le:intervalSupportSecond}, we can assume $\sigma_{\mu_l}-\sigma_{\mu_r}\in [2,3]$, which implies that there exists $w\in \{u,v\}$ such that $\alpha_w^l=0$, as $H_{\nu,\rho}$ has the maximum value of spirality admitted by $\nu$. By \cref{le:spirality-P-node-2-children}, for $\alpha_w^l=1$ and $\alpha_w^r=0$ we can obtain a rectilinear representation of $G_{\nu,\rho}$ with spirality $M-1$.
	If $M>3$ then $M-1>2$ and, by \cref{co:intervalSupport}, $\nu$ admits spirality for all values in the set $[0,M-1]^2$, and hence $\Sigma_{\mu,\rho}^+=[0,M-1]^2 \cup [1,M]^2 = [0,M]^1$. 
	If $M=3$, by Property~$(a)$ of \cref{le:intervalSupportFirst}, we have either $[0,3]^1\in \Sigma_{\mu,\rho}^+$ or $[1,3]^1\in \Sigma_{\mu,\rho}^+$. In the former case, $\Sigma_{\mu}^+=[0,M]^1$. In the latter case, using a case analysis similar to the proof of Property~$(a)$ of \cref{le:intervalSupportFirst} for the P-nodes with two children, it can be proved that~0 is also admitted by $\nu$, and again $\Sigma_{\mu}^+=[0,M]^1$.
	\end{itemize}
This concludes the analysis for different types of nodes in the SPQ$^*$-tree of $G$.
\end{proof}

\noindent We are now ready to prove the main result of this subsection.

\paragraph{Proof of Theorem~\ref{th:spirality-sets}.}	
Let $M$ be the maximum value in $\Sigma^+_{\nu,\rho}$. If $M=0$ then $\Sigma^+_{\nu,\rho} = [0]$. If $M=1$ then either $\Sigma^+_{\nu,\rho} = [1]$ or $\Sigma^+_{\nu,\rho} = [0,1]^1 = [0,M]^1$. Suppose $M=2$; by Property~$(a)$ of \cref{le:intervalSupportFirst}, $G_{\nu,\rho}$ admits spirality~0, or 1, or both, i.e., $\Sigma^+_{\nu,\rho} = [0,2]^2=[0,M]^2$, or $\Sigma^+_{\nu,\rho} = [1,2]^1$, or $\Sigma^+_{\nu,\rho} = [0,2]^1=[0,M]^1$. Finally, suppose that $M > 2$. If $G_{\nu,\rho}$ admits a value of spirality whose parity is different from $M$, by Lemma~\ref{le:intervalSupportThird} $\Sigma^+_{\nu,\rho}=[0,M]^1$; else, by \cref{co:intervalSupport}, either $\Sigma^+_{\nu,\rho}=[1,M]^2$ (if $M$ is odd) or $\Sigma^+_{\nu,\rho}=[0,M]^2$ (if $M$ is even).


\subsection{Rectilinear Planarity Testing}\label{sse:rpt-ip}

Let $G$ be an {\pisp} that is not a simple cycle,~$T$ be its SPQ$^*$-tree, and $\{\rho_1, \dots, \rho_h\}$ be the Q$^*$-nodes of~$T$. The rectilinear planarity testing for $G$ follows a strategy similar to the one described in \cref{se:rpt-general-partial-2-trees} for testing general SP-graphs. For each possible choice of the root $\rho \in \{\rho_1, \dots, \rho_h\}$, the algorithm visits $T_\rho$ bottom-up in post-order and computes, for each visited node $\nu$, the non-negative spirality set  $\Sigma^+_{\nu,\rho}$, based on the sets of the children of $\nu$. $\Sigma^+_{\nu,\rho}$ is representative of all ``shapes'' that $G_{\nu,\rho}$ can take in a rectilinear representation of $G$ with the reference chain on the external face. The key lemmas used to show that we can efficiently execute this procedure over all SPQ$^*$-tree $T_\rho$ of~$G$ are Lemmas~\ref{le:timeQ}, \ref{le:timeS}, \ref{le:timeP3}, \ref{le:timeP2}, and \ref{le:timeRoot}.
%
%
From now on, we say that a node $\nu$ in $T_\rho$ is \emph{trivial}, or \emph{jump-1}, or \emph{jump-2}, if $\Sigma^+_{\nu,\rho}$ is a trivial interval, or a jump-1 interval, or a jump-2 interval, respectively.


\smallskip
\paragraph{Q$^*$-nodes.} Each chain of length $\ell$ can turn at most $\ell-1$ times (one turn for each vertex). Therefore, for a Q$^*$-node $\nu$ of $T_\rho$, we have $\Sigma^+_{\nu,\rho} = [0,\ell-1]^1$, and the following lemma holds, assuming that each Q$^*$-node is equipped with the length of its corresponding chain when we compute the SPQ$^*$-tree $T$ of~$G$.

\begin{lemma}\label{le:timeQ}
	Let $G$ be an {\pisp}, $T_\rho$ be a rooted SPQ$^*$-tree of $G$, and $\nu$ be a Q$^*$-node of $T_{\rho}$. The set $\Sigma^+_{\nu,\rho}$ can be computed in $O(1)$~time.
\end{lemma}

\paragraph{S-nodes.} \cref{le:timeS} establishes the complexity of computing the spirality sets of the S-nodes. To prove it, we first state the following key property.

\begin{lemma}\label{le:SIntervalSupport}
	Let $\nu$ be an S-node of $T_\rho$. Node $\nu$ is jump-1 if and only if at least one of its children is jump-1. Also, $\Sigma^+_{\nu,\rho}=[1,2]^1$ if and only if $\nu$ has exactly one child with non-negative rectilinear spirality set $[1,2]^1$ and all the other children with non-negative rectilinear spirality set $[0]$.
\end{lemma}
\begin{proof}
	We prove that $\nu$ is jump-1 if and only if at least one of its children is jump-1. Suppose first that $\nu$ is jump-1 and suppose by contradiction that all its children are trivial or jump-2. This implies that for each child $\mu$ of $\nu$, $\Sigma^+_{\mu,\rho}$ contains only even values or only odd values. Denote by $j$ the number of children of $\nu$ whose non-negative rectilinear spirality set contain only odd values. By \cref{le:spirality-S-node}, the spirality of any rectilinear representation of $G_{\nu,\rho}$ is the sum of the spirality values of all child components. It follows that $G_{\nu,\rho}$ admits only even values of spirality values if $j$ is even and only odd values of spirality values if $j$ is odd, which contradicts the hypothesis that $\nu$ is jump-1. Suppose vice versa that $\nu$ has at least a child $\mu$ that is jump-1. Denote by $M$ the maximum value in $\Sigma^+_{\nu,\rho}$ and by $M_\mu$ the maximum value in $\Sigma^+_{\mu,\rho}$. Let $H_{\nu,\rho}$ be any rectilinear representation of $G_{\nu,\rho}$ having spirality $M$, and let $H_{\mu,\rho}$ be its restriction to $G_{\mu,\rho}$. By \cref{le:spirality-S-node}, $H_{\mu,\rho}$ has spirality $M_\mu$. Also, since $\mu$ is jump-1, by \cref{le:spirality-S-node} we can obtain a rectilinear representation $H'_{\nu,\rho}$ of $G_{\nu,\rho}$ with spirality $M-1$ by simply replacing $H_{\mu,\rho}$ in $H_{\nu,\rho}$ with a rectilinear representation of $G_{\mu,\rho}$ having spirality $M_\mu - 1$. Therefore, by \cref{th:spirality-sets}, $\nu$ is jump-1.
	
	We now show the second part of the lemma. Suppose first that $\nu$ has exactly one child $\mu$ with non-negative rectilinear spirality set $[1,2]^1$ and all the other children with non-negative rectilinear spirality set $[0]$. Clearly, by \cref{le:spirality-S-node}, $\Sigma^+_{\nu,\rho}=\Sigma^+_{\mu,\rho}$, i.e., $\Sigma^+_{\nu,\rho}=[1,2]^1$. Suppose vice versa that $\Sigma^+_{\nu,\rho}=[1,2]^1$. By \cref{le:spirality-S-node}, the sum of the spiralities admitted by the child components of $\nu$ cannot be larger than two. If exactly one child of $\nu$ has non-negative rectilinear spirality set $[1,2]^1$ and all the other children have non-negative rectilinear spirality set $[0]$, we are done. Otherwise, one of the following two cases must be considered: $(i)$~There are two children $\mu$ and $\mu'$ of $\nu$ such that the maximum value of spirality admitted by $G_{\mu,\rho}$ and $G_{\mu',\rho}$ is $1$ and any other child $\nu$ has non-negative rectilinear spirality set $[0]$; this case is ruled out by observing that $G_{\mu,\rho}$ (and $G_{\mu',\rho}$) would also admit spirality $-1$ and thus, by \cref{le:spirality-S-node}, $G_{\nu,\rho}$ would also admit spirality~0. $(ii)$~$\nu$ has a child $\mu$ for which either $\Sigma_{\mu,\rho}^+=[0,2]^1$ or  $\Sigma_{\mu,\rho}^+=[0,2]^2$ and any other child of $\nu$ has non-negative rectilinear spirality set $[0]$; again, this case is ruled out because it would imply that also $G_{\nu,\rho}$ admits spirality~0. 	
\end{proof}

\begin{lemma}\label{le:timeS}
	Let $G$ be an \pisp, $T$ be the SPQ$^*$-tree of~$G$, $\nu$ be an S-node of~$T$ with $\delta_\nu$ children, and $\rho_1, \rho_2, \dots, \rho_h$ be a sequence of Q$^*$-nodes of~$T$ such that, for each child $\mu$ of $\nu$ in $T_{\rho_i}$, the set $\Sigma^+_{\mu,\rho_i}$ is given. $\Sigma^+_{\nu,\rho_i}$ can be computed in $O(\delta_\nu)$ time for $i = 1$ and in $O(1)$ time for $2 \leq i \leq h$.
\end{lemma}
\begin{proof}
	For any $i = 1, \dots, h$, let $x_{\nu,\rho_i}$ and $y_{\nu,\rho_i}$ be the number of children of~$\nu$ in $T_{\rho_i}$ with non-negative spirality set $[0]$ and $[1,2]^1$, respectively. Also, let $z_{\nu,\rho_i}$ be the number of children that are jump-1 (clearly, $z_{\nu,\rho_i} \geq y_{\nu,\rho_i}$). Let $M_{\nu,\rho_i}$ be the maximum value in $\Sigma^+_{\nu,\rho_i}$.
	First, we show how to compute $\Sigma^+_{\nu,\rho_i}$ in $O(1)$ time given $x_{\nu,\rho_i}$, $y_{\nu,\rho_i}$, $z_{\nu,\rho_i}$, and $M_{\nu,\rho_i}$.
	By \cref{le:SIntervalSupport}, $\Sigma_{\nu,\rho_i}^+$ is jump-1 if and only if $z_{\nu,\rho_i}>0$. Suppose that $\Sigma_{\nu,\rho_i}^+$ is jump-1. If $M_{\nu,\rho_i}\not =2$, by \cref{th:spirality-sets},  $\Sigma_{\nu,\rho_i}^+=[0,M_{\nu,\rho_i}]^1$ . If $M_{\nu,\rho_i} =2$,  \cref{le:SIntervalSupport} implies $\Sigma_{\nu,\rho_i}^+=[1,2]^1$ if $x_{\nu,\rho_i}+y_{\nu,\rho_i}=\delta_\nu$ and $y_{\nu,\rho_i}=1$; otherwise  $\Sigma_{\nu,\rho_i}^+=[0,2]^1$.
	Suppose now that  $\Sigma_{\nu,\rho_i}^+$ is not jump-1. By \cref{th:spirality-sets}, we have: $\Sigma_{\nu,\rho_i}^+=[0]$ if $M_{\nu,\rho_i}=0$ and $\Sigma_{\nu,\rho_i}^+=[1]$ if $M_{\nu,\rho_i}=1$; $\Sigma_{\nu,\rho_i}^+=[1,M_{\nu,\rho_i}]^2$ if $M_{\nu,\rho_i}>1$ and $M_{\nu,\rho_i}$ is odd; $\Sigma_{\nu,\rho_i}^+=[0,M_{\nu,\rho_i}]^2$ if $M_{\nu,\rho_i}>1$ and~$M_{\nu,\rho_i}$ is~even.
	
	We now show how to compute $x_{\nu,\rho_i}$, $y_{\nu,\rho_i}$, $z_{\nu,\rho_i}$, and $M_{\nu,\rho_i}$ for $i=1, \dots, h$. If $i=1$, given $\Sigma^+_{\mu,\rho_1}$ for every child $\mu$ of $\nu$ in~$T_{\rho_1}$, then $x_{\nu,\rho_1}$, $y_{\nu,\rho_1}$, and $z_{\nu,\rho_1}$ are computed in $O(\delta_\nu)$ time by just visiting each child of $\nu$. Also, since by \cref{le:spirality-S-node} the maximum spirality admitted by $G_{\nu,\rho}$ is the sum of the maximum spirality values admitted by the children of $\nu$ in $T_{\rho_1}$, we also compute $M_{\nu,\rho_1}$ and $\Sigma^+_{\nu,\rho_1}$ in $O(\delta_\nu)$ time. We store at $\nu$ the values $x_{\nu,\rho_1}$, $y_{\nu,\rho_1}$, $z_{\nu,\rho_1}$, and~$M_{\nu,\rho_1}$.
	
	Let $i \in \{2, \dots, h\}$. Let $\mu_1$ be the parent of $\nu$ in $T_{\rho_1}$ and let $\mu_i$ be the parent of $\nu$ in $T_{\rho_i}$. Note that, $\mu_1$ is a child of $\nu$ in~$T_{\rho_i}$ and $\mu_i$ is a child of $\nu$ in~$T_{\rho_1}$. Any other child of $\nu$ in $T_{\rho_1}$ is also a child of $\nu$ in $T_{\rho_i}$ and vice versa.
	To compute $\Sigma^+_{\nu,\rho_i}$ in $O(1)$ time, we compute~$x_{\nu,\rho_i}$, ~$y_{\nu,\rho_i}$, ~$z_{\nu,\rho_i}$,~$M_{\nu,\rho_i}$ as follows:
	
	\smallskip\noindent
	$(i)$~Let $g_{\mu_i}=1$ if $\Sigma^+_{\mu_i,\rho_1}=[0]$ and $g_{\mu_i}=0$ otherwise. Also, let $g_{\mu_1}=1$ if  $\Sigma_{\mu_1,\rho_i}=[0]$ and $g_{\mu_1}=0$ otherwise. We have $x_{\nu,\rho_i}=x_{\nu,\rho_1}-g_{\mu_i}+g_{\mu_1}$.
	
	\smallskip\noindent
	$(ii)$~Let $g'_{\mu_i}=1$ if $\Sigma^+_{\mu_i,\rho_1}=[1,2]^1$ and $g'_{\mu_i}=0$ otherwise. Also, let $g'_{\mu_1}=1$ if  $\Sigma_{\mu_1,\rho_i}=[1,2]^1$ and  $g_{\mu_1}=0$ otherwise. We have $y_{\nu,\rho_i}=y_{\nu,\rho_1}-g'_{\mu_i}+g'_{\mu_1}$.
		
	\smallskip\noindent
	$(iii)$~Let $g''_{\mu_i}=1$ if $\Sigma_{\mu_i,\rho_1}$ is jump-1 and $g''_{\mu_i}=0$ otherwise. Also, let $g''_{\mu_1}=1$ if  $\Sigma^+_{\mu_1,\rho_i}$ is jump-1 and  $g''_{\mu_1}=0$ otherwise. We have $z_{\nu,\rho_i}=z_{\nu,\rho_1}-g''_{\mu_i}+g''_{\mu_1}$.
		
	\smallskip\noindent
	$(iv)$~$M_{\nu,\rho_i}=M_{\nu,\rho_1}-M_{\mu_i,\rho_1}+M_{\mu_1,\rho_i}$.
\end{proof}

\paragraph{P-nodes.} For a P-node $\nu$, $\Sigma^+_{\nu,\rho}$ can be computed in $O(1)$ time, independent of~$\rho$. We treat separately the case of a P-node with three children (\cref{le:timeP3}) and the case of a P-node with two children (\cref{le:timeP2}).

\begin{lemma}\label{le:timeP3}
	Let $G$ be an \pisp, $T_\rho$ be a rooted SPQ$^*$-tree of $G$, and $\nu$ be a P-node of~$T_{\rho}$ with three children. If for each child $\mu$ of~$\nu$ in~$T_{\rho}$ the set $\Sigma^+_{\mu,\rho}$ is given then $\Sigma^+_{\nu,\rho}$ can be computed in $O(1)$ time.
\end{lemma}
\begin{proof}
	Observe that, by \cref{le:spirality-P-node-3-children}, for any given integer value $\sigma_\nu \geq 0$, one can test in $O(1)$ time whether $G_{\nu,\rho}$ admits spirality~$\sigma_\nu$. It suffices to test if there exists a child of $\nu$ that admits spirality $\sigma_\nu$, another child that admits spirality $\sigma_\nu+2$, and the remaining child that admits spirality $\sigma_\nu-2$. Testing this condition requires a constant number of checks. 
	
	By \cref{th:spirality-sets}, $G_{\nu,\rho}$ is rectilinear planar if and only if it admits spirality either $0$ or $1$. Based on the previous observation, we can check this property in $O(1)$ time; if it does not hold, then $\Sigma^+_{\nu,\rho}=\emptyset$. Otherwise, we determine the maximum value $M$ in $\Sigma^+_{\nu,\rho}$. By \cref{th:spirality-sets}, it suffices to find a value $\sigma_\nu$ such that $\nu$ admits spirality $\sigma_\nu$ but not spirality values $\sigma_\nu + 1$ and $\sigma_\nu + 2$; if we find such a value, then $M=\sigma_\nu$. Using this observation, we prove that we can find $M$ in $O(1)$~time.  
	
	For each $i=0,...,4$, we can first check in $O(1)$ time whether $M=i$. If this is not the case, then $M>4$. To find $M$ in this case, we first give an interesting property.  
	Consider the maximum values in the non-negative rectilinear spirality sets of the children of $\nu$. Denote by $\mu_{\max}$ (resp. $\mu_{\min}$) any child of $\nu$ whose maximum value is not smaller than (resp. larger than) any other maximum values. Also denote by $\mu_{\md}$ the remaining child. We prove the following claim.
	
	\begin{claim}
		Let $M$ be the maximum value in $\Sigma^+_{\nu,\rho}$. If $M > 4$ then $G_{\nu,\rho}$ admits spirality $M$ for an embedding where $G_{\mu_{\max},\rho}$, $G_{\mu_{\md},\rho}$, and $G_{\mu_{\min},\rho}$ appear in this left-to-right order.
	\end{claim}  
	\begin{claimproof}
		Let $H_{\nu,\rho}$ be a rectilinear representation of $G_{\nu,\rho}$ with spirality $M > 4$. If $G_{\mu_{\max},\rho}$, $G_{\mu_{\md},\rho}$, and $G_{\mu_{\min},\rho}$ appear in this order in $H_{\nu,\rho}$ we are done. Hence, suppose this is not the case; we prove that there exists another rectilinear representation $H'_{\nu,\rho}$ of $G_{\nu,\rho}$ with spirality $M$ and such that $G_{\mu_{\max},\rho}$, $G_{\mu_{\md},\rho}$, and $G_{\mu_{\min},\rho}$ appear in this left-to-right order in the planar embedding of $H'_{\nu,\rho}$.
		
		Let $\mu_l$, $\mu_c$, and $\mu_r$ be the children of $\nu$ that correspond to the left, the central, and the right component of $H_{\nu,\rho}$, respectively. Denote by $\sigma_{\mu_d}$ the spirality of the restriction of $H_{\nu,\rho}$ to $G_{\mu_d,\rho}$, with $d \in \{l,c,r\}$. By \cref{le:spirality-P-node-3-children}, it suffices to show that $G_{\mu_{\max},\rho}$, $G_{\mu_{\md},\rho}$, and $G_{\mu_{\min},\rho}$ admit spirality values $\sigma_{\mu_l}$, $\sigma_{\mu_c}$, and $\sigma_{\mu_r}$, respectively.
		Observe that, since $M > 4$, by \cref{le:spirality-P-node-3-children} we have $\sigma_{\mu_d} > 0$.
		
		Let $d,d' \in \{l,c,r\}$ with $d \neq d'$ and let $M_{\mu_d}$ be the maximum value of spirality in $\Sigma^+_{\mu_d,\rho}$. We claim that if $\sigma_{\mu_{d'}} \leq M_{\mu_d}$ then $G_{\mu_d,\rho}$ admits spirality $\sigma_{\mu_{d'}}$: Since  $M>4$, by \cref{le:spirality-P-node-3-children}, we have $M_{\mu_d} \ge 3$ and if $\mu_d$ is jump-1, the claim holds by \cref{th:spirality-sets}; if $\mu_d$ is not jump-1, by \cref{le:spirality-P-node-3-children}, it follows that $M$, $M_{\mu_d}$, and  $\sigma_{\mu_{d'}}$ have the same parity and, by \cref{th:spirality-sets}, the claim holds.
		
		We now show separately that: \textsf{$(a)$}~$G_{\mu_{\max},\rho}$ admits spirality $\sigma_{\mu_l}$, \textsf{(b)}~$G_{\mu_{\md},\rho}$ admits spirality $\sigma_{\mu_c}$, and \textsf{(c)} $G_{\mu_{\min},\rho}$ admits spirality $\sigma_{\mu_r}$.
		
		\smallskip\noindent \textsf{Proof of~$(a)$:} Since by definition $M_{\mu_{\max}}\ge M_{\mu_l}\ge \sigma_{\mu_l}$, by the claim above we have that $G_{\mu_{\max},\rho}$ admits spirality $\sigma_{\mu_l}$.
		
		\smallskip\noindent  \textsf{Proof of~(b):} If $\mu_\md=\mu_c$ we are done. Else, suppose that $\mu_\md=\mu_l$. Since $\sigma_{\mu_l}\ge \sigma_{\mu_c}$, we have $M_\md=M_l\ge \sigma_{\mu_l}> \sigma_{\mu_c}$ and, consequently, by the claim $G_{\mu_{\md},\rho}$ admits spirality $\sigma_{\mu_c}$. Finally, suppose that $\mu_\md = \mu_r$. If $\mu_{\min}=\mu_c$ then $M_{\mu_{\md}} \ge M_{\mu_{\min}} \ge \sigma_{\mu_c}$; if $\mu_{\min}=\mu_l$ then $M_{\mu_{\md}} \ge M_{\mu_{\min}} \ge \sigma_{\mu_l} > \sigma_{\mu_c}$. Hence, by the claim,  $G_{\mu_{\md},\rho}$ admits spirality $\sigma_{\mu_c}$.
		
		\smallskip\noindent  \textsf{Proof of~(c):} Since $\sigma_{\mu_r}<\sigma_{\mu_c}<\sigma_{\mu_l}$, for any $\mu\in \{\mu_{\min}, \mu_{\md}, \mu_{\max}\}$, $\sigma_{\mu_r}\le M_{\mu}$. Hence, by the claim,  $G_{\mu_{\min},\rho}$ admits spirality $\sigma_{\mu_r}$.
	\end{claimproof}
	
	By the claim above, to compute $M$ when $M > 4$, we can restrict to consider only rectilinear representations of $G_{\nu,\rho}$ where $G_{\mu_{\max},\rho}$, $G_{\mu_{\md},\rho}$, and $G_{\mu_{\min},\rho}$ occur in this left-to-right order. Let $\overline{M}=\min\{M_{\mu_{\max}}-2,M_{\mu_{\md}}, M_{\mu_{\min}}+2\}$. By \cref{le:spirality-P-node-3-children}, we have $M\le \overline{M}$. 
	We test in $O(1)$ time whether $\nu$ is jump-1; by \cref{th:spirality-sets}, it is sufficient to check whether $G_{\nu,\rho}$ either admits both spirality values 0 and 1 or both spirality values 1 and 2. If $\nu$ is jump-1, by \cref{le:spirality-P-node-3-children}, all the children of $\nu$ are jump-1. Hence, $\mu_{\max}$, $\mu_{\md}$, and $\mu_{\min}$ admit spiralities $\overline{M}-2$, $\overline{M}$, and $\overline{M}+2$, respectively, which implies that $M=\overline{M}$. Suppose vice versa that $\nu$ is not jump-1. In this case, we check in $O(1)$ if $G_{\nu,\rho}$ admits spirality $\overline{M}$. If so, $M=\overline{M}$. Otherwise, $M$ and $\overline{M}$ have opposite parity, which implies that $M$ and $\overline{M}-1$ have the same parity, and $M \le \overline{M}-1$. Since $\overline{M}-1=\min\{M_{\mu_{\max}}-2,M_{\mu_{\md}}, M_{\mu_{\min}}+2\}-1$, we have that $\mu_{\max}$,  $\mu_{\md}$, and $\mu_{\min}$ admit spirality values $(\overline{M}-2)-1$,   $\overline{M}-1$, and $(\overline{M}+2)-1$, respectively, i.e., $M=\overline{M}-1$.   
	
	\smallskip
	Based on $M$, we finally determine the structure of $\Sigma_{\nu,\rho}^+$ in $O(1)$ time. Namely, we check in $O(1)$ time if $\nu$ is jump-1; thanks to \cref{th:spirality-sets} it suffices to check whether $G_{\nu,\rho}$ admits spirality values 0 and 1 or spirality values 1 and 2. Suppose that $\nu$ is jump-1; if it contains 0, then $\Sigma_{\nu,\rho}^+=[0,M]^1$; else $\Sigma_{\nu,\rho}^+=[1,2]^1$. Suppose vice versa that $\nu$ is not jump-1. If $M \leq 1$ then $\Sigma_{\nu,\rho}^+=[M]$. Otherwise, if $M$ is odd $\Sigma_{\nu,\rho}^+=[1,M]^2$ and if $M$ is even $\Sigma_{\nu,\rho}^+=[0,M]^2$.
\end{proof}

\begin{lemma}\label{le:timeP2}
	Let $G$ be an \pisp, $T_\rho$ be a rooted SPQ$^*$-tree of $G$, and $\nu$ be a P-node of $T_{\rho}$ with two children. If for each child $\mu$ of $\nu$ in~$T_{\rho_i}$, the set $\Sigma^+_{\mu,\rho_i}$ is given, then $\Sigma^+_{\nu,\rho}$ can be computed in $O(1)$ time.
\end{lemma}
\begin{proof}
	We follow the same proof strategy as for \cref{le:timeP3}. By \cref{le:spirality-P-node-2-children}, for any given integer $\sigma_\nu$, one can test in $O(1)$ time whether $G_{\nu,\rho}$ admits spirality $\sigma_\nu$. Indeed, it suffices to test whether there are four binary numbers $\alpha_u^l$, $\alpha_v^l$, $\alpha_u^r$, and $\alpha_v^r$ such that $1 \leq \alpha_u^l + \alpha_u^r \leq 2$, $1 \leq \alpha_v^l + \alpha_v^r \leq 2$, and for which one child of $\nu$ admits spirality $\sigma_\nu + \alpha_u^l + \alpha_v^l$ and the other child of $\nu$ admits spirality $\sigma_\nu - \alpha_u^r + \alpha_v^r$. Testing this condition requires a constant number of checks. 
	
	By \cref{th:spirality-sets}, $G_{\nu,\rho}$ is rectilinear planar if and only if it admits spirality either $0$ or $1$. Based on the reasoning above, we can check this property in $O(1)$ time; if it does not hold, then $\Sigma^+_{\nu,\rho}=\emptyset$. Otherwise, we determine the maximum value $M$ in $\Sigma^+_{\nu,\rho}$. By \cref{th:spirality-sets}, it suffices to find a value $\sigma_\nu$ such that $\nu$ admits spirality $\sigma_\nu$ but it does not admit spirality $\sigma_\nu + 1$ and $\sigma_\nu + 2$; if we find such a value, then $M=\sigma_\nu$. We prove how to find $M$ in $O(1)$ time.  
	
	For each $i=0,...,4$, we first check in $O(1)$ time whether $M=i$. If this is not the case, then $M>4$. To find $M$ in this case, we claim a property similar to the case of a P-node with three children. Denote by $\mu_{\max}$ a child of $\nu$ whose maximum value is not smaller than the other. Let $\mu_{\min}$ be the remaining child.
	
	\begin{claim}
		Let $M$ be the maximum value in $\Sigma^+_{\nu,\rho}$. If $M > 4$, there exists a rectilinear representation of $G_{\nu,\rho}$ with spirality $M$ where $G_{\mu_{\max},\rho}$ and $G_{\mu_{\min},\rho}$ appear in this left-to-right order.
	\end{claim}
	\begin{claimproof}
		Let $H_{\nu,\rho}$ be a rectilinear representation of $G_{\nu,\rho}$ with spirality~$M$. If $G_{\mu_{\max},\rho}$ is the left child in $H_{\nu,\rho}$, we are done. Otherwise we show that there exists a rectilinear representation $H'_{\nu,\rho}$ of $G_{\nu,\rho}$ with spirality $M$ such that $G_{\mu_{max},\rho}$ is the left child. Since $H_{\nu,\rho}$ has the maximum possible value of spirality, we have $\alpha_u^r=\alpha_v^r=1$. Since $\mu_{\min}$ is the left child in $H_{\nu,\rho}$ and $M>4$, by \cref{le:spirality-P-node-2-children}, $\sigma_{\mu_{\min}}>2$. 
		By Property~$(b)$ of \cref{le:intervalSupportFirst}, there exists a rectilinear representation $H_{\mu_{\min},\rho}'$ of $G_{\mu_{\min},\rho}$ with spirality $\sigma_{\mu_{\min},\rho}'=\sigma_{\mu_{\min}}-2$. Also, by  \cref{le:intervalSupportSecond} we can assume that $\sigma_{\mu_{\min}}-\sigma_{\mu_{\max}}=g$, with $2\le g\le3$. We have $M_{\mu_{\max}}\ge M_{\mu_{\min}} \ge \sigma_{\mu_{\min}}= g+\sigma_{\mu_{\max}}$. Hence, by \cref{th:spirality-sets}, if $\sigma_{\mu_{\max}}$ and $M_{\mu_{\max}}$ have different parities, then $\mu_{\max}$ is jump-1, otherwise it is jump-2. In both cases, $\mu_{\max}$ admits spirality $\sigma_{\mu_{\max}}'=\sigma_{\mu_{\max}}+2$.
		We have $\sigma_{\mu_{\max}}'- \sigma_{\mu_{\min}}'=\sigma_{\mu_{\max}}+2-\sigma_{\mu_{\max}}-2=\sigma_{\mu_{\max}}-\sigma_{\mu_{\max}}=g$. Hence, by \cref{le:spirality-P-node-2-children}, there exists a rectilinear representation $H'_{\nu,\rho}$ that contains $H_{\mu_{\min},\rho}'$ and $H_{\mu_{\max}\rho}'$ in this left-to-right order and such. The spirality $\sigma_\nu'$ of $H'_{\nu,\rho}$ is  $\sigma_\nu'=\sigma_{\mu_{\min}}'+2=\sigma_{\mu_{\min}}\ge \sigma_{\mu_{\max}}+2=M$.
	\end{claimproof}
	
	When $M>4$, by \cref{le:spirality-P-node-2-children}, we have $M_{\mu_{\min}}>2$ and $M_{\mu_{\max}}>2$. By the claim above we can restrict to consider only rectilinear representations of $G_{\nu,\rho}$ where $G_{\mu_{\max},\rho}$ and $G_{\mu_{\min},\rho}$ are the left and right child, respectively.   
	Also, we can restrict to consider $\alpha_u^r=\alpha_v^r=1$.
	By \cref{le:spirality-P-node-2-children}, $M\le \min\{M_{\mu_{\max}}, M_{\mu_{\min}}+2\}$. 
	
	Suppose first that $M_{\mu_{\max}}\ge M_{\mu_{\min}}+2$, which implies $M\le M_{\mu_{\min}}+2$. We show that in fact $M = M_{\mu_{\min}}+2$, i.e., $\nu$ admits spirality $M_{\mu_{\min}}+2$. Since $M_{\mu_{\max}} \ge M_{\mu_{\min}}+2$, we have that $\mu_{\max}$ admits spirality $M_{\mu_{\min}}+2$ or $M_{\mu_{\min}}+3$; this implies that we can realize a rectilinear representation of $G_{\nu,\rho}$ whose restrictions to $G_{\mu_{\min},\rho}$ and to $G_{\mu_{\max},\rho}$ have spiralities $\sigma_{\mu_{\min}} = M_{\mu_{\min}}$ and $\sigma_{\mu_{\max}}\in [M_{\mu_{\min}}+2,M_{\mu_{\min}}+3]$, respectively. By \cref{le:spirality-P-node-2-children}, if $\sigma_{\mu_{\max}} = M_{\mu_{\min}}+2$ then $G_{\nu,\rho}$ admits spirality $M_{\mu_{\min}}+2$ for $\alpha_u^l=\alpha_v^l=0$. If $\sigma_{\mu_{\max}} = M_{\mu_{\min}}+3$ then $G_{\nu,\rho}$ admits spirality $M_{\mu_{\min}}+2$ for $\alpha_u^l=1$ and $\alpha_v^l=0$.        
	
	\smallskip
	Suppose vice versa that $M_{\mu_{\max}}< M_{\mu_{\min}}+2$, which implies  $M\le M_{\mu_{\max}}$. In this case we show that either $M=M_{\mu_{\max}}$ or $M=M_{\mu_{\max}}-1$. Since $M_{\mu_{\min}} > M_{\mu_{\max}}-2$, we have that $\mu_{\min}$ admits spirality $M_{\mu_{\max}}-2$ or $M_{\mu_{\max}}-3$. If $\mu_{\min}$ admits spirality $M_{\mu_{\max}}-2$, then we can realize a rectilinear representation of $G_{\nu,\rho}$ whose restrictions to 
	$G_{\mu_{\min},\rho}$ and to $G_{\mu_{\max},\rho}$ have spiralities $\sigma_{\mu_{\max}} = M_{\mu_{\max}}$ and  $\sigma_{\mu_{\min}} = M_{\mu_{\max}}-2$, respectively. By \cref{le:spirality-P-node-2-children}, this representation has spirality $M_{\mu_{\max}}$, which implies $M = M_{\mu_{\max}}$. If $\mu_{\min}$ does not admit spirality $M_{\mu_{\max}}-2$, it admits spirality $M_{\mu_{\max}}-3$ and $M\le M_{\mu_{\max}}-1$. In this case we realize a rectilinear representation of $G_{\nu,\rho}$ whose restrictions to 
	$G_{\mu_{\min},\rho}$ and to $G_{\mu_{\max},\rho}$ have spiralities $\sigma_{\mu_{\max}} = M_{\mu_{\max}}$ and  $\sigma_{\mu_{\min}} = M_{\mu_{\max}}-3$. By \cref{le:spirality-P-node-2-children}, this representation has spirality $M_{\mu_{\max}}-1$, which implies that $M=M_{\mu_{\max}}-1$.  		
	
	Based on $M$, we finally determine the structure of $\Sigma_{\mu,\rho}^+$ in $O(1)$ time. We have that $G_\nu$ admits a rectilinear representation with spirality $M$ and $M-1$. Indeed, if $\nu$ has spirality $M$, then $\alpha_u^r = \alpha_v^r = 1$ and, by \cref{le:intervalSupportSecond}, at least one of $\alpha_u^l$ and $\alpha_v^l$ equals 0, say for example $\alpha_u^l=0$. For $\alpha_u^l=1$ we get spirality $M-1$. Hence, by \cref{th:spirality-sets}, $\nu$ is jump-1:  If $M=2$ and $G_\nu$ does not admit a representation with spirality $0$, $\Sigma_{\mu,\rho}^+=[1,2]^1$. Otherwise, $\Sigma_{\mu,\rho}^+=[0,M]^1$.
\end{proof}

\smallskip
Let $\nu$ be the root child of $T_\rho$ and suppose that $\Sigma^+_{\nu,\rho}$ has been computed. We prove the following.

\begin{lemma}\label{le:timeRoot}
	Let $G$ be an \pisp, $T_\rho$ be a rooted SPQ$^*$-tree of $G$, and $\nu$ be the child of $\rho$ in $T_\rho$. If $G_{\nu,\rho}$ is rectilinear planar, one can test whether $G$ is rectilinear planar in $O(1)$ time.
\end{lemma}
\begin{proof}
	As we already observed in the proof of \cref{le:rpt-general-sp-graph}, $G$ is rectilinear planar if and only if there exist two values $\sigma_\nu \in \Sigma_{\nu,\rho}$ and $\sigma_{\rho} \in \Sigma_{\rho,\rho}$, such that $\sigma_{\nu} - \sigma_{\rho} = 4$.
	We show that, since $G$ is independent-parallel, this condition is true if and only if there exist two values $\sigma'_\nu \in \Sigma^+_{\nu,\rho}$ and $\sigma'_{\rho} \in \Sigma^+_{\rho,\rho}$ such that $\sigma'_{\nu} + \sigma'_{\rho} = 4$. 
	
	Suppose first that there exist two values $\sigma'_\nu \in \Sigma^+_{\nu,\rho}$ and $\sigma'_\rho \in \Sigma^+_{\rho,\rho}$ such that $\sigma'_{\nu} + \sigma'_{\rho} = 4$. In this case we know that $-\sigma'_{\rho} \in \Sigma_{\rho,\rho}$. Hence, for $\sigma_\nu = \sigma'_\nu$ and $\sigma_{\rho} = -\sigma'_{\rho}$ we have $\sigma_{\nu} - \sigma_{\rho} = 4$.
	
	Suppose now that $\sigma_{\nu} - \sigma_{\rho} = 4$ with $\sigma_\nu \in \Sigma_{\nu,\rho}$ and $\sigma_{\rho} \in \Sigma_{\rho,\rho}$. Three cases are possible:
	
	\smallskip\noindent{\sf Case 1: $\sigma_{\nu} \geq  0$ and $\sigma_{\rho} < 0$}. In this case $\sigma'_{\rho}=-\sigma_{\rho} \in \Sigma^+_{\rho,\rho}$ and setting $\sigma'_\nu = \sigma_\nu$, we have $\sigma'_{\nu} + \sigma'_{\rho} = 4$.
	
	\smallskip\noindent{\sf Case 2: $\sigma_\nu < 0$ and $\sigma_{\rho} < 0$}. We have $-\sigma_\nu  + \sigma_\rho+8 = 4$. Also, we have $\sigma_\rho < -4$ $\Rightarrow$ $-2 \sigma_\rho > 8$ $\Rightarrow$ $-\sigma_\rho > \sigma_\rho + 8$. This implies that $\sigma_\rho + 8 \in \Sigma_{\rho,\rho}$. 
	Therefore, for $\sigma''_\nu = -\sigma_\nu \in \Sigma^+_{\nu,\rho}$ and $\sigma''_\rho = \sigma_\rho + 8 \in \Sigma_{\rho,\rho}$, we have $\sigma''_\nu + \sigma''_\rho = 4$. 
	If $\sigma''_\nu \leq 4$ then $\sigma''_\rho \in [0,4]$, and we are done. Assume vice versa that $\sigma''_\nu > 4$. 
	Observe that $\sigma''_{\nu}$ and $\sigma''_{\rho}$ have the same parity. If they are both even numbers, by \cref{th:spirality-sets}, we have $\sigma'_\nu = 4 \in \Sigma^+_{\nu,\rho}$, and for $\sigma'_\rho = 0 \in \Sigma^+_{\rho,\rho}$ we have $\sigma'_\nu + \sigma'_\rho = 4$.
	If they are both odd numbers, by \cref{th:spirality-sets}, we have $\sigma'_\nu = 3 \in \Sigma^+_{\nu,\rho}$, and for $\sigma'_\rho = 1 \in \Sigma^+_{\rho,\rho}$ we have $\sigma'_\nu + \sigma'_\rho = 4$.

	
	\smallskip\noindent{\sf Case 3: $\sigma_\nu \geq 0$ and $\sigma_{\rho} \geq 0$}. In this case we have $\sigma_\nu \geq 4$. If $\sigma_\nu = 4$ then $\sigma_\rho = 0$ and we are done. Assume vice versa that $\sigma_\nu > 4$ and $\sigma_\rho \geq 1$. Observe that $\sigma_\nu$ and $\sigma_{\rho}$ have the same parity. If they are both even numbers, by \cref{th:spirality-sets} we have $\sigma'_\nu = 4 \in \Sigma^+_{\nu,\rho}$, and for $\sigma'_\rho = 0 \in \Sigma^+_{\rho,\rho}$ we have $\sigma'_\nu + \sigma'_\rho = 4$.
	If they are both odd numbers, by \cref{th:spirality-sets}, we have $\sigma'_\nu = 3 \in \Sigma^+_{\nu,\rho}$, and for $\sigma'_\rho = 1 \in \Sigma^+_{\rho,\rho}$ we have $\sigma'_\nu + \sigma'_\rho = 4$.     
	
	\medskip
	Based on the characterization above, if $\ell$ is the length of the chain corresponding to $\rho$, for any pair $\sigma'_{\nu} \in \{0,1,2,3,4\}$ and $\sigma'_{\rho} \in \{0,1,2,3,4\}$ such that $\sigma'_{\nu} + \sigma'_{\rho} = 4$ we just test whether  $\sigma'_{\nu} \in \Sigma^+_{\nu,\rho}$ and $\sigma'_{\rho} \in \{0, \dots, \ell-1\}$. This requires a constant number of checks. If the test is positive, then there exists a rectilinear representation of $G$ such that its restriction to $\nu$ has spirality $\sigma'_{\nu}$ and its restriction to $\rho$ has spirality $-\sigma'_{\rho}$.
\end{proof}

We now prove that rectilinear planarity testing of  {\pisps} can be solved in linear time.

\begin{lemma}\label{le:rpt-ip-sp-graphs}
	Let $G$ be an $n$-vertex independent-parallel SP-graph. There exists an $O(n)$-time algorithm that tests whether $G$ is rectilinear planar and that computes a rectilinear representation of $G$ in the positive case.
\end{lemma}
\begin{proof}
	If $G$ is a simple cycle, the test is trivial, as~$G$ is rectilinear planar if and only if it contains at least four vertices. Assume that~$G$ is not a simple cycle. Let~$T$ be the SPQ$^*$-tree of~$G$, and let
	$\rho_1, \dots, \rho_h$ be the Q$^*$-nodes of~$T$. For each $i=1, \dots, h$, the testing algorithm performs a post-order visit of~$T_{\rho_i}$. During this visit, for every non-root node $\nu$ of $T_{\rho_i}$ the algorithm computes~$\Sigma^+_{\nu,\rho_i}$ by using Lemmas~\ref{le:timeQ},~\ref{le:timeS},~\ref{le:timeP3}, and~\ref{le:timeP2}. If $\Sigma^+_{\nu,\rho_i}=\emptyset$, the algorithm stops the visit, discards~$T_{\rho_i}$, and starts visiting~$T_{\rho_{i+1}}$ (if $i < h$). If the algorithm achieves the root child $\nu$ and if $\Sigma^+_{\nu,\rho_i} \neq \emptyset$, it checks whether $G$ is rectilinear planar by using \cref{le:timeRoot}: if so, the test is positive and the algorithm does not visit the remaining trees; otherwise it discards~$T_{\rho_i}$ and starts visiting~$T_{\rho_{i+1}}$~(if~$i < h$).
	
	We now analyze the time complexity of the testing algorithm.
	Suppose that one of the trees $T_{\rho_i}$ is considered, and let $\nu$ be a node of $T_{\rho_i}$. Denote by $\delta_\nu$ the number of children of $\nu$. If the parent of $\nu$ in $T_{\rho_i}$ coincides with the parent of $T_{\rho_j}$ for some $j \in \{1, \dots, i-1\}$, and if $\Sigma^+_{\nu,\rho_j}$ was previously computed, then the algorithm does not need to compute $\Sigma^+_{\nu,\rho_i}$, because $\Sigma^+_{\nu,\rho_i}=\Sigma^+_{\nu,\rho_j}$. Hence, for each node $\nu$, the number of computations of its non-negative rectilinear spirality set that must be performed over all possible trees $T_{\rho_i}$ is $\delta_\nu+1 = O(\delta_\nu)$ (one for each different way of choosing the parent of $\nu$).   
	
	If $\nu$ is a Q$^*$-node or a P-node, by Lemmas~\ref{le:timeQ},~\ref{le:timeP3}, and~\ref{le:timeP2}, computing $\Sigma^+_{\nu,\rho_i}$ (if not already available) takes $O(1)$ time. Hence, since the sum of the degrees of the nodes in the tree is $O(n)$,
	the computation of the non-negative rectilinear spirality sets of all Q$^*$-nodes and P-nodes takes $O(n)$ time, over all visits of $T_{\rho_i}$ $(i = 1, \dots, h)$. 
	
	If $\nu$ is an S-node, by Lemma~\ref{le:timeS} the algorithm spends $O(\delta_\nu)$ time to compute the non-negative rectilinear spirality set of $\nu$ the first time it visits a tree $T_{\rho_i}$ for which $\Sigma^+_{\nu,\rho_i}$ is non-empty (i.e., the first time the pertinent graph of each child of $\nu$ is rectilinear planar), and $O(1)$ to compute the non-negative rectilinear spirality set of $\nu$ in the remaining trees for which this set is not already available. Hence, also in this case, the computation of the non-negative rectilinear spirality sets of all S-nodes takes $O(n)$ time, over all visits of $T_{\rho_i}$ $(i = 1, \dots, h)$.  
	
	It follows that the testing algorithm takes $O(n)$ time.
	
	\smallskip\noindent\textsf{Construction algorithm.} If the testing is positive, a rectilinear representation of~$G$ can be constructed in linear time by the same top-down strategy described in \cref{le:rpt-general-sp-graph}. However, to achieve overall linear-time complexity, we have to show how to efficiently assign target spirality values to the children of an S-node for which it is known its target spirality value. Let $\nu$ be an S-node of $T_{\rho_i}$ with children $\mu_1, \dots, \mu_s$ $(j \in \{1, \dots, s\})$ and suppose that $\sigma_\nu \in \Sigma_{\nu,\rho_i}$ is the target value of spirality of $\nu$. We must find a value $\sigma_{\mu_j} \in \Sigma_{\mu_j,\rho_i}$ for each $j = 1, \dots, s$ such that $\sum_{i=1}^s \sigma_{\mu_j} = \sigma_\nu$. Let $M_j$ be the maximum value of $\Sigma^+_{\mu_j,\rho_i}$ for any  $j \in \{1, \dots, s\}$. Without loss of generality, we assume that $\sigma_\nu \geq 0$. Indeed, if $\sigma_\nu < 0$ we can find spirality values for the children of $\nu$ such that their sum equals $-\sigma_\nu$ and then we can change the sign of each of them.
	
	We initially set $\sigma_{\mu_j} = M_j$ for each $j = 1, \dots, s$ and we consider $\Delta = \sum_{i=1}^s \sigma_{\mu_j} - \sigma_\nu$. Clearly,  $\Delta \geq 0$. If $\Delta = 0$ we are done; note that, this is always the case when $\Sigma^+_{\nu,\rho_i}$ is a trivial interval. 
	Suppose $\Delta > 0$. 
	If $\Sigma^+_{\nu,\rho_i}=[1,2]^1$, $\Delta= 1$. By \cref{le:SIntervalSupport}, we simply reduce by one unit the value of spirality of the unique child whose non-negative rectilinear spirality set is $[1,2]^1$ (each other child of $\nu$ has non-negative rectilinear spirality set $[0]$). 
	
	Suppose $\Sigma^+_{\nu,\rho_i}=[1,M]^2$ or $\Sigma^+_{\nu,\rho_i}=[0,M]^2$. We have that $\Delta$ is even. By \cref{le:SIntervalSupport}, each child of $\nu$ is either jump-2 or trivial. Iterate over all $j=1, \dots, s$ and for each $j$ decrease both $\sigma_{\mu_j}$ and $\Delta$ by the value $\min\{\Delta, 2M_i\}$, which is always even, until $\Delta=0$. 
	
	Finally, suppose $\Sigma^+_{\nu,\rho_i}=[0,M]^1$. By \cref{le:SIntervalSupport}, $\nu$ has at least one jump-1 child.
	First, iterate over all jump-1 children of $\nu$. For any such child $\mu_j$,  decrease both $\sigma_{\mu_j}$ and $\Delta$ by the value $\min\{\Delta, 2M_i\}$ until either $\Delta=0$ or all jump-1 children have been considered.
	Note that this iterative step is not be applicable to a jump-1 child $\mu_j$ whose set is $[1,2]^1$ when $\Delta=2$. In this case, we apply the following strategy:
	\begin{itemize}
		\item If $\nu$ has at least one jump-2 child or a child whose set is $[1]$, we decrement the spirality value of this child by two units.
		\item If a jump-1 child $\mu_k$ of $\nu$ have been processed before $\mu_j$, reduce by three units the spirality of $\mu_j$ and increase by one unit the spirality of~$\mu_k$.
		\item Otherwise, there is at least another jump-1 $\mu_k$ that has not yet been processed. We reduce by one unit both the spirality of $\mu_j$ and the spirality of~$\mu_k$. 
	\end{itemize}
	
	If after the procedure above $\Delta=0$, we are done. Otherwise, $\Delta>0$. If $\Delta$ is even, the desired value of spirality for $\nu$ is obtained by decreasing the spirality of the jump-2 children or the trivial children with non-negative spirality set $[1]$ (if any), as done in the previous case. If $\Delta$ is odd, we increment by one unit the spirality of an arbitrarily chosen jump-1 child, and then we reach the desired value of spirality for $\nu$ by decreasing the spirality of the jump-2 children or the trivial children as before.
	
	\medskip
	With the procedure above, we can process in $O(1)$ time each child of an S-node and thus all S-nodes are processed in $O(n)$ time.
\end{proof}

The next theorem extends the result of \cref{le:rpt-ip-sp-graphs} to independent-parallel partial 2-trees that are not necessarily biconnected.

\begin{theorem}\label{th:rpt-ip-partial-2-trees}
Let $G$ be an $n$-vertex independent-parallel partial 2-tree. There exists an $O(n)$-time algorithm that tests whether $G$ is rectilinear planar, and that computes a rectilinear representation of $G$ if the test is positive.
\end{theorem}
\begin{proof}
We can design a testing algorithm based on the same strategy as the one in \cref{th:rpt-general-partial-2-trees}. Namely, \textsf{Phase~2} of the testing algorithm in \cref{th:rpt-general-partial-2-trees}, which takes $O(n)$ time, is performed in the same way, without any change. As for \textsf{Phase~1} (i.e., the pre-processing phase), we need to slightly revise it in order to reduce the computation time from $O(n^2)$ to $O(n)$. More precisely:

\begin{itemize}
    \item \textsf{Step~1} is performed exactly as described in \cref{th:rpt-general-partial-2-trees}. It takes $O(n)$ time. We recall that this step enhances each block $B_j$ of $G$ with a gadget for each cutvertex that requires a reflex-angle constraint or an external reflex-angle constraint. It can be seen that, for each block $B_j$, the block $B'_j$ obtained from $B_j$ after the addition of the gadgets remains an independent-parallel SP-graph. 
    
    \item \textsf{Step~2} is performed by applying on each block $B'_j$ the testing algorithm of \cref{le:rpt-ip-sp-graphs}, which takes in total $O(n)$ time. Notice that, differently from \cref{th:rpt-general-partial-2-trees}, in this step each spirality set is succinctly described in $O(1)$ space (see \cref{th:spirality-sets}), hence we do not explicitly store the values of the leftmost and rightmost external angles that can be assigned to each pole of a component for each admitted value of spirality.
    
    \item \textsf{Step~3} has to be revised to lower its complexity from $O(n^2)$ to $O(n)$. Namely, as in the proof of \cref{th:rpt-general-partial-2-trees},     
    for each distinct configuration of the cutvertex-nodes incident to a block-node $\beta(B_j)$ of the block-cutvertex tree, we decide its corresponding Boolean local label, based on the output of the previous step and on whether the configuration requires an external angle constraint at a cutvertex of $B_j$ or not. If the configuration is such that all cutvertex-nodes incident to $\beta(B_j)$ are children of $\beta(B_j)$ (which models the case when $\beta(B_j)$ is the root of the BC-tree), there is no external angle constraints on the cutvertices of $B_j$, hence the local label is \textsf{true} if and only if $B'_j$ was rectilinear planar in Step~2. Consider vice versa a configuration such that $\chi(c)$ is the parent of $\beta(B_j)$, for a cutvertex $c$ in $B_j$. If $B'_j$ was not rectilinear planar in Step~2, the local label for the configuration is \textsf{false}. However, if $B'_j$ was rectilinear planar in Step~2, we must check whether it remains rectilinear planar with the additional external angle-constraint on~$c$. If there is an external reflex-angle-constraint or a non-right-angle constraint on $c$, we use a strategy similar to the proof in \cref{th:rpt-general-partial-2-trees}, while we adopt a different argument for handling an external flat-angle constraint on $c$. More precisely. 
	    \begin{itemize}
	    \item [$(i)$] If there is an external reflex-angle-constraint on~$c$, similar to the proof of \cref{th:rpt-general-partial-2-trees}, we just consider the output of the testing algorithm of Step~2 restricted to the SPQ$^*$-tree of $B'_j$ whose reference chain is the path of length four of the reflex-angle gadget for $c$. The local label is set to \textsf{true} if and only if the test for this rooted tree was positive, as it equals to say that $B_j$ is rectilinear planar with $c$ on the external face and with a reflex angle on the external face. This takes $O(1)$ time for the given configuration, and therefore $O(n)$ over all configurations of the cutvertex-nodes incident to $B_j$.
	    
	    \item[$(iii)$] If there is an external non-right-angle constraint on $c$, we know that $\deg(c|B_j)=2$. We restrict the output of the testing algorithm of Step~2 to the only root $\rho$ of the SPQ$^*$-tree whose reference chain $\pi$ contains~$c$. Denote by $\ell$ the length of $\pi$ and let $s$ and $t$ be the two poles of~$\pi$. Since $c$ is not allowed to have a $90^\circ$ angle on the external face, the spirality $\sigma_{\rho}$ is restricted to take values in the range $[-(\ell-1),(\ell-2)]$, instead of $[-(\ell-1),(\ell-1)]$ ($\sigma_{\rho}=(\ell-1)$ corresponds to having a $90^\circ$ angle on the external face at all degree-2 vertices of~$\pi$). Hence, for each candidate value of spirality of $\rho$ in the interval $[-(\ell-1),(\ell-2)]$ we check in $O(1)$ time whether there is a value $\sigma_\nu \in \Sigma_{\nu,\rho}$ such that $\sigma_\nu - \sigma_\rho = 4$. In the positive case, we set the local label for the configuration to \textsf{true}, otherwise we set this label to \textsf{false}. This test takes $O(\ell)$ time for the given configuration. Since the sum of the length of all possible reference chains for $B'_j$ is $O(n_{B_j})$, the procedure takes $O(n_{B_j})$ over all configurations of the cutvertex-nodes incident to $B_j$. 
	    
	    \item [$(iii)$] Finally, suppose that there is an external flat-angle constraint on~$c$, which implies that $\deg(c|B'_j)=3$. 
	    We have to check whether $B'_j$ remains rectilinear planar when we choose as reference chain of the SPQ$^*$-tree one of the tree chains incident to~$c$, requiring that the external angle at $c$ is larger than $90^\circ$. 
	    Let $\pi$ be any of the three reference chains incident to~$c$ in~$B'_j$ and let $\rho$ be the Q$^*$-node corresponding to $\pi$. Note that, vertex $c$ is necessarily a pole of a P-node $\xi$ with two children. Also, denoted by $\nu$ the root child, we have that either $\xi$ coincide with $\nu$ or $\xi$ is a child of $\nu$. There are two cases. 
	    
	    \smallskip\noindent \textsf{Case 1: $\pi$ is not a single edge}, i.e., the length of $\pi$ is at least two. 
	    In this case we claim that $B'_j$ is rectilinear planar without any additional constraint on $c$ if and only if $B'_j$ is rectilinear planar with the external flat-angle constraint on $c$. Indeed, assume that $B'_j$ is rectilinear planar without constraints on $c$. As we proved in \cref{le:timeRoot} there exists an unconstrained rectilinear representation $H$ of $B'_j$ such that, denoted by $\sigma'_\nu$ the spirality of $H_{\nu,\rho}$ in $H$ and by $\sigma'_\rho$ the spirality of $H_{\rho,\rho}$ in $H$, we have $\sigma'_\rho \in \Sigma^+_{\rho,\rho}$, $\sigma'_\nu \in \Sigma^+_{\nu,\rho}$, and $\sigma'_\nu + \sigma'_\rho = 4$. Also, by \cref{le:intervalSupportSecond}, we can assume that the internal angle at $c$ in the parallel-component $H_{\xi,\rho}$ is a $90^\circ$ angle. Namely, let $u$ and $c$ be the two poles of $H_{\xi,\rho}$. By \cref{le:intervalSupportSecond}, we can exclude that both the internal angles at $u$ and $c$ in $H_{\xi,\rho}$ are flat angles and, if the internal angle at $c$ is a flat angle and the internal angle at $u$ is a right angle, we can transform the representation by exchanging the values of the internal angles at $u$ and $c$ without changing the spirality of $H_{\xi,\rho}$ (recall that both $u$ and $c$ have degree three, as $\xi$ is a P-node with two children and $B'_j$ is independent-parallel SP).    
	    \begin{figure}[tb]
	    	\centering
	    	\subfigure[]{\includegraphics[width=0.35\columnwidth,page=1]{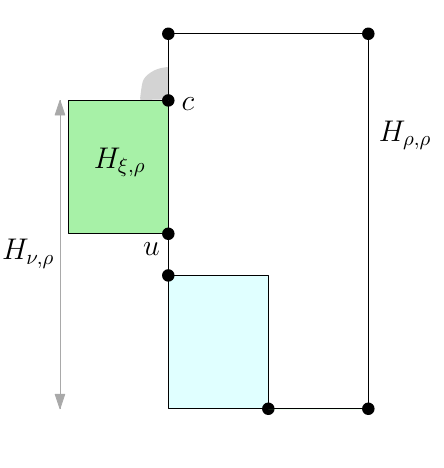}}
	    	\hfil
	    	\subfigure[]{\includegraphics[width=0.35\columnwidth,page=2]{pisp1connected.pdf}}
	    	\caption{Transformation that guarantees a flat external angle (showed in yellow) at $c$.}
	    	\label{fi:transformation}
	    \end{figure}
	    If the external angle at $c$ in $H$ is a flat angle then we are done. Otherwise, we can easily transform $H$ into another rectilinear representation of $B'_j$ with a flat angle at $c$ on the external face by simply increasing the spirality of $H_{\xi,\rho}$ (and therefore of $H_{\nu,\rho}$) by one unit and decreasing the spirality of $H_{\rho,\rho}$ by one unit (see \cref{fi:transformation}). This is always possible, as $\sigma'_{\rho} \geq 0$ and $\pi$ is not a single edge.
	    
	    \smallskip\noindent \textsf{Case 2: $\pi$ is a single edge}. Since $\sigma_\rho=0$ is the only spirality value admitted by $\rho$, we have to check whether $\nu$ admits spirality $\sigma_\nu=4$ when we impose that the external angle at $c$ is flat. With the notation used in the statement of \cref{le:spirality-P-node-2-children}, we require that $\alpha^l_c=0$. 
	    
	    \smallskip\noindent $-$ Suppose first that $\nu$ coincides with $\xi$ (i.e., $\xi$ is the root child). In this case, we can perform the test in $O(1)$ time by considering the $O(1)$ possible configurations of $\nu$ that satisfy the relationships of \cref{le:spirality-P-node-2-children}, and by checking whether there is at least one configuration such that $\sigma_\nu=4$ and $\alpha^l_c=0$.
	    
	    \smallskip\noindent $-$ Suppose vice versa that $\nu$ and $\xi$ do not coincide. In this case, $\nu$ is an S-node, which represents the series composition of the pertinent graph of $\xi$ with the union of all pertinent graphs of the siblings of $\xi$. For the sake of simplicity, denote by $\hat{\nu}$ a dummy S-node that represents the union of the pertinent graphs of the siblings of $\xi$. Also denote by $\Sigma^+_{\hat{\nu},\rho}$ the set of non-negative spirality values admitted by $\hat{\nu}$.
	    With the same approach as in the proof of \cref{le:timeS}, we can compute $\Sigma^+_{\hat{\nu},\rho}$ in $O(1)$ time by removing the contribution of $\Sigma^+_{\xi,\rho}$ from $\Sigma^+_{\nu,\rho}$. Let $\hat{M}$ be the maximum value of spirality in $\Sigma^+_{\hat{\nu},\rho}$. 	    

	    Assume first that $\hat{M}\le 4$. For each candidate spirality value $\sigma_{\hat{\nu}} \in [-4,4]$, we can check whether there exists a value $\sigma_{\xi} \in \Sigma_{\xi,\rho}$ such that $\sigma'_{\xi}+\sigma'_{\hat{\nu}}=4$ and $\alpha^l_c=0$. This can be done in $O(1)$ time through the relationships of \cref{le:spirality-P-node-2-children}.     
	    
	    
	    Assume now that $\hat{M}>4$. 
	    In this case we can restrict to test whether one of these three following configurations of spirality values $\sigma_{\xi}$ for $\xi$ and $\sigma_{\hat{\nu}}$ for $\hat{\nu}$ holds: $(a)$~$\sigma_{\xi}=1$ and $\sigma_{\hat{\nu}}=3$; $(b)$~$\sigma_{\xi}=0$ and $\sigma_{\hat{\nu}}=4$; $(c)$~$\sigma_{\xi}=2$ and $\sigma_{\hat{\nu}}=2$. Indeed, suppose that there exists a rectilinear representation of $G$ with $\alpha^l_c=0$ and with spirality values $\sigma'_\xi$ and $\sigma'_{\hat{\nu}}$ for $\xi$ and $\hat{\nu}$, respectively. If $\sigma'_\xi$ and $\sigma'_{\hat{\nu}}$ are odd, then, by \cref{le:intervalSupportFirst}, and since $\hat{M}>4$, then $\xi$ and $\hat{\nu}$ also admit two values $\sigma_{\xi}$ and $\sigma_{\hat{\nu}}$ that satisfy 
	    Configuration~$(a)$. If $\sigma'_\xi$ and $\sigma'_{\hat{\nu}}$ are both even and if $\nu$ admits spirality 0, then $\xi$ and $\hat{\nu}$ also admit two values $\sigma_{\xi}$ and $\sigma_{\hat{\nu}}$ that satisfy 
	    Configuration~$(b)$. Otherwise, $\sigma'_\xi$ and $\sigma'_{\hat{\nu}}$ are both even and   $\Sigma_{\xi,\rho}^+=[1,2]^1$; this means that we are already in Configuration~(c) with $\sigma_{\xi}=\sigma'_{\xi}=2$ and $\sigma_{\hat{\nu}}=\sigma'_{\hat{\nu}}=2$.
	    %
       %
	   Since we can test in constant time whether one of the three possible configurations $(a)$, $(b)$, and $(c)$ holds, also the case $\hat{M}>4$ is handled in $O(1)$ time.
    \end{itemize}
\end{itemize}

If the test is positive, the construction algorithm is exactly the same as in \cref{th:rpt-general-partial-2-trees}, which takes $O(n)$ time.
\end{proof}

\section{Final Remarks and Open Problems}\label{se:conclusions}
We proved that rectilinear planarity can be tested in $O(n^2)$ time for general partial 2-trees and in $O(n)$ time for independent-parallel SP-graphs. 
Establishing a tight bound on the complexity of rectilinear planarity testing algorithm for partial 2-trees remains an open problem. A pitfall to achieve $O(n)$-time complexity in the general case is that, in contrast with the independent-parallel SP-graphs, the spirality set of a component may not exhibit a regular behavior. See for example~\cref{fi:irregular,fi:irregular-1}.  
%

\begin{figure}[!h]
\centering
\includegraphics[width=0.63\columnwidth,page=3]{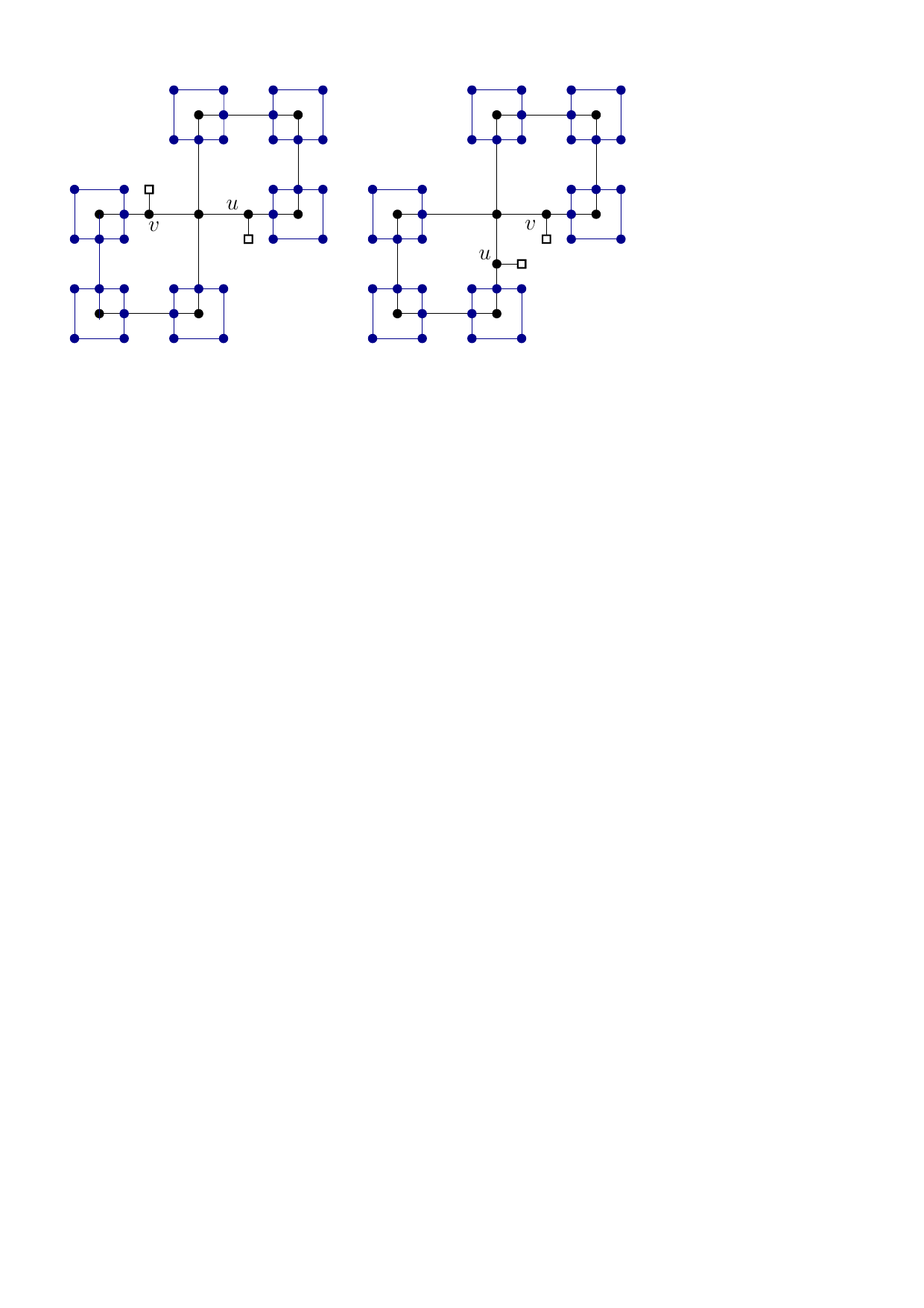}
\caption{Component that admits non-negative spiralities 0,1,3,4,5. Spirality 2 needs a bend~($\times$).}
\label{fi:irregular}
\end{figure}

\begin{figure}[!h]
\centering
\includegraphics[width=0.53\columnwidth,page=2]{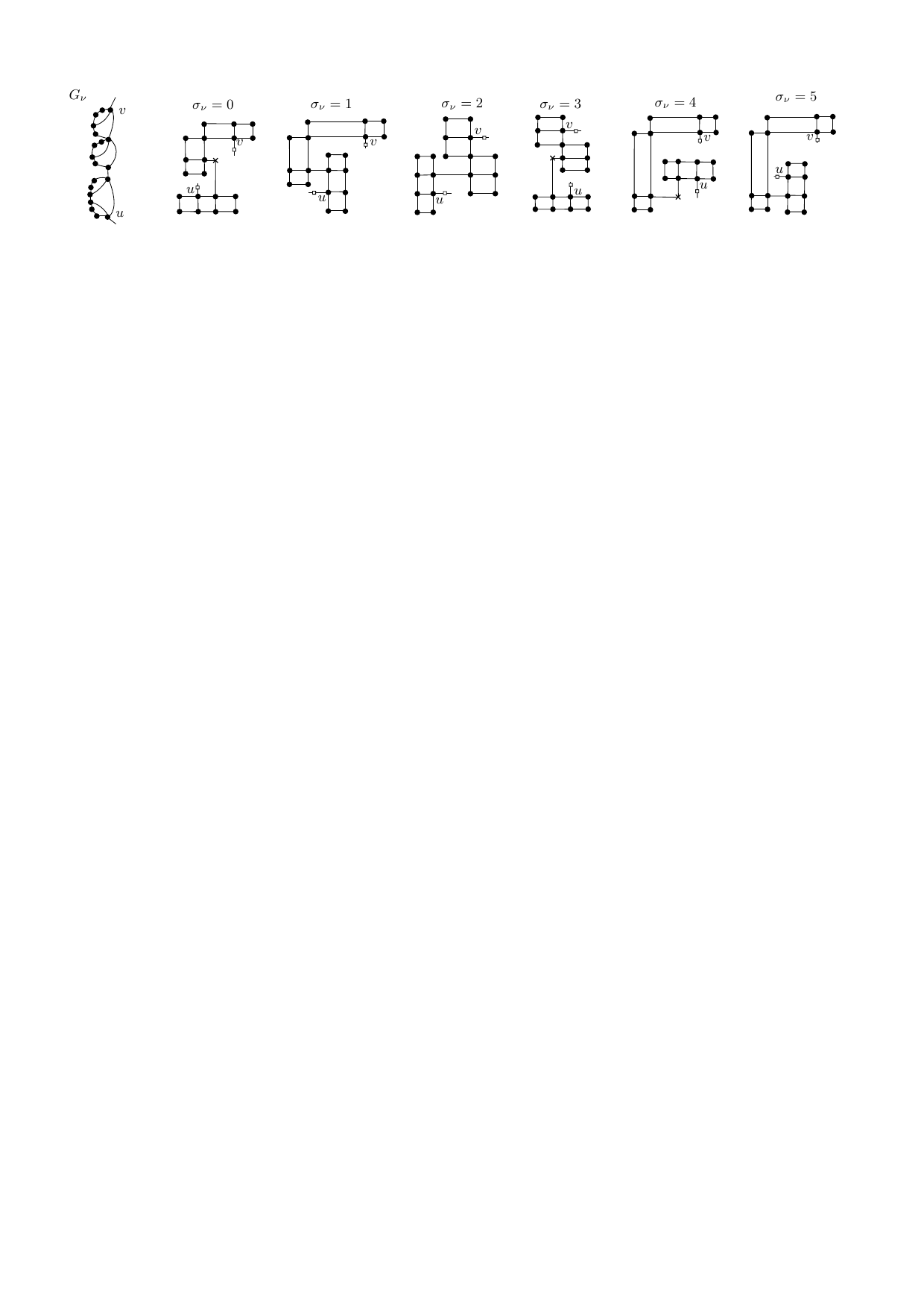}
\caption{Component that admits non-negative spiralities 1,2,5; other values require a bend~($\times$).}
\label{fi:irregular-1}
\end{figure}

%
%
%

%

\clearpage

\bibliographystyle{abbrvurl}
\bibliography{bibliography}

\end{document}